\begin{document} 

\title{Group Frames with Few Distinct Inner Products and Low Coherence} 

\author{Matthew Thill \hspace{12pt} Babak Hassibi \\ 
Department of Electrical Engineering, Caltech, Pasadena, CA  
 }

\maketitle 

\newtheorem{definition}{Definition}
\newtheorem{theorem}{Theorem}
\newtheorem{lemma}{Lemma}
\newtheorem{corollary}{Corollary}

\newcommand{\stexp}{\mbox{$\mathbb{E}$}}    
\newcommand{\beq}{\begin{equation}}
\newcommand{\eeq}{\end{equation}}
\newcommand{\bea}{\begin{eqnarray}}
\newcommand{\eea}{\end{eqnarray}}
\def\G{{\mathcal{G}}}
\newcommand{\Prob}{\ensuremath{\mathbb{P}}}

\newcommand{\matr}[1]{\mathbf{#1}}
\newcommand{\vect}[1]{\mathbf{#1}}

\newcommand{\ve}{\vect{e}}
\newcommand{\hve}{\vect{\hat e}}
\newcommand{\vnu}{\vect{\nu}}

\newcommand{\norm}[2]{\lVert #1 \rVert_{#2}}
\newcommand{\setweightR}[2]{\Sigma_{\R^{#1}}^{(#2)}} 

\newcommand{\eg}{\emph{e.g.}}
\newcommand{\ie}{\emph{i.e.}}
\newcommand{\etal}{\emph{et al.}}
\newcommand{\confer}{\emph{cf.}}

\newcommand{\Q}{\mathbb{Q}} 
\newcommand{\R}{\mathbb{R}} 
\newcommand{\Z}{\mathbb{Z}} 
\newcommand{\C}{\mathbb{C}} 
\newcommand{\vv}{\vect{v}} 
\newcommand{\U}{\matr{U}} 
\newcommand{\M}{\matr{M}} 
\newcommand{\x}{\vect{x}}
\newcommand{\y}{\vect{y}} 
\newcommand{\cc}{\vect{c}} 
\newcommand{\va}{\vect{a}} 
\newcommand{\w}{\vect{w}} 
\newcommand{\Tr}[1]{\text{Tr}(#1)} 
\newcommand{\diag}{\operatorname{diag}}

\begin{abstract} 
Frame theory has been a popular subject in the design of structured signals and codes in recent years, with applications ranging from the design of measurement matrices in compressive sensing, to spherical codes for data compression and data transmission, to spacetime codes for MIMO communications, and to measurement operators in quantum sensing. High-performance codes usually arise from designing frames whose elements have mutually low coherence. Building off the original ``group frame'' design of Slepian which has since been elaborated in the works of Vale and Waldron, we present several new frame constructions based on cyclic and generalized dihedral groups. Slepian's original construction was based on the premise that group structure allows one to reduce the number of distinct inner pairwise inner products in a frame with $n$ elements from $\frac{n(n-1)}{2}$ to $n-1$. All of our constructions further utilize the group structure to produce tight frames with even fewer distinct inner product values between the frame elements. When $n$ is prime, for example, we use cyclic groups to construct $m$-dimensional frame vectors with at most $\frac{n-1}{m}$ distinct inner products. We use this behavior to bound the coherence of our frames via arguments based on the frame potential, and derive even tighter bounds from combinatorial and algebraic arguments using the group structure alone. In certain cases, we recover well-known Welch bound achieving frames. In cases where the Welch bound has not been achieved, and is not known to be achievable, we obtain frames with close to Welch bound performance. 
\end{abstract} 

\begin{keywords}
Coherence, frame, unit norm tight frame, group representation, group frame, Welch bound, spherical codes, compressive sensing. 
\end{keywords}

\let\thefootnote\relax\footnotetext{Copyright (c) 2015 IEEE. Personal use of this material is permitted. However, permission to use this material for any other purposes must be obtained from the IEEE by sending a request to pubs-permissions@ieee.org.

Email: \{mthill,hassibi\}@caltech.edu. This work was supported in part by the National Science Foundation under grants CNS-0932428, CCF-1018927, CCF-1423663 and CCF-1409204, by a grant from Qualcomm Inc., by NASAÕs Jet Propulsion Laboratory through the President and DirectorÕs Fund, by King Abdulaziz University, and by King Abdullah University of Science and Technology.}

\section{Introduction: Frames and Coherence} 
\label{sec:matcohframe}

Recall that a frame is the following generalization for the basis of a vector space: 

\begin{definition} Let $\mathcal{V}$ be a vector space equipped with an inner product $\langle \cdot, \cdot \rangle$ (or more specifically, a separable Hilbert space). A set of elements $\{f_k\}_{k \in \mathcal{I}}$, where $\mathcal{I}$ is a countable index set, is a \textit{frame} for $\mathcal{V}$ if there exist positive constants $A$ and $B$ such that 
\begin{align} 
A || f ||_2^2 \le \sum_{k \in \mathcal{I}} | \langle f, f_k \rangle |^2 \le B || f ||_2^2, \label{eqn:framedef} 
\end{align} 
for all $f \in \mathcal{V}$. A frame is called \textit{tight} if $A = B$ in this definition, and \textit{unit norm} if $|| f_k ||_2 = 1, \forall k \in \mathcal{I}$. 

\end{definition} 

We define the \textit{coherence} $\mu$ of the frame to be the maximum correlation between any two distinct columns: $$\mu = \max_{i \ne j} \frac{|\langle f_i, f_j \rangle|}{||f_i||_2 \cdot ||f_j||_2}. $$

Designing frames with low coherence is a problem that has connections to a wide range of fields, including compressive sensing \cite{Donoho, Huo, Candes, Romberg, Tao, Tropp07}, spherical codes \cite{Delsarte, Heath}, LDPC codes \cite{Fan}, MIMO communications \cite{Pailraj, Bolcskei}, quantum measurements \cite{Eldar, Scott, RenesBlumeKohoutScottCaves}, etc. Frame theory has also made its mark as an interesting field in its own right, with a great collection of recent work by Casazza, Kutyniok, Fickus, Mixon, and many others \cite{CasazzaKutyniokFickusMixon, FickusMixonTremain, CasazzaArt, CasazzaKutyniok2, CasazzaKovacevic, CasazzaKutyniokLi}. 

Most often we will consider our frame vectors to be the columns $\{\vect{m}_i\}_{i = 1}^n$ of a matrix $\M = [\vect{m}_1, \vect{m}_2, \hdots, \vect{m}_n ]\in \C^{m \times n}$. We will speak of the \textit{coherence of $\M$} to be the coherence of the frame $\{\vect{m}_i\}$. The frame is tight if and only if $\M \M^* = \lambda \matr{I}_{m}$ where $\matr{I}_m$ is the $m \times m$ identity matrix and $\lambda$ is a scalar. Furthermore, $\lambda = A = B$ in (\ref{eqn:framedef}). If the $\{\vect{m}_i\}$ form a unit norm tight frame, then $\lambda = \frac{n}{m}$.

It is easy to see that any orthonormal basis for a vector space is a tight frame, and consequently a frame can be regarded as a generalization of an orthonormal set which may include more vectors than the dimension of the space. One important example of a tight frame that we will encounter is when the rows of $\M$ are a subset of the rows of the $n \times n$ discrete Fourier matrix: 
\begin{definition} 
Let $\omega = e^{\frac{2 \pi i}{n}}$, and let $\mathcal{F}$ be the discrete Fourier matrix, whose $(i, j)^{th}$ entry is $\omega^{ij}$. Let $\M = [\vect{m}_1, ..., \vect{m}_n] \in \C^{m \times n}$ such that the \textit{rows} of $\M$ are a subset of the rows of $\lambda \mathcal{F}$ for some scalar $\lambda \in \C$. Then $\{\vect{m}_i\}_{i = 1}^n$ is called a $\textit{harmonic frame}$. 
\end{definition} 
\textit{Remark:} The notion of a harmonic frame is actually more general than in this definition, as is explained in \cite{CasazzaKutyniok}, but the more general harmonic frames are also tight with equal norm elements. For our purposes, the above definition will suffice. We will touch on generalized harmonic frames in Section \ref{sec:abelianharmonicframes}, but there is a substantial collection of results on harmonic frames in \cite{ChienWaldron} and \cite{HayWaldron}.

Of great interest is when a tight frame is \textit{equiangular}: 
\vspace{.1in}
\begin{definition} 
A unit-norm frame $\{f_k\}_{k \in \mathcal{I}}$ is said to be equiangular if there is some constant $\alpha$ such that for any $i \ne j$, $|\langle f_i, f_j \rangle| = \alpha$. 

\end{definition}

The following theorem, known as the \textit{Welch bound} and based on the results of \cite{Welch}, provides a lower bound on the coherence of a frame: 

\begin{theorem} 
\label{thm:framebound} 
Let $\mathbb{E}$ be the field of real or complex numbers, and $\{f_k\}_{k = 1}^n$ be a unit-norm frame for $\mathbb{E}^m$. Then 
\begin{equation} 
\max_{i \ne j} | \langle f_i, f_j \rangle | \ge \sqrt{\frac{n - m}{m (n-1)}}, 
\end{equation} with equality if and only if $\{f_k\}_{k = 1}^n$ is tight and equiangular. 

\end{theorem}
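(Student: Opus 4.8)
The plan is to obtain the bound from a second‑moment (frame potential) computation on the frame operator, and then to extract the equality conditions from the two inequalities that enter the estimate. Throughout we work over $\mathbb{E} \in \{\R, \C\}$; the only structural fact we need is that a Hermitian (respectively symmetric) positive semidefinite operator has an orthonormal eigenbasis with nonnegative eigenvalues, which holds in both cases.

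First I would collect the frame vectors as the columns of $\M \in \mathbb{E}^{m \times n}$ and consider the $m \times m$ positive semidefinite frame operator $S = \M \M^* = \sum_{k=1}^n f_k f_k^*$. Since the frame is unit norm, $\Tr{S} = \sum_k \|f_k\|_2^2 = n$. Next I would compute the frame potential $\Tr{S^2} = \Tr{\M \M^* \M \M^*} = \sum_{i,j} |\langle f_i, f_j\rangle|^2$ (using cyclicity of the trace to pass to the Gram matrix $G = \M^*\M$, whose squared Frobenius norm is exactly this double sum), and split off the $i = j$ terms, which contribute $\sum_i \|f_i\|_2^4 = n$. The key inequality is Cauchy--Schwarz applied to the eigenvalues $\lambda_1,\dots,\lambda_m \ge 0$ of $S$: $\Tr{S^2} = \sum_i \lambda_i^2 \ge \frac{1}{m}\left(\sum_i \lambda_i\right)^2 = \frac{(\Tr{S})^2}{m} = \frac{n^2}{m}$. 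Combining this with the diagonal/off‑diagonal split gives $\sum_{i \ne j} |\langle f_i, f_j\rangle|^2 \ge \frac{n^2}{m} - n = \frac{n(n-m)}{m}$. Since there are $n(n-1)$ ordered pairs with $i \ne j$, the maximum term is at least the average, so $\max_{i\ne j} |\langle f_i, f_j\rangle|^2 \ge \frac{1}{n(n-1)}\cdot\frac{n(n-m)}{m} = \frac{n-m}{m(n-1)}$, and taking square roots yields the claimed bound.

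For the equality characterization I would trace back through the two inequalities used. Equality in the eigenvalue Cauchy--Schwarz step holds precisely when $\lambda_1 = \cdots = \lambda_m$, i.e. $S = \frac{n}{m}\matr{I}_m$, which is exactly the statement that $\{f_k\}$ is tight (with frame constant $n/m$, consistent with the remark in the text). Equality in the ``max $\ge$ average'' step holds precisely when $|\langle f_i, f_j\rangle|$ is the same for every pair $i \ne j$, i.e. the frame is equiangular. Hence equality in the Welch bound forces the frame to be both tight and equiangular; conversely, if the frame is tight and equiangular, the potential identity $n + n(n-1)\alpha^2 = \Tr{S^2} = \frac{n^2}{m}$ forces the common value $\alpha$ to equal $\sqrt{(n-m)/(m(n-1))}$, so both inequalities are in fact equalities and the bound is attained.

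I expect the only real subtlety to be the equality analysis rather than the inequality itself: one must be careful to argue both directions (tightness comes from the spectral step, equiangularity from the averaging step, and neither alone suffices), and to note that the common inner‑product modulus is then pinned down by the trace identity so that no separate existence or normalization argument is needed. A minor point worth stating explicitly is that the argument is insensitive to the choice of $\mathbb{E}$, since the spectral decomposition of the PSD operator $S$ and the identity $\Tr{G^2} = \sum_{i,j}|\langle f_i,f_j\rangle|^2$ hold verbatim over $\R$ and over $\C$.
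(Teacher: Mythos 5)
Your proof is correct, and it is essentially the argument the paper itself points to: the paper does not write out a proof but cites the classical eigenvalue-of-the-Gram-matrix argument from Strohmer--Heath, which is exactly your computation (the frame operator $S = \M\M^*$ and the Gram matrix $\M^*\M$ share nonzero eigenvalues, so the two formulations coincide). Your equality analysis via the two inequalities (Cauchy--Schwarz on the spectrum giving tightness, max-versus-average giving equiangularity) is the standard and complete way to recover the ``if and only if'' clause, and it matches the frame-potential bound the paper later invokes from Benedetto--Fickus in its Lemma~\ref{lem:tightframecohbound}.
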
 

\begin{proof} 
This theorem is a classical result, and one of an even broader set of bounds \cite{Welch}. A quick proof which can be found in \cite{Heath} involves considering the eigenvalues of the Gram matrix $\matr{G}$ defined by $\matr{G}_{ij} = \langle f_i, f_j \rangle$. 
\end{proof}

\vspace{.1in} 
Thus, we would like to identify tight, equiangular frames for use in constructing matrices which achieve this lower bound. This problem arises in various contexts, for example line packing problems \cite{ConwayHarding}. It should be emphasized that such frames do not exist for all values of $m$ and $n$, so in general, we would also like to find ways to optimize the coherence by choosing $\M$ wisely from a cleverly designed class of matrices. Our approach will be to use the group frame construction proposed by Slepian \cite{Slepian} in the 1960s. Group frames have received a great deal of attention in recent years, notably in the substantial collection of work by Vale, Waldron, and others \cite{Vale, ValeWaldron2, ValeWaldron3, Waldron1, ChienWaldron, HayWaldron}. For an excellent review of the work in group frames, see \cite{CasazzaKutyniok}.

On one final note before proceeding, a common approach to produce a set of vectors with low correlation is to construct a set of Mutually Unbiased Bases (MUBs). Two bases $\{e_1, ..., e_m \}$ and $\{e'_1, ..., e'_m\}$ for $\C^d$ are mutually unbiased if each is orthonormal, and $| \langle e_i, e'_j \rangle | = \frac{1}{\sqrt{m}}$ for any $i$ and $j$. Algebraic constructions of up to $m+1$ MUBs are known in prime-power dimensions $m$, allowing for a number of vectors at most $m^2 + m$ \cite{Klappenecher, Bandyopadhyay, Wootters}. The frame constructions presented in this paper will at times outperform this coherence, though typically with a smaller number of vectors. More importantly, though, our frames do not require $m$ to be prime.

\section{Reducing  the Number of Distinct Inner Products in Tight Frames} 

In practice, constructing frames which are both tight and equiangular can prove difficult. It turns out, however, that we can expect reasonably low coherence from tight frames if we just require that the inner products between frame elements take on few distinct values, provided that each of these values arises the same number of times.

The following lemma, which is in some sense a generalization of the Welch bound, provides a bound on the coherence of a tight frame: 

\begin{lemma} 
Let $\{f_i\}_{i = 1}^{n} \subset \C^m$ be a unit-norm tight frame such that the absolute values of the inner products, $|\langle f_i, f_j \rangle|_{i \ne j}$, take on $r$ distinct values, each occurring the same number of times. Then the coherence $\mu$ of $\{f_i\}$ is at most a factor of $\sqrt{r}$ greater than the Welch bound. That is, 
\begin{align} 
\mu & \le \sqrt{r} \sqrt{\frac{n-m}{m(n-1)}}. 
\end{align} 
\label{lem:tightframecohbound} 
\end{lemma}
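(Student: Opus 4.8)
The plan is to use the standard frame potential / trace argument underlying the Welch bound, but to track the multiplicities of the distinct inner product values rather than bounding them all by the coherence at once. First I would set up the Gram matrix $\matr{G}$ with $\matr{G}_{ij} = \langle f_i, f_j \rangle$. Since the $\{f_i\}$ form a unit-norm tight frame, we have $\matr{G}^2 = \frac{n}{m}\matr{G}$ (equivalently, $\matr{G}$ is $\frac{n}{m}$ times a rank-$m$ orthogonal projection), so $\Tr{\matr{G}^2} = \frac{n}{m}\Tr{\matr{G}} = \frac{n^2}{m}$. On the other hand, $\Tr{\matr{G}^2} = \sum_{i,j} |\langle f_i, f_j\rangle|^2 = n + \sum_{i \ne j} |\langle f_i, f_j\rangle|^2$, because the diagonal entries are all $1$. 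Combining these gives the exact identity $\sum_{i \ne j} |\langle f_i, f_j\rangle|^2 = \frac{n^2}{m} - n = \frac{n(n-m)}{m}$.

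Next I would exploit the hypothesis that the off-diagonal absolute values take $r$ distinct values $\alpha_1, \dots, \alpha_r$, each occurring the same number of times. There are $n(n-1)$ ordered off-diagonal pairs, so each value $\alpha_\ell$ occurs exactly $\frac{n(n-1)}{r}$ times. Substituting into the identity above yields $\frac{n(n-1)}{r}\sum_{\ell=1}^r \alpha_\ell^2 = \frac{n(n-m)}{m}$, hence $\frac{1}{r}\sum_{\ell=1}^r \alpha_\ell^2 = \frac{n-m}{m(n-1)}$, i.e.\ the \emph{average} of the squared inner product values is exactly the square of the Welch bound. Since the maximum $\mu^2 = \max_\ell \alpha_\ell^2$ of a list of $r$ nonnegative numbers is at most $r$ times their average, we get $\mu^2 \le r \cdot \frac{n-m}{m(n-1)}$, and taking square roots gives the claimed inequality.

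The argument is essentially a two-line computation once the trace identity is in place, so there is no serious obstacle; the only point requiring a little care is the bookkeeping of multiplicities — specifically confirming that ``each value occurring the same number of times'' among the $n(n-1)$ ordered pairs (or equivalently $\binom{n}{2}$ unordered pairs, since $|\langle f_i,f_j\rangle| = |\langle f_j,f_i\rangle|$) forces each multiplicity to be $\frac{n(n-1)}{r}$, which is what lets us pull the common factor out of the sum. It is also worth noting, as a sanity check on the statement, that when $r = 1$ this recovers exactly the equality case of the Welch bound (a tight equiangular frame), consistent with Theorem~\ref{thm:framebound}.
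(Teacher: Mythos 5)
Your proposal is correct and follows essentially the same argument as the paper: both reduce the problem to the fact that the average of the $r$ squared inner-product values equals the squared Welch bound (via the frame potential identity $\sum_{i,j}|\langle f_i,f_j\rangle|^2 = n^2/m$ for a unit-norm tight frame), and then bound the maximum of $r$ nonnegative numbers by $r$ times their average. The only cosmetic difference is that you derive the frame potential identity directly from $\Tr{\matr{G}^2}$, whereas the paper cites Theorem 6.2 of Benedetto--Fickus for it.
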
 

\begin{proof} 
As a preliminary fact, Theorem 6.2 of \cite{BenedettoFickus} shows that the \textit{frame potential} $\sum_{i, j} | \langle f_i, f_j \rangle |^2$ is at least $\frac{n^2}{m}$ with equality if and only if the frame is tight. Let $\alpha_1, ..., \alpha_r$ be the distinct squared absolute values of the inner products, $\{|\langle f_i, f_j \rangle|^2\}_{i \ne j}$. Since each of the $\alpha_i$ occurs the same number of times as a squared inner product norm, we have that their arithmetic mean is equal to that of the $\{|\langle f_i, f_j \rangle|^2\}_{i \ne j}$, which is 
\begin{align} 
\frac{1}{r} \sum_{i = 1}^r \alpha_i & = \frac{1}{n(n-1)} \sum_{i \ne j} |\langle f_i, f_j \rangle |^2 = \frac{n-m}{m(n-1)},   
\end{align} 
 where the second equality follows from the preliminary frame potential result and the fact $\{f_i\}_{i = 1}^n$ is tight and unit-norm by assumption. 
 
Thus, since all the $\alpha_i$ are nonnegative we see that 
\begin{align} 
\mu^2 = \max_i \alpha_i \le \sum_{i = 1}^r \alpha_i = r \cdot \frac{n-m}{m(n-1)}, 
\end{align} 
from which the result follows. 
\end{proof} 

In light of Lemma \ref{lem:tightframecohbound}, our goal will be to construct a tight frame whose elements have very few inner product values between them, each of which occurs with the same multiplicity. In the following sections, we will present a group theoretic way to do this.

\section{Frames from Unitary Group Representations: Slepian Group Codes} 
\label{sec:unitmat} 

In \cite{Slepian}, Slepian proposed a method to construct low-coherence matrices by reasoning that the key to controlling the inner products between the columns was to reduce the number of distinct inner product values which arise. His construction, which has come to be known as a \textit{group frame}, has since been generalized (see, for example \cite{Vale} and \cite{CasazzaKutyniok}). On this note, let $\mathcal{U} = \{\U_1, \U_2, ..., \U_n\}$ be a (multiplicative) group of unitary matrices. We can equivalently view $\mathcal{U}$ as the image of a faithful, unitary representation of a group $\mathcal{G}$. In some works, e.g. \cite{Gabardo}, $\mathcal{U}$ is taken to be a group-like unitary operator system---the image of a projective representation---but normal representations will suffice for our purposes. Such representations exist for any finite group. 

Suppose that for each $i$, we have $\U_i \in \C^{m \times m}$ (or equivalently, $\mathcal{U}$ is the image of an $m$-dimensional representation). Let $\vv = [v_1, ..., v_m]^T \in \C^{m \times 1}$ be any vector, and let $\M$ be the matrix whose $i^{th}$ column is $\U_i \vv$: $\matr{M} = [\U_1 \vv, ..., \U_n \vv]$. The inner product between the $i^{th}$ and $j^{th}$ columns of $\M$ is $\langle \U_i \vv, \U_j \vv \rangle = \vv^* \U_i^* \U_j \vv$. Since $\mathcal{U}$ is a unitary group, we have $\U_i^* \U_j = \U_i^{-1} \U_j = \U_k$, for some $k \in \{1, ..., n\}$, so we can write $\langle \U_i \vv, \U_j \vv \rangle = \vv^* \U_k \vv$. In this manner, we have reduced the total number of pairwise inner products between the columns of $\M$ from ${n \choose 2}$ to $n-1$, the inner products parametrized by the non-identity elements of $\mathcal{U}$. Furthermore, we have the following: 

\begin{lemma} 
Let $\{\U_1, ..., \U_n\} \subset \C^{m \times m}$ be a set of distinct unitary matrices which form a group under multiplication, and let $\vv \in \C^{m \times 1}$ be a nonzero vector. Each of the values $\vv^* \U_k \vv$ occurs as the inner product between two columns of $\matr{M} = [\U_1 \vv, ..., \U_n \vv]$ the same number of times. 
\label{lem:samemultiplicityinnerprods}
\end{lemma}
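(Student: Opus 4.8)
The plan is to show that for each fixed non-identity element $\U_k \in \mathcal{U}$, the number of ordered pairs $(i,j)$ with $i \ne j$ for which $\langle \U_i \vv, \U_j \vv \rangle = \vv^* \U_k \vv$ arises from the equation $\U_i^{-1}\U_j = \U_k$ is exactly $n$, independent of $k$. First I would make precise the correspondence already sketched in the text: the inner product between columns $i$ and $j$ of $\M$ equals $\vv^* \U_\ell \vv$ where $\U_\ell = \U_i^{-1}\U_j$. So counting how many column pairs realize the ``$k$-th'' inner product value amounts to counting solutions $(i,j)$ to $\U_i^{-1}\U_j = \U_k$ in the group $\mathcal{U}$.

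The key step is the standard group-theoretic observation that, in any group, the equation $g^{-1}h = k$ has exactly $|\mathcal{U}| = n$ solutions $(g,h)$: for each of the $n$ choices of $g$, the element $h = gk$ is uniquely determined, and conversely every solution arises this way. Translating back, for each fixed $k \in \{1,\dots,n\}$ there are exactly $n$ ordered pairs $(i,j)$ with $\U_i^{-1}\U_j = \U_k$. When $\U_k \ne \matr{I}$, none of these pairs has $i = j$ (since $\U_i^{-1}\U_j = \matr{I}$ forces $\U_i = \U_j$, hence $i = j$ as the matrices are distinct), so all $n$ pairs are genuine distinct-column pairs. Hence each non-identity value $\vv^*\U_k\vv$ occurs exactly $n$ times among the ordered pairs of distinct columns — the same count for every $k$ — which is the claim. (If one prefers unordered pairs, one simply notes that $\U_i^{-1}\U_j = \U_k$ and $\U_j^{-1}\U_i = \U_k^{-1}$ pair up, so the counts are still uniform, equal to $n$ when $\U_k = \U_k^{-1}$ and combining in pairs otherwise; stating the result for ordered pairs, or grouping the value $\vv^*\U_k\vv$ together with its conjugate $\vv^*\U_k^{-1}\vv$, avoids this bookkeeping.)

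I do not anticipate a serious obstacle here: the entire content is the bijection $g \mapsto (g, gk)$ on the group, and the only subtlety is cosmetic — deciding whether to count ordered or unordered column pairs, and being careful that distinct group elements give distinct matrices so that ``$\U_i = \U_j \iff i = j$'' holds (this is exactly the hypothesis that the $\U_i$ are distinct, equivalently that the representation is faithful). The mild point worth spelling out is that the statement concerns the multiplicity of each \emph{value} $\vv^*\U_k\vv$, and a priori two different group elements $\U_k \ne \U_{k'}$ could yield the same scalar $\vv^*\U_k\vv = \vv^*\U_{k'}\vv$; this only \emph{increases} multiplicities by combining equal-size groups of $n$, so the conclusion that every attained value occurs a multiple of $n$ times — and in the generic case exactly the stated uniform number — is unaffected. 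I would close by remarking that this uniform-multiplicity property is precisely what is needed to invoke Lemma \ref{lem:tightframecohbound} once tightness of $\M$ is also established.
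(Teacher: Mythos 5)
Your proof is correct and follows exactly the paper's own argument: for each fixed $\U_k$, the map $\U_i \mapsto (\U_i, \U_i\U_k)$ gives precisely $n$ ordered pairs realizing the value $\vv^*\U_k\vv$, uniformly in $k$. The extra remarks on ordered versus unordered pairs and on coincidences among the scalar values are fine but not needed for the statement as the paper uses it.
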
 
\begin{proof} 
For every choice of $\matr{U}_k$ and $\matr{U}_i$, there is a unique $\matr{U}_j$ such that $\matr{U}_i^{-1} \matr{U}_j = \U_k$. Thus, for each $\U_k$, there are $n$ pairs $(\U_i, \U_j)$ such that $\vv^* \U_i^* \U_j \vv = \vv^*\U_k \vv$. 
\end{proof}

\section{Abelian Groups and Harmonic Frames} 
\label{sec:abelianharmonicframes} 

For now, we will restrict ourselves to consider representations of abelian groups. Abelian groups are the simplest groups, in a sense, and have the special property that each of their irreducible representations is one-dimensional. Therefore, if $\mathcal{U}$ is the image of a representation of an abelian group, then all of the elements $\U_i$ can be simultaneously diagonalized by a change of basis matrix. Thus, we may assume without loss of generality that the $\U_i$ are \textit{diagonal} unitary matrices whose diagonal entries are powers of $\omega = e^{\frac{2 \pi i}{n}}$: 
\begin{align} 
\U_j = \diag(\omega^{k_{1, j}}, ..., \omega^{k_{m, j}}) \in \C^{m \times m}, \label{eqn:diagU} 
\end{align}   
where the $k_{i, j}$ are integers.

With each $\U_j$ in this form, the inner products between the normalized columns of $\M$ will take the form $$\frac{|\vv^* \U_j \vv|}{|| \vv ||_2^2} = \left| \sum_{i = 1}^m  \frac{\left|v_i\right|^2}{||\vv||_2^2} \omega^{k_{i, j}}\right|, $$ where $\vv = [v_1, ..., v_m]^T$. So we see that the entries of $\vv$ simply weight the diagonal entries of $\U_j$ in the above sum. In particular, without loss of generality, we may take the entries of $\vv$ to be real. Furthermore, it turns out that in order for our abelian group frame to be tight, all the entries $v_i$ must be of equal norm. This follows from Theorem 5.4 in \cite{CasazzaKutyniok}. On this note, we will consider the case where $\vv$ is the vector of all 1's, 
\begin{align} 
\vv = \mathbf{1}_m = [1, ..., 1]^T \in \C^{m \times 1}, \label{eqn:all1v}
\end{align} 
so that the above inner product norm becomes simply 
\begin{equation} \label{eqn:abelianinnerprod} \frac{|\vv^* \U_j \vv|}{|| \vv ||_2^2} = \frac{1}{m} \left|\sum_{i = 1}^m  \omega^{k_{i, j}}\right|. \end{equation}  

Notice that from Equation (\ref{eqn:abelianinnerprod}), we can see that the coherence of our final matrix would remain unchanged if we chose $\omega$ to be any other primitive $n^{th}$ root of unity. 

Let us examine the simple case where $\mathcal{U}$ is a cyclic unitary group, the most basic abelian group. That is, the elements of $\mathcal{U}$ can be written as the powers of a single unitary matrix $\U$ of order $n$: $$\mathcal{U} = \{\U, \U^2, ..., \U^{n-1}, \U^n = \matr{I_m}\}, $$ where $\matr{I_m}$ is the $m \times m$ identity matrix. We consider again choosing $\vv$ to be the vector of all ones as in (\ref{eqn:all1v}), and form the frame matrix $\M = [\vv, \U \vv, ..., \U^{n-1} \vv]$. Note that if we express our group elements diagonally as in (\ref{eqn:diagU}), 
\begin{align} 
\U = \diag(\omega^{k_{1}}, ..., \omega^{k_{m}}) \in \C^{m \times m}, \label{eqn:cyclicU} 
\end{align} 
we can see that the matrix $\M$ will take the form 
\begin{align} 
\M &= \begin{bmatrix} \vv & \U \vv & \hdots & \U^{n-1} \vv \end{bmatrix}  \\ 
& =  \begin{bmatrix} 1 &  \omega^{k_{1}} & \omega^{k_{1}\cdot 2} & \hdots & \omega^{k_{1}\cdot (n-1)}\\ 
						1 & \omega^{k_{2}} & \omega^{k_{2} \cdot 2} & \hdots & \omega^{k_{2}\cdot (n-1)}\\ 
						\vdots & \vdots & \vdots & \ddots & \vdots \\ 
						1& \omega^{k_{m}} & \omega^{k_{m} \cdot 2} & \hdots & \omega^{k_{m} \cdot (n-1)} \end{bmatrix}.   
\label{eqn:cyclicframemat} 
\end{align} 
If the $k_i$ are distinct this is a subset of rows of the discrete Fourier matrix, hence a harmonic frame.

For cyclic groups, the inner product between the columns $\U^{\ell_1} \vv$ and $\U^{\ell_2} \vv$, after normalizing the columns, will take the form $\frac{|\vv^* \U^{\ell_2 - \ell_1} \vv|}{||\vv||_2^2}$, which is the value of the inner product determined by $\U^{\ell_2 - \ell_1}$ in (\ref{eqn:abelianinnerprod}).

A general abelian group $G$ can be represented as follows: First express $G$ as a direct product of, say, $L$ cyclic groups of orders $n_1, ..., n_L$, so that $G \cong \frac{\Z}{n_1 \Z} \times ... \times \frac{\Z}{n_L \Z}$. Then let $\omega_1, ..., \omega_L$ be the corresponding primitive roots of unity: $\omega_j = e^{2 \pi i / n_j}$. Then we set $\U_j = \diag(\omega_j^{k_{1 j}}, ..., \omega_j^{k_{m j}})$, where we will assume that the $k_{ij}$ are distinct integers modulo $n_j$. The abelian group generated by the diagonal matrices $\{\U_1, ..., \U_L\}$ is isomorphic to $G$, and an arbitrary element will take the form $\U_1^{a_1} \U_2^{a_2} \hdots \U_L^{a_L}$, where $a_j \in \{0, ..., n_j - 1\}$. Our frame matrix $\M$ will then take the form $\M = [\hdots \left( \U_1^{a_1} \U_2^{a_2} \hdots \U_L^{a_L} \vv \right) \hdots]_{0 \le a_j \le n_j - 1}$.  

In this form, our previous cyclic frames clearly arise as subsets of the columns of $\M$. It turns out that these abelian frames are the generalized harmonic frames as described in \cite{CasazzaKutyniok}, up to a unitary rotation of $\vv$ or an equivalent group representation. The frame matrix $\M$ is a subset of rows of the Kronecker product  $\matr{A}_1 \otimes ... \otimes \matr{A}_L$, where $\matr{A}_j = \begin{bmatrix} \vv, \U_j \vv, ..., \U_j^{n_j - 1} \vv \end{bmatrix}$. As such, any of these frames will be tight.

\section{Equiangular Frames from Cyclic Group Representations} 

Let us examine the harmonic frame formed by the columns of $\M$ in (\ref{eqn:cyclicframemat}). \cite{Giannakis} classified the conditions on the $k_i$ under which this frame is equiangular. Since we know these frames are tight, this determines precisely when their coherence achieves the Welch lower bound of Theorem \ref{thm:framebound}. 

\begin{definition} 
Let $G$ be a group. A \textit{difference set} $K = \{k_1, ..., k_m\} \subset G$ is a set of elements such that every nonidentity element $g \in G$ occurs as a difference $k_i - k_j$ the same number of times. That is, the sets $A_g := \{(k_i, k_j) \in K \times K ~|~ k_i - k_j = g\}$ have the same size for $g \ne 0$. 
\end{definition}

\begin{theorem}[\cite{Giannakis} Equiangular Harmonic Frames] 
The harmonic frame formed by the columns of $\M$ in (\ref{eqn:cyclicframemat}) is equiangular if and only if the integers $k_i$ form a difference set in $\Z/n\Z$. 
\label{thm:diffsets} 
\end{theorem}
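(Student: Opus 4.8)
The plan is to turn the equiangularity requirement into a statement about the discrete Fourier transform, on $\Z/n\Z$, of the function that counts differences among the $k_i$, and then to recognize the difference-set condition by Fourier inversion. First I would record, using (\ref{eqn:abelianinnerprod}) and the remarks following (\ref{eqn:cyclicframemat}), that the inner product between the normalized columns of $\M$ indexed by $\ell_1$ and $\ell_2$ depends only on $t := \ell_2 - \ell_1 \bmod n$, and that its squared modulus is $\tfrac{1}{m^2}\, c(t)$ where
\[
c(t) := \Big| \sum_{i=1}^{m} \omega^{k_i t} \Big|^2 .
\]
As $\ell_1 \ne \ell_2$ range over $\{0, \dots, n-1\}$, the parameter $t$ ranges over all nonzero residues mod $n$, so the (unit-norm) frame is equiangular exactly when $c(t)$ is constant on $t \in \{1, \dots, n-1\}$.

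The second step is to expand the modulus square and collect terms by the value of the difference:
\[
c(t) = \sum_{i,j=1}^m \omega^{(k_i - k_j)t} = \sum_{g \in \Z/n\Z} N_g\, \omega^{g t}, \qquad N_g := \#\{(i,j) : k_i - k_j \equiv g \bmod n\}.
\]
Since the $k_i$ are distinct, $N_0 = m$, and $\sum_g N_g = m^2$; and by definition $\{k_i\}$ is a difference set in $\Z/n\Z$ precisely when $N_g$ does not depend on $g$ for $g \ne 0$. The two displays above exhibit $(c(t))_t$ and $(N_g)_g$ as unnormalized discrete Fourier transforms of one another, so inversion on the cyclic group gives $N_g = \tfrac1n \sum_{t=0}^{n-1} c(t)\, \omega^{-g t}$, while conversely $c(t)$ is recovered from the $N_g$ by the second display.

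For the ``if'' direction I would assume $N_g \equiv \lambda$ for $g \ne 0$ and compute $c(t) = m + \lambda\big(\sum_{g=0}^{n-1}\omega^{g t} - 1\big) = m - \lambda$ for every $t \not\equiv 0 \bmod n$, using $\sum_{g=0}^{n-1}\omega^{g t} = 0$; hence the frame is equiangular. For the converse I would assume $c(t) \equiv \gamma$ on $t \in \{1,\dots,n-1\}$, isolate the $t = 0$ term (where $c(0) = m^2$) in the inversion formula, and apply the same geometric-sum identity to get $N_g = \tfrac1n(m^2 - \gamma)$ for all $g \ne 0$, i.e.\ a difference set. As a sanity check, matching the two computations forces $\lambda = \tfrac{m(m-1)}{n-1}$ and a common squared coherence $\tfrac{m-\lambda}{m^2} = \tfrac{n-m}{m(n-1)}$, which is the Welch bound of Theorem \ref{thm:framebound}, as it must be for a tight equiangular frame.

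I do not anticipate a real obstacle: the argument is essentially one application of Fourier inversion on $\Z/n\Z$. The only point demanding care is the bookkeeping around the excluded index $t \equiv 0$ — the value $c(0) = m^2$ differs from the common off-diagonal value, and one must check that this single term in the inversion sum does not destroy the constancy of $N_g$ for $g \ne 0$, which it does not precisely because $\sum_{t=0}^{n-1}\omega^{-g t}$ vanishes for $g \ne 0$. It is also worth stating explicitly that distinctness of the $k_i$ (already needed for $\M$ to be a harmonic frame at all) is what pins down $N_0 = m$ and hence the relation between $\lambda$ and $\gamma$.
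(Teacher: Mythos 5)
Your proposal is correct and follows essentially the same route as the paper: both expand the squared inner product modulus as $\sum_g N_g\,\omega^{gt}$ (your $N_g$ is the paper's $a_t$ and your $c(t)$ is $m^2\alpha_t$) and invoke the Fourier pairing between the two sequences to equate constancy off the identity on one side with constancy on the other. Your write-up is merely more explicit about carrying out the inversion in each direction and about the $t\equiv 0$ bookkeeping, which the paper leaves implicit.
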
 

\begin{proof} 
The proof follows from a simple but insightful Fourier connection. Let us define $A_t := \{(k_i, k_j) \in K \times K ~|~ k_i - k_j \equiv t \mod n \}$ for any $t \in \Z/n\Z$, and set $a_t := |A_t |$. Furthermore, if we index the columns as $\ell = 0, 1, ..., n-1$ then the inner product associated to the $\ell^{th}$ column takes the form $$c_{\ell} := \frac{\vv^* \U^{\ell} \vv}{||\vv||_2^2} =  \frac{1}{m} \sum_{k \in K} \omega^{\ell k}. $$ Since we are concerned only with the magnitude of $c_{\ell}$, we may consider the quantity $$\alpha_\ell := |c_\ell|^2 = \frac{1}{m^2} \left(\sum_{k \in K} \omega^{\ell k}\right)^* \left(\sum_{k \in K} \omega^{\ell k}\right)$$ $$ = \frac{1}{m^2} \sum_{k_i, k_j \in K} \omega^{\ell (k_i - k_j)}. $$ 

We can then write 
\begin{align} 
\alpha_\ell = \frac{1}{m^2} \sum_{t = 0}^{n-1} a_t \omega^{\ell t}, \label{eqn:alphaintermsofa} 
\end{align} 
which gives us a Fourier pairing between the $\alpha_\ell$ and the $a_t$ with inverse transform given by 
\begin{align} 
a_t = \frac{m^2}{n} \sum_{\ell = 0}^{n-1} \alpha_\ell \omega^{-t \ell}. \label{eqn:aintermsofalpha} 
\end{align}

$\M$ will be an equiangular tight frame precisely when all of the $\alpha_\ell$ are equal for $\ell \ne 0$, and from the Fourier pairing this will occur precisely when the $a_t$ are equal for $t \ne 0$, i.e., when the $k_i$ form a difference set. 
\end{proof}

This concept of tight equiangular frames arising from difference sets has since been generalized and elaborated \cite{DingFeng}, \cite{CasazzaKutyniok}, \cite{Waldron1}. \cite{Ding} showed how slightly relaxed forms of difference sets can produce frames which have coherence almost reaching the Welch Bound. Many of our results in the following sections can also be viewed as using more extensively-relaxed difference sets to produce low-coherence frames. Difference sets have been long studied and classified \cite{Baumert}, \cite{BethJungnickel}. They have found application in other fields as well, such as designing codes for DS-CDMA systems \cite{DingGolinKlove}, LDPC codes \cite{VasicMilenkovic}, sonar and synchronization \cite{Golomb}, and other forms of frame design \cite{Kaira}. 

While Theorem \ref{thm:diffsets} completely characterizes the optimal-coherence frames arising from representations of cyclic groups, it reveals that equiangular frames of the form (\ref{eqn:cyclicframemat}) are rather scarce, since the number of known difference sets is relatively small. In the following section, we will present a new strategy for selecting the integers $k_i$ which, while not always producing an equiangular frame, does yield frames with few distinct inner product values and provable low coherence.

\section{Cyclic Groups of Prime Order } 
\label{sec:cyclicprimegrp} 

We have already managed to cut down the number of distinct inner products between columns from ${n \choose 2}$ to $n-1$, simply by using a unitary group to generate our columns. For cyclic groups, however, we can reduce this number even more. We first consider the case where $n$ is prime. Let $G = \left( \Z/n\Z \right)^\times$, the multiplicative group of the integers modulo $n$. As usual, we identify the elements of $\Z/n\Z$ with the integers $0, 1, ..., n-1$. Since $n$ is assumed to be a prime, $G$ is itself a cyclic group, and consists of the $n-1$ nonzero elements of $G$. Now let us choose $m$ to be any divisor of $n-1$, and set $r := \frac{n-1}{m}$. Since $G$ is cyclic, it has a unique subgroup $K$ of order $m$ consisting of the distinct $r^{th}$ powers of the elements of $G$. In fact, if $g$ is any generator for $G$, then $K$ will be generated by $k := g^\frac{n-1}{m}$. Now, if we write out the elements of $K$ as $\{k_1, ..., k_m\}$ (or equivalently in terms of a single generator $k$ as $\{1, k, k^2, ..., k^{m-1}\}$), we can form our generator matrix $\U$ as in (\ref{eqn:cyclicU}), choosing $\omega^{k_i}$ to be the $i^{th}$ diagonal term. Note that since $K$ consists of elements relatively prime to $n$, then for each $i$, $\omega^{k_i}$ has multiplicative order $n$. It follows that $\U$ also has order $n$ and generates the cyclic group $\mathcal{U} = \{\U^\ell\}_{\ell = 0}^{n-1} \cong \Z/n\Z$. 

It turns out that this construction not only reduces the number of distinct inner product values between our columns, but it maintains the property that each such value occurs with the same multiplicity: 

\begin{theorem} Let $n$ be a prime and $m$ any divisor of $n-1$. Take $K = \{k_1, ..., k_m\}$ to be the unique (cyclic) subgroup of $G = (\Z/n\Z)^\times$ of size $m$. Set $\omega = e^{\frac{2 \pi i}{n}}$, $\vect{v} = \frac{1}{\sqrt{m}}[1, ..., 1]^T \in \R^{m}$, and $\matr{U} = \diag(\omega^{k_1}, ..., \omega^{k_m})$. Then the columns of $\matr{M} =  [\vv, \matr{U} \vv, ..., \matr{U}^{n-1} \vv]$ form a unit-norm tight frame with at most $\frac{n-1}{m}$ distinct inner product values between its columns, each occurring with the same multiplicity. 
\label{thm:nprimealgorithm} 
\end{theorem}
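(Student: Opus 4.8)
The plan is to verify the three assertions of the theorem in turn, leaning on the general machinery already assembled. First, tightness: since $K = \{k_1,\dots,k_m\}$ is a subgroup of $(\Z/n\Z)^\times$, its elements are nonzero and pairwise distinct modulo $n$, so the matrix $\matr{M}$ in the form (\ref{eqn:cyclicframemat}) has $m$ distinct rows drawn from the $n\times n$ discrete Fourier matrix (scaled by $\frac{1}{\sqrt m}$). Hence $\matr{M}$ is a harmonic frame, and harmonic frames are tight; the normalization $\vv = \frac{1}{\sqrt m}\mathbf{1}_m$ makes each column have unit norm, so $\{f_i\}$ is a unit-norm tight frame. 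This step is essentially a citation of the discussion in Section \ref{sec:abelianharmonicframes}.

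Next, the count of distinct inner products. By the cyclic-group discussion, the normalized inner product between columns $\ell_1$ and $\ell_2$ equals $c_{\ell}$ with $\ell = \ell_2-\ell_1 \bmod n$, where $c_\ell = \frac{1}{m}\sum_{k\in K}\omega^{\ell k}$. The key observation is that multiplication by any fixed $a\in G$ permutes $K$ (since $K$ is a subgroup, $aK$ is a coset, and it equals $K$ iff $a\in K$ — but here I only need that $aK$ is some set of the same size; more precisely I want to track $|c_\ell|$). Reindexing the sum, for $a \in G$ one gets $c_{a\ell} = \frac{1}{m}\sum_{k\in K}\omega^{a\ell k} = \frac{1}{m}\sum_{k'\in aK}\omega^{\ell k'}$. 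When $\ell \ne 0$, as $a$ ranges over $G$ the value $a\ell$ ranges over all of $G = (\Z/n\Z)^\times$, so every nonzero $\ell$ is of the form $a\ell_0$ for a fixed representative; thus it suffices to understand how $c_\ell$ depends on the coset $K\ell^{-1}$ (equivalently, on which coset of $K$ the element $\ell$ lies in). Since $[G:K] = \frac{n-1}{m} = r$, there are exactly $r$ cosets, and $|c_\ell|$ is constant on each coset — indeed if $\ell' = b\ell$ with $b\in K$ then $c_{\ell'} = \frac1m\sum_{k\in K}\omega^{\ell bk} = \frac1m\sum_{k\in K}\omega^{\ell k} = c_\ell$ because $bK = K$. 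Therefore the $n-1$ nonzero values $|c_\ell|$ take at most $r = \frac{n-1}{m}$ distinct values.

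Finally, equal multiplicity. Lemma \ref{lem:samemultiplicityinnerprods} already gives that each of the $n-1$ values $\vv^*\U^\ell\vv$ ($\ell = 1,\dots,n-1$) occurs as a column inner product exactly $n$ times. The refinement needed here is that the $\ell$'s falling in a common coset of $K$ all give the \emph{same} value $|c_\ell|$, and each coset has exactly $m$ elements; so each distinct inner product magnitude is shared by exactly $m$ of the indices $\ell$, and hence occurs as a pairwise column inner product exactly $mn$ times. In particular all the distinct values occur with the same multiplicity $mn$, which is what is claimed. (One should also note the harmless possibility that two different cosets accidentally yield the same magnitude; this only \emph{reduces} the number of distinct values below $r$ and merges multiplicity classes in equal-size blocks, so the ``same multiplicity'' conclusion is preserved — this is why the statement says ``at most'' $\frac{n-1}{m}$.)

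I expect the main obstacle to be the bookkeeping in the second and third steps: making precise that $|c_\ell|$ is genuinely constant on cosets of $K$ (the computation $bK = K$ for $b \in K$ is the crux) and then translating ``constant on $r$ cosets of size $m$'' into ``same multiplicity among pairwise inner products,'' while correctly handling the edge case of coincidental equalities between cosets so that the $\sqrt{r}$ hypothesis of Lemma \ref{lem:tightframecohbound} is legitimately in force. Everything else — tightness, unit norm, the base count of $n-1$ values — is immediate from the results already cited.
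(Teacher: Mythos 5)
Your proposal is correct and follows essentially the same route as the paper: the inner product $c_\ell$ depends only on the coset $\ell K$ in $G/K$ (since $bK=K$ for $b\in K$), giving at most $r=\frac{n-1}{m}$ values, and combining the $m$ elements per coset with Lemma \ref{lem:samemultiplicityinnerprods} yields multiplicity $mn$ for each value. Your explicit handling of tightness via the harmonic-frame discussion and your flagging of the (degenerate) possibility of coincidental coset collisions are slightly more careful than the paper's own write-up, but the argument is the same.
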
 

 \begin{proof} 
For any integer $\ell$ in the set $\{1, ..., n-1\}$, the inner product corresponding to $\U^\ell$ (as in Equation (\ref{eqn:abelianinnerprod})) will take the following form: 

\begin{equation} \label{eqn:cyclicinnerprod} \frac{|\vv^* \U^\ell \vv|}{|| \vv ||_2^2} = \frac{1}{m} \left|\sum_{i = 1}^m  \omega^{\ell \cdot k_{i}}\right|. \end{equation} 

Notice the exponents of $\omega$ appearing in the above summation can be taken modulo $n$, since $\omega$ is an $n^{th}$ root of unity, and are then simply the elements of the $\ell^{th}$ coset of $K$ in $G$, $\ell K = \{\ell \cdot k_1, ..., \ell \cdot k_m\}$. The set of all cosets of $K$ in $G$ is denoted $G/K$. From elementary group theory, we know that the distinct cosets of $K$ form a disjoint partition of $G$, so the number of distinct cosets of $K$ in $G$ is the quotient of their sizes: $|G/K| = \frac{|G|}{|K|} = \frac{n-1}{m}$. Thus, the total number of distinct pairwise inner products that we now must control is $\frac{n-1}{m}$. 

It only remains to show that each of the $\frac{n-1}{m}$ inner products occurs the same number of times. Let $\{\ell_1, ..., \ell_r\}$ be a complete set of coset representatives for $K$ in $(\Z/n\Z)^\times$. Here, $r$ is simply $\frac{n-1}{m}$. Then every element in $\{1, ..., n-1\}$ can be written uniquely as a product $\ell_i k_j$, and from Lemma \ref{lem:samemultiplicityinnerprods} the $n-1$ inner products $\vv^* \U^{\ell_i k_j} \vv$ all arise the same number of times. As described above, the $\frac{n-1}{m}$ distinct inner product values correspond to the cosets of $K$, i.e., for a fixed $\ell_i$ the $m$ inner products $\vv^* \U^{\ell_i k_1}\vv, ..., \vv^*\U^{\ell_i k_m}\vv$ will give rise to one of the distinct inner product values. Thus, since each distinct value corresponds to $m$ inner products, each arising the same number of times, our result is proved. 
\end{proof}

We emphasize the power of this construction in reducing the number of inner products that we must control in order to maintain low matrix coherence. Since we are free to choose $m$ to be \textit{any} divisor of $n-1$, then for properly chosen matrix dimensions, we can reasonably create matrices with just two or three distinct values of inner products between columns. In practice, this often creates matrices with remarkably low coherence, far outmatching that of any known randomly-generated matrices. In Table \ref{table:coherencelist}, we compare the coherences of the ``Group Matrices" from our construction with those of randomly-generated complex Gaussian matrices and matrices designed by randomly selecting $m$ rows from the $n \times n$ Fourier matrix. (This latter construction is equivalent to randomly selecting the exponents $k_i$ in our cyclic generator matrix $\U$ in (\ref{eqn:cyclicU}).) For convenience, we also list the lower bound on coherence from Theorem \ref{thm:framebound}, and we underline the coherences which achieve this bound. Figure \ref{fig:innerprod_499_166} illustrates explicitly the inner products for a random Fourier matrix vs. a Group matrix. 

\begin{table}[ht]
\caption{Coherences for Random and Group Matrices (for $n$ a Prime)} 
\centering 
\begin{tabular}{c p{1.5cm}  p{1cm}   p{1cm} c }
\hline \hline 
$(n, m)$ & Complex Gaussian & Random Fourier & Group Matrix & $\sqrt{\frac{n - m}{m (n-1)}}$ \\ [0.5ex] 
\hline 
(251, 125) & .2677 & .1996 & \underline{.0635} & \underline{.0635} \\ 
(499, 166) & .3559 & .1786 & .0888 & .0635 \\ 
(499, 249) & .2226 & .1736 & \underline{.0449} & \underline{.0449} \\ 
(503, 251) & .2137 & .1533 & \underline{.0447} & \underline{.0447} \\ 
(521, 260) & .2208 & .1504 & .0458 & .0439 \\
(521, 130) & .3065 & .2376 & .1175 & .0761 \\ 
(643, 321) & .2034 & .1627 & \underline{.0395} & \underline{.0395} \\ 
(643, 214) & .2274 & .1978 & .0755 &.0559 \\ 
(701, 175) & .2653 & .2316 & .0687 & .0655 \\ 
(701, 350) & .1788 & .1326 & .0393 & .0379 \\ 
(1009, 504) & .1565 & .1147 & .0325 & .0315 \\ 
(1009, 336) & .2086 & .1384 & .0597 & .0446 \\ 
(1009, 252) & .2287 & .1631 & .0846 & .0546 \\ 
\hline 
\end{tabular} 
\label{table:coherencelist} 
\end{table}

\begin{figure}
\begin{center}
 \includegraphics[height = 230pt, width=230pt]{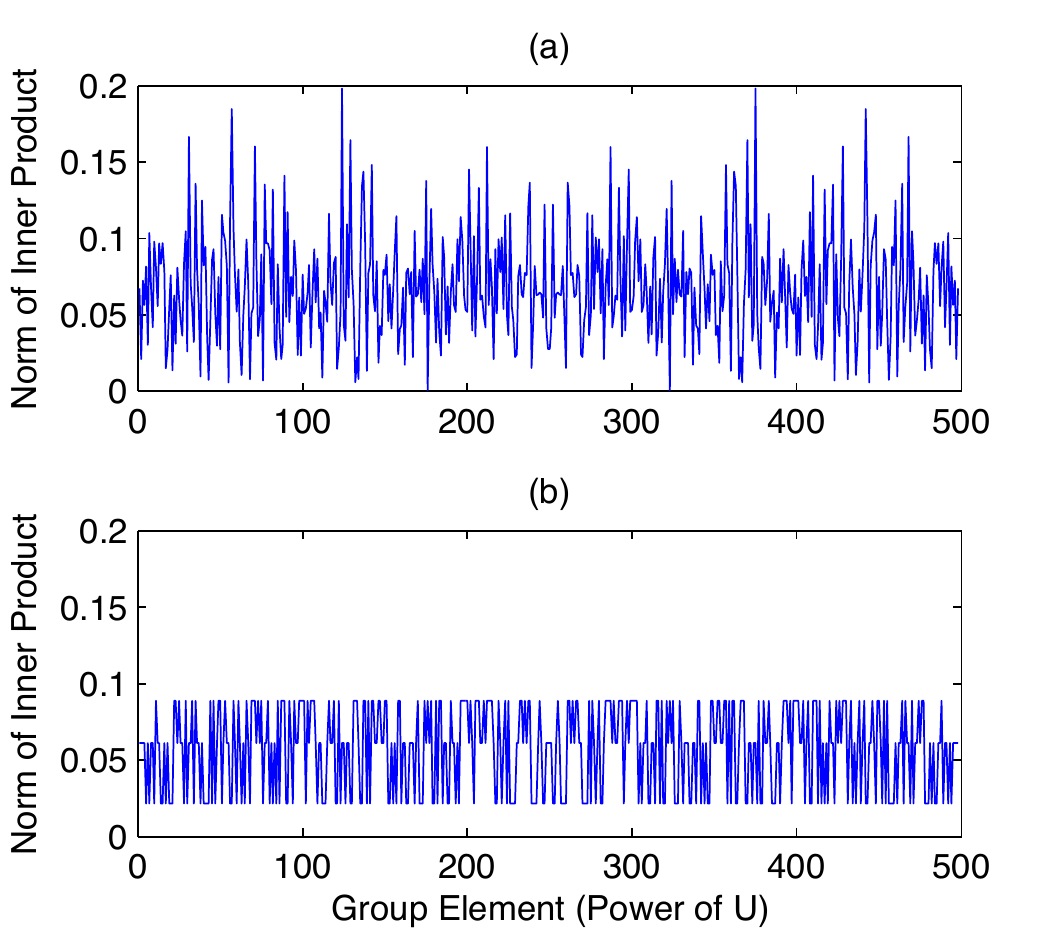}
 \caption{The norms of the inner products associated to each group element for (a) randomly-chosen $K$, and (b) $K$ selected to be a subgroup of $(\Z/p\Z)^\times$ of index 3. Here, $n = p = 499, m = 166$. In (b), as expected, there are only three distinct values of the inner products between distinct, normalized columns. }
 \label{fig:innerprod_499_166}
\end{center}
\end{figure}

\section{Sharper Bounds on Coherence for Frames from Cyclic Groups of Prime Order} 

In the special case where we construct our frame as in Theorem \ref{thm:nprimealgorithm} (using Slepian's approach with a group $\mathcal{U} \cong \Z/n\Z$ and $n$ prime), we have a great deal of underlying algebraic structure in our frame. So it should come as no surprise that we can derive sharper bounds on our coherence and even compute it exactly in some cases.

As before, let $m$ be a divisor of $n-1$, and take $K = \{k_1, ..., k_m\}$ to be the unique subgroup of $(\Z/n\Z)^\times$ of size $m$. Define $r := \frac{n-1}{m}$, which is the number of distinct inner product values. If $r$ is small, it becomes relatively simple to analyze these values. For example:

\begin{theorem}[r = 2] 
Let $n$ be a prime, $m$ a divisor of $n-1$, and $\omega = e^{\frac{2 \pi i}{n}}$. Let $K = \{k_1, ..., k_m\}$ be the unique subgroup of $(\Z/n\Z)^\times$ of size $m$, and set $\U = \diag(\omega^{k_1}, ..., \omega^{k_m}) \in \C^{m \times m}$, $\vv = \frac{1}{\sqrt{m}} [1, ..., 1]^T \in \C^{m \times 1}$, and $\matr{M} = [\vv, \U \vv, ..., \U^{n-1} \vv]$. 

If $r := \frac{n-1}{m} = 2$, there are two distinct inner product values between the columns of $\matr{M}$, both of which are real. If $n-1$ is divisible by 4, these inner products are $\frac{-1 \pm \sqrt{1 + 2m}}{2m}$. In this case, $\matr{M}$ has coherence $\sqrt{\frac{n - m - \frac{1}{2}}{m(n-1)}} + \frac{1}{2m}$. 

If $n-1$ is not divisible by 4, then the columns of $\matr{M}$ form an equiangular frame. The two inner products are $\pm \sqrt{\frac{1}{m} \left(\frac{1}{2} + \frac{1}{2m}\right)} $, and the coherence is $\sqrt{\frac{n-m}{m(n-1)}}$. 
\label{thm:r2innerprods} 
\end{theorem}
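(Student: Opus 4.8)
Since $r=\frac{n-1}{m}=2$ we have $n=2m+1$, and the unique index-$2$ subgroup $K$ of the cyclic group $(\Z/n\Z)^\times$ is exactly the set of nonzero quadratic residues modulo $n$, its nontrivial coset being the set of quadratic non-residues. By Theorem~\ref{thm:nprimealgorithm} the columns of $\M$ already form a unit-norm tight frame with two distinct off-diagonal inner product values of equal multiplicity; the inner product between distinct columns $\U^a\vv$ and $\U^b\vv$ is $c_{b-a}$, where $c_\ell=\vv^*\U^\ell\vv=\frac1m\sum_{i=1}^m\omega^{\ell k_i}$ depends only on the coset $\ell K$. So the plan is to evaluate the two values: $c_\ell=\tfrac1m S$ for $\ell\in K$ and $c_\ell=\tfrac1m S'$ for $\ell\notin K$, where $S=\sum_{a\in K}\omega^{a}$ sums over the quadratic residues and $S'$ over the non-residues. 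I would pin these down from two identities: the geometric-sum identity $S+S'=\sum_{a=1}^{n-1}\omega^{a}=-1$, and $S-S'=g$, the quadratic Gauss sum.

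The one nontrivial input is the classical value $g^2=(-1)^{(n-1)/2}\,n$, which I would obtain from $|g|^2=n$ together with $\overline g=(-1)^{(n-1)/2}g$ (the substitution $a\mapsto -a$), or simply cite. Everything after that is bookkeeping with $n=2m+1$. If $4\mid n-1$, then $g^2=n$, so $g=\pm\sqrt n$ is real and $\{S,S'\}=\{\tfrac{-1+\sqrt n}{2},\tfrac{-1-\sqrt n}{2}\}$, giving the two real inner products $\tfrac{-1\pm\sqrt n}{2m}=\tfrac{-1\pm\sqrt{1+2m}}{2m}$. The larger magnitude is $\tfrac{\sqrt n+1}{2m}$, so $\mu=\tfrac{\sqrt n+1}{2m}$, and it remains only to verify the elementary identity $\tfrac{\sqrt n}{2m}=\sqrt{\tfrac{n-m-1/2}{m(n-1)}}$ (both sides square to $\tfrac{2m+1}{4m^2}$), which rewrites $\mu$ as $\sqrt{\tfrac{n-m-1/2}{m(n-1)}}+\tfrac{1}{2m}$.

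If $4\nmid n-1$, then $g^2=-n$, so $g$ is purely imaginary and $S,S'=\tfrac{-1\pm i\sqrt n}{2}$ are complex conjugates with common modulus $\tfrac{\sqrt{n+1}}{2}$. Hence every off-diagonal inner product of $\M$ has absolute value $\tfrac{\sqrt{n+1}}{2m}$, i.e. the frame is equiangular; a one-line computation with $n=2m+1$ turns this into $\sqrt{\tfrac1m\big(\tfrac12+\tfrac1{2m}\big)}$. Since $\M$ is a unit-norm tight equiangular frame, Theorem~\ref{thm:framebound} immediately gives $\mu=\sqrt{\tfrac{n-m}{m(n-1)}}$ (which equals the previous expression after simplification). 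As a cross-check, in this case $K$ is the Paley difference set in $\Z/n\Z$, so the equiangularity also follows from Theorem~\ref{thm:diffsets} and the optimality then from Theorem~\ref{thm:framebound}.

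The only real obstacle is the Gauss-sum fact $|g|^2=n$ and the parity of $\overline g/g$; in particular, Gauss's determination of the \emph{sign} of $g$ is not needed, since only $g^2$ enters. The remaining steps—identifying $K$ with the quadratic residues, applying $S+S'=-1$, and the routine algebraic simplifications that put the magnitudes and $\mu$ into the stated closed forms—are straightforward.
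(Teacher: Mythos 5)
Your proof is correct, and it takes a genuinely different route from the paper's. The paper works entirely inside its Fourier-pairing framework: it specializes (\ref{eqn:innerproductmagsq}) and (\ref{eqn:ceqn}) to get $\alpha_1 = \frac{1}{m}\left(1 - \frac{a_x}{m} + (a_1-a_x)c_1\right)$ and $\alpha_x = \frac{1}{m}\left(1-\frac{a_1}{m}+(a_x-a_1)c_1\right)$, proves a separate counting lemma (Lemma \ref{lem:asizer2lemma}) evaluating $a_1$ and $a_x$ by factoring differences of squares as $(a-b)(a+b)$ and counting pairs, and then solves the resulting quadratic in $c_1$. You instead identify $mc_1$ and $mc_x$ directly as the quadratic-residue and non-residue exponential sums $S, S'$ and pin them down from $S+S'=-1$ together with $(S-S')^2=(-1)^{(n-1)/2}n$. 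The two arguments are essentially equivalent in content---the evaluation $|g|^2=n$ encodes the same combinatorics as the paper's count of $a_1,a_x$---but yours is shorter, outsources the one nontrivial input to the classical Gauss-sum computation (correctly observing that Gauss's sign determination is never needed, since only $g^2$ enters), and yields the exact values of $S$ and $S'$ rather than only a quadratic they satisfy.

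One point worth flagging: in the case $4 \nmid n-1$ your computation gives $c_1, c_x = \frac{-1\pm i\sqrt{n}}{2m}$, a complex-conjugate pair, which contradicts the theorem's assertion that both inner products are real and equal to $\pm\sqrt{\frac{1}{m}\left(\frac12+\frac1{2m}\right)}$; only their common modulus equals that quantity. (Check $n=7$, $m=3$: $\omega+\omega^2+\omega^4=\frac{-1+i\sqrt7}{2}$.) You are right and the statement is imprecise here: the paper's own deduction that ``$c_1$ must be real'' from (\ref{eqn:f1eqn})--(\ref{eqn:fxeqn}) is valid only when $a_1\ne a_x$, i.e.\ when $4\mid n-1$; when $a_1=a_x$ the coefficient of $c_1$ vanishes and no realness follows. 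The substantive conclusions---equiangularity and the coherence values in both cases---are unaffected, and your argument establishes them.
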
 

\begin{proof}
We will hold off on the details of the proof until Appendix \ref{sec:r2} aside from mentioning that it is related to the connection made by Xia et al \cite{Giannakis} between tight equiangular harmonic frames and difference sets. In fact, in the case where $n-1$ is not divisible by 4, $K$ forms a known difference set in $\Z/n\Z$. If we view $\Z/n\Z$ as the additive group of $\mathbb{F}_n$, this particular case also overlaps with the tight equiangular frames classified in Theorem 3 of \cite{DingFeng}. 
\end{proof}

As the number $r$ of inner products increases, it becomes more complicated to explicitly compute their values or even just the coherence of the resulting frame. While there were only two cases to consider when $r = 2$, there are many more even for $r$ as low as 3. We can, however, exploit the algebraic structure of our frames to yield bounds on their coherence which in practice prove to be nearly tight. 

\begin{theorem}[r = 3] 
Let $n$ be a prime, $m$ a divisor of $n-1$, and $\omega = e^{\frac{2 \pi i}{n}}$. Let $K = \{k_1, ..., k_m\}$ be the unique subgroup of $(\Z/n\Z)^\times$ of size $m$, and set $\U = \diag(\omega^{k_1}, ..., \omega^{k_m}) \in \C^{m \times m}$, $\vv = \frac{1}{\sqrt{m}} [1, ..., 1]^T \in \C^{m \times 1}$, and $\matr{M} = [\vv, \U \vv, ..., \U^{n-1} \vv]$. 

If $r := \frac{n-1}{m} = 3$, then the coherence of $\matr{M}$ will satisfy 
\begin{align} 
\mu \le \frac{1}{3} \left(2\sqrt{\frac{1}{m} \left(3 + \frac{1}{m} \right)} + \frac{1}{m}\right) \approx \sqrt{\frac{4}{3m}}, 
\end{align} 
and for large enough $m$, we will \textit{asymptotically} have the following lower bound on coherence: 
\begin{align} 
\mu \ge \frac{1}{\sqrt{m}} \text{~(asymptotically), }  
\end{align} 
which is strictly greater than the Welch bound. 
\label{thm:r3cohbounds} 
\end{theorem}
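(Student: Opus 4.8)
The plan is to translate the statement into an elementary extremal question about three real numbers with prescribed first and second moments. By Theorem~\ref{thm:nprimealgorithm}, the $r=3$ distinct normalized inner products of $\M$ are the numbers $c_t=\tfrac1m\sum_{j\in C_t}\omega^{j}$, where $C_0=K,\ C_1,\ C_2$ are the three cosets of $K$ in $(\Z/n\Z)^\times$; write $\eta_t:=\sum_{j\in C_t}\omega^{j}$ for the associated Gaussian period of order $3$, so that $c_t=\eta_t/m$ and $\mu=\tfrac1m\max_t|\eta_t|$. The first step is to observe that, since $n$ is an odd prime, $n-1=3m$ is even, hence $m$ is even; as $K$ is the unique subgroup of $(\Z/n\Z)^\times$ of the even order $m$, it contains the unique element of order two, namely $-1$. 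Therefore $-C_t=C_t$ for every $t$, which gives $\overline{\eta_t}=\eta_t$: all three inner products are real. This realness is precisely what makes the subsequent two-variable arguments available.

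Next I would compute the two relevant moments. The identity $\sum_t\eta_t=\sum_{j\ne0}\omega^{j}=-1$ is immediate, and for the second moment I would expand $\sum_t\eta_t^2=\sum_{b\ne0}\sum_{k\in K}\omega^{b(k+1)}$: the term $k=-1$ (which lies in $K$) contributes $\sum_{b\ne0}1=3m$, while each of the remaining $m-1$ values of $k$ contributes $\sum_{b\ne0}\omega^{b(k+1)}=-1$, so $\sum_t\eta_t^2=3m-(m-1)=2m+1$. This is consistent with the mean-of-$\alpha_i$ computation inside the proof of Lemma~\ref{lem:tightframecohbound}, since $\tfrac1r\sum_t(\eta_t/m)^2=\tfrac{2m+1}{3m^2}=\tfrac{n-m}{m(n-1)}$; alternatively one may invoke the classical cubic period polynomial $x^3+x^2-\tfrac{n-1}{3}x-c$.

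For the upper bound I would fix one index $t_0$ and use reality of the other two periods: with $S=-1-\eta_{t_0}$ and $Q=2m+1-\eta_{t_0}^2$ their sum and sum of squares, one needs $(\eta_{t_1}-\eta_{t_2})^2=2Q-S^2\ge0$, i.e.\ $3\eta_{t_0}^2+2\eta_{t_0}-(4m+1)\le0$. Solving this quadratic gives $|\eta_{t_0}|\le\tfrac13\bigl(1+2\sqrt{3m+1}\bigr)$ for every $t_0$, and dividing by $m$ produces exactly $\mu\le\tfrac13\bigl(2\sqrt{\tfrac1m(3+\tfrac1m)}+\tfrac1m\bigr)$, which for large $m$ behaves like $\sqrt{4/(3m)}$ — sharper than the $\sqrt{2/m}$ one would obtain by feeding $r=3$ directly into Lemma~\ref{lem:tightframecohbound}. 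The gain comes precisely from also using the first moment $\sum_t\eta_t=-1$, not just the total $\sum_t\eta_t^2$.

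For the asymptotic lower bound I would again use both moments; crude averaging ($\max_t\eta_t^2\ge\tfrac13\sum_t\eta_t^2$) only yields $\mu\gtrsim\sqrt{2/(3m)}$, which is too weak. Instead, set $x_t=\eta_t+\tfrac13$, so $\sum_t x_t=0$ and $\sum_t x_t^2=2m+\tfrac23$; for any three reals with zero sum, if $|x_1|$ is largest then $x_2+x_3=-x_1$ with $|x_2|,|x_3|\le|x_1|$ forces $x_2^2+x_3^2\le x_1^2$, hence $\max_t x_t^2\ge\tfrac12\sum_t x_t^2=m+\tfrac13$. Therefore $\max_t|\eta_t|\ge\sqrt{m+\tfrac13}-\tfrac13$, i.e.\ $\mu\ge\tfrac1m\bigl(\sqrt{m+\tfrac13}-\tfrac13\bigr)=\tfrac1{\sqrt m}\bigl(1-o(1)\bigr)$, which exceeds the Welch bound $\sqrt{\tfrac{n-m}{m(n-1)}}=\sqrt{\tfrac{2m+1}{3m^2}}\approx\sqrt{2/(3m)}$ once $m$ is large, confirming that no equiangular tight frame occurs in this regime. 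The main obstacle is not any single hard computation but recognizing that the bound from Lemma~\ref{lem:tightframecohbound} is lossy and that both moment identities (enabled by the fact $-1\in K$) must be combined, and, for the lower bound, invoking the sharp $\ell^\infty$-versus-$\ell^2$ inequality on the hyperplane $\sum_t x_t=0$ rather than plain averaging.
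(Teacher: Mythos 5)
Your proof is correct, and it reaches both bounds by a genuinely more elementary route than the paper's. The paper's Appendix C proof builds an entire combinatorial apparatus of translation degrees $a_{t_1K,t_2K}$, proves the identity $a_{xK,x^2K}-a_{K,K}=1$ of Lemma \ref{lem:transdegeqnlemmar3} by counting factorizations $a^3-b^3=(a-b)(a-\zeta b)(a-\zeta^2 b)$ over $\mathbb{F}_n$ (with separate cases according to which coset contains $\zeta$), assembles the matrix identity (\ref{eqn:cmatrixeqn}) for $\cc\cc^*$, diagonalizes it with the $3\times 3$ DFT to extract $w_1=-\frac1m$ and $|w_2|=|w_3|=\beta=\sqrt{\frac1m\left(3+\frac1m\right)}$, and finally optimizes the resulting one-parameter family $c_{x^i}=\frac13\left(\alpha+2\beta\cos\left(\theta\pm\frac{2\pi}{3}\right)\right)$ over the free phase $\theta$. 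You observe that the only information actually used is the pair of power sums $\sum_t\eta_t=-1$ and $\sum_t\eta_t^2=2m+1$ together with realness of the periods (your derivation of $-1\in K$ from $m$ even is equivalent to the paper's observation that $-1$ is a cube, Lemma \ref{lem:r3realinnerprods}), and these two moments follow from a two-line orthogonality computation; by Parseval they carry exactly the same content as the paper's $(\alpha,\beta)$. Your discriminant argument for the upper bound and your zero-sum inequality $\max_t x_t^2\ge\frac12\sum_t x_t^2$ for the lower bound are precisely the extremal cases the paper reaches at $\theta=\pi$ and $\theta=\frac{\pi}{2}$ respectively, so the constants agree. What your route buys is the elimination of the cubic-factorization counting lemma and the matrix identity entirely, plus an explicit non-asymptotic lower bound $\mu\ge\frac1m\left(\sqrt{m+\frac13}-\frac13\right)$ where the paper only argues asymptotically; what it gives up is the structural information (the translation-degree identities and the exact circulant structure of $\cc\cc^*$) that the paper reuses in its general-$r$ analysis in Lemma \ref{lem:wvec} and Appendix D.
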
 

\begin{proof} 
We present the proof in Appendix \ref{sec:r3}. 
\end{proof}

From Theorem \ref{thm:r3cohbounds} we see that unlike when $r=2$, we can never hope to achieve the Welch bound with these frames when $r=3$. But this is not a trend, for our frames will again be able to achieve the Welch bound for certain higher values of $r$, including $r = 4$ and $r = 8$. This again relates to the connection with difference sets from \cite{Giannakis}. As a result, the lower bound on coherence in Theorem \ref{thm:r3cohbounds} does not generalize to all values of $r$. Fortunately, the upper bound does:

\begin{theorem}[General $r$] 
Let $n$ be a prime, $m$ a divisor of $n-1$, and $\omega = e^{\frac{2 \pi i}{n}}$. Let $K = \{k_1, ..., k_m\}$ be the unique subgroup of $(\Z/n\Z)^\times$ of size $m$, and set $\U = \diag(\omega^{k_1}, ..., \omega^{k_m}) \in \C^{m \times m}$, $\vv = \frac{1}{\sqrt{m}} [1, ..., 1]^T \in \C^{m \times 1}$, and $\matr{M} = [\vv, \U \vv, ..., \U^{n-1} \vv]$. 

If $r := \frac{n-1}{m}$, then the coherence $\mu$ of $\matr{M}$ satisfies the following upper bound: 
\begin{align} 
\mu & \le \frac{1}{r} \left( (r-1) \sqrt{\frac{1}{m} \left(r + \frac{1}{m}\right)} + \frac{1}{m} \right). 
\end{align} 
\label{thm:coherenceupperbound} 
\end{theorem}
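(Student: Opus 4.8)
The plan is to bound the coherence using only two elementary moment identities for the inner products, an application of the triangle inequality and Cauchy--Schwarz, and a final algebraic simplification that uses the relation $n = mr+1$. Write $c_\ell := \vv^*\U^\ell\vv = \frac{1}{m}\sum_{i=1}^m \omega^{\ell k_i}$ for $\ell = 0,1,\ldots,n-1$, so that $c_0 = 1$ and, since $\norm{\vv}{2} = 1$ and each $\U^\ell$ is unitary, $\mu = \max_{1 \le \ell \le n-1}|c_\ell|$. The map $\ell \mapsto c_\ell$ is constant on cosets of $K$ in $(\Z/n\Z)^\times$, so the values $c_\ell$ with $\ell \ne 0$ comprise (with multiplicity) exactly $r = (n-1)/m$ numbers, one for each coset, each appearing $m$ times; collect them as $d_1,\ldots,d_r$.

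First I would establish the two identities. Because each $k_i$ is a unit modulo the prime $n$, $\omega^{k_i}$ is a primitive $n$th root of unity, so $\sum_{\ell=0}^{n-1}\omega^{\ell k_i} = 0$; summing over $i$ gives $\sum_{\ell=0}^{n-1} c_\ell = 0$, hence $\sum_{\ell=1}^{n-1} c_\ell = -1$ and therefore $\sum_{j=1}^r d_j = -\tfrac{1}{m}$. Similarly, expanding $|c_\ell|^2 = \frac{1}{m^2}\sum_{i,j}\omega^{\ell(k_i-k_j)}$ and summing over $\ell$, the inner sum vanishes unless $k_i \equiv k_j \pmod n$, i.e. unless $i=j$ (the $k_i$ being distinct residues), which gives $\sum_{\ell=0}^{n-1}|c_\ell|^2 = n/m$, hence $\sum_{j=1}^r |d_j|^2 = \frac{n-m}{m^2}$. (This second identity is just the tightness/frame-potential fact underlying Lemma \ref{lem:tightframecohbound}, specialized here.)

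Next, let $d_{j^*}$ be a value of largest modulus, so $|d_{j^*}| = \mu$. Solving the linear identity for $d_{j^*}$ and applying the triangle inequality followed by Cauchy--Schwarz to the remaining $r-1$ terms yields
$$\mu = \Bigl| -\tfrac{1}{m} - \sum_{j \ne j^*} d_j \Bigr| \le \tfrac{1}{m} + \sqrt{(r-1)\sum_{j \ne j^*}|d_j|^2} = \tfrac{1}{m} + \sqrt{(r-1)\Bigl(\tfrac{n-m}{m^2} - \mu^2\Bigr)}.$$
If $\mu \le \tfrac{1}{m}$ the asserted bound holds trivially, since that bound is always at least $\tfrac{1}{m}$ (after clearing denominators this reduces to $\sqrt{n}\ge 1$); so I may assume $\mu > \tfrac{1}{m}$, square, and rearrange into the quadratic inequality $r\mu^2 - \tfrac{2}{m}\mu + \tfrac{1}{m^2} - (r-1)\tfrac{n-m}{m^2} \le 0$. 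The conclusion then follows by taking the larger root.

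The final step is the simplification, which is the one place demanding a little care. Substituting $n = mr+1$, the discriminant of this quadratic collapses to $\tfrac{4(r-1)^2 n}{m^2}$ (using the identity $r(n-m)-1 = (r-1)n$), so its larger root equals $\tfrac{1}{rm}\bigl(1 + (r-1)\sqrt{n}\bigr)$; rewriting $\tfrac{\sqrt n}{m} = \sqrt{\tfrac{mr+1}{m^2}} = \sqrt{\tfrac{1}{m}\bigl(r + \tfrac{1}{m}\bigr)}$ gives exactly $\mu \le \tfrac{1}{r}\bigl((r-1)\sqrt{\tfrac{1}{m}(r+\tfrac{1}{m})} + \tfrac{1}{m}\bigr)$. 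I expect the main obstacle to be conceptual rather than computational: recognizing that the linear constraint $\sum_j d_j = -\tfrac{1}{m}$ — information not used in Lemma \ref{lem:tightframecohbound}, which relied only on the mean of the squared magnitudes — is exactly what is needed, since isolating one extremal value and bounding the other $r-1$ by Cauchy--Schwarz turns the two moment identities into a self-referential quadratic in $\mu$, and the identity $r(n-m)-1 = (r-1)n$ is what makes the bound close into the stated closed form.
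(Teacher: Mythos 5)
Your proof is correct, and it reaches the stated bound by a genuinely different and more elementary route than the paper. The paper's proof rests on Lemma \ref{lem:wvec}, which computes the full discrete Fourier transform $\w = F\cc$ of the vector of the $r$ distinct inner-product values and shows, via a Gauss-sum-type calculation with the translation degrees $a_{t_1 K, t_2 K}$, that $w_1 = -\tfrac{1}{m}$ while every other coefficient has the \emph{same} magnitude $\sqrt{\tfrac{1}{m}(r+\tfrac{1}{m})}$; the bound then follows from one triangle inequality applied to $\cc = \tfrac{1}{r}F^*\w$. You instead use only two aggregate facts --- the first moment $\sum_j d_j = -\tfrac{1}{m}$ (equivalent to the paper's $w_1 = -\tfrac{1}{m}$, i.e.\ Equation (\ref{eqn:ceqn})) and the second moment $\sum_j |d_j|^2 = \tfrac{n-m}{m^2}$ (the frame potential) --- together with the multiplicity-$m$ coset structure from Theorem \ref{thm:nprimealgorithm}, and you close the argument with Cauchy--Schwarz and a quadratic in $\mu$. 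Your algebra checks out: the discriminant simplification via $r(n-m)-1=(r-1)n$ is right, the larger root is exactly the stated bound, the edge case $\mu \le \tfrac{1}{m}$ is handled, and the expression under your square root is nonnegative since $\mu^2$ is one summand of $\sum_j|d_j|^2$. What the paper's heavier machinery buys is reuse: the pointwise statement $|w_i| = \sqrt{\tfrac{1}{m}(r+\tfrac{1}{m})}$ for all $i \ne 1$ (not just its $\ell^2$ aggregate, which is all your argument extracts via Parseval) is exactly what drives the refined bound of Theorem \ref{thm:coherenceupperboundmodd} for $m$ odd and the exact computations for $r=2,3$; your two-moment argument would not yield those. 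Conversely, your route shows that Theorem \ref{thm:coherenceupperbound} itself needs none of the translation-degree apparatus, only tightness, the vanishing sum of roots of unity, and the equal-multiplicity coset structure.
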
 

\begin{proof} 
This theorem will be proved in Appendix \ref{sec:generalr}. 
\end{proof} 

This bound is strictly lower than the one from Lemma~\ref{lem:tightframecohbound}, which applies to all tight frames. In fact, when $n>2$, we can find an even lower bound on the coherence of our frames constructed in Theorem \ref{thm:nprimealgorithm} which surprisingly depends only on whether $m$ is odd: 

\begin{theorem}[$m$ odd] 
Let $n$ be an odd prime, $m$ a divisor of $n-1$, and $\omega = e^{\frac{2 \pi i}{n}}$. Let $K = \{k_1, ..., k_m\}$ be the unique subgroup of $(\Z/n\Z)^\times$ of size $m$, and set $\U = \diag(\omega^{k_1}, ..., \omega^{k_m}) \in \C^{m \times m}$, $\vv = \frac{1}{\sqrt{m}} [1, ..., 1]^T \in \C^{m \times 1}$, and $\matr{M} = [\vv, \U \vv, ..., \U^{n-1} \vv]$. Set $r := \frac{n-1}{m}$. 

If $m$ is odd, then the coherence of $\matr{M}$ is upper-bounded by 
\begin{align} 
\mu & \le \frac{1}{r} \sqrt{\left(\frac{1}{m} + \left(\frac{r}{2} - 1 \right) \beta \right)^2 + \left(\frac{r}{2}\right)^2 \beta^2},  
\end{align} 
where $\beta = \sqrt{\frac{1}{m} \left( r + \frac{1}{m} \right) }$. 
\label{thm:coherenceupperboundmodd} 
\end{theorem}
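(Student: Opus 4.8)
The plan is to reduce the claim to a small moment problem for the $r$ distinct inner‑product values $c_\ell$, the decisive new ingredient being that oddness of $m$ forces those values to occur in complex‑conjugate pairs and, on top of that, to satisfy an extra quadratic identity. As in the proof of Theorem~\ref{thm:nprimealgorithm}, the inner product attached to $\U^\ell$ is $c_\ell=\frac1m\sum_{i=1}^m\omega^{\ell k_i}$, which depends only on the coset $\ell K$, and each of the $r$ distinct values is realized by exactly $m$ of the indices $\ell\in\{1,\dots,n-1\}$. Since $n$ is odd, $-1$ is the unique element of order $2$ in $(\Z/n\Z)^\times$, and it lies in $K$ iff $2\mid m$; hence $m$ odd gives $-1\notin K$, so $-\ell K\neq\ell K$ for every $\ell$. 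Therefore the $r$ cosets (and so $r$ itself) pair up as $\{\ell K,-\ell K\}$, and since $c_{-\ell}=\overline{c_\ell}$ the $r$ distinct inner‑product values are $z_1,\overline{z_1},\dots,z_{r/2},\overline{z_{r/2}}$ for suitable $z_j=x_j+iy_j$; in particular $\mu=\max_j|z_j|$.

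Next I would record three identities for the sums over $\ell=1,\dots,n-1$: (i) $\sum_\ell c_\ell=-1$, since each $k_i$ is coprime to $n$ so $\sum_{\ell=0}^{n-1}\omega^{\ell k_i}=0$; (ii) $\sum_\ell|c_\ell|^2=\frac{n-m}{m}$, by expanding $\sum_{\ell=0}^{n-1}\bigl|\sum_i\omega^{\ell k_i}\bigr|^2=nm$ (the $k_i$ are distinct mod $n$) and subtracting the $\ell=0$ term, equivalently by combining Lemma~\ref{lem:samemultiplicityinnerprods} with the frame‑potential computation inside Lemma~\ref{lem:tightframecohbound}; and (iii) $\sum_\ell c_\ell^2=-1$, obtained from $\sum_{\ell=0}^{n-1}\bigl(\sum_i\omega^{\ell k_i}\bigr)^2=n\cdot\#\{(i,i'):k_i+k_{i'}\equiv0\}$, whose count is $0$ precisely because $-1\notin K$ — this is the second, essential use of $m$ odd. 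Dividing each identity by the common multiplicity $m$ and taking real/imaginary parts gives $\sum_{j} x_j=-\frac1{2m}$, $\sum_j(x_j^2+y_j^2)=\frac{n-m}{2m^2}$, $\sum_j(x_j^2-y_j^2)=-\frac1{2m}$, whence $\sum_j x_j^2=\frac{n-2m}{4m^2}$ and $\sum_j y_j^2=\frac{n}{4m^2}=\frac{\beta^2}{4}$, using $\beta^2=\frac1m(r+\frac1m)=\frac{n}{m^2}$.

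The last step is the extremal estimate. Bound $\mu^2=\max_j(x_j^2+y_j^2)\le\max_j x_j^2+\max_j y_j^2$. The second term is at most $\sum_j y_j^2=\frac1{r^2}\bigl(\frac r2\bigr)^2\beta^2$. For the first, I would solve the elementary problem of maximizing a single coordinate among $p:=r/2$ reals with prescribed sum $S:=-\frac1{2m}$ and prescribed sum of squares $Q:=\frac{n-2m}{4m^2}$: a Lagrange‑multiplier argument (equivalently Cauchy–Schwarz applied to the other $p-1$ coordinates) shows the extreme values are $\frac1p\bigl(S\pm\sqrt{(p-1)(pQ-S^2)}\bigr)$, so, since $S<0$, $\max_j|x_j|\le\frac1p\bigl(|S|+\sqrt{(p-1)(pQ-S^2)}\bigr)$. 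Substituting $n=rm+1$ collapses $pQ-S^2$ to $\frac{(r-2)n}{8m^2}$, hence $\sqrt{(p-1)(pQ-S^2)}=\frac{r-2}{4}\beta$ and $\max_j|x_j|\le\frac1r\bigl(\frac1m+(\frac r2-1)\beta\bigr)$. Adding the two contributions and taking a square root produces exactly the stated bound.

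The mildly delicate points — and where I expect the real work to be — are: justifying that relaxing from the true constraints on the $z_j$ to just these three moment conditions (plus reality) can only inflate the bound; carrying out the one‑coordinate extremal computation together with the $n=rm+1$ simplification cleanly; and observing that the split $\max(x^2+y^2)\le\max x^2+\max y^2$ introduces genuine slack, so, unlike Theorem~\ref{thm:r2innerprods}, the bound is not claimed to be tight (consistent with the strict gap in Theorem~\ref{thm:r3cohbounds}). A useful sanity check is the case $r=2$: then $m=\frac{n-1}{2}$ is odd exactly when $n\equiv3\pmod4$, and the formula reduces to $\frac{\sqrt{n+1}}{2m}=\sqrt{\frac{n-m}{m(n-1)}}$, recovering the equiangular, Welch‑bound‑achieving case of Theorem~\ref{thm:r2innerprods}.
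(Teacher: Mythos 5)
Your argument is correct and arrives at exactly the stated bound, but by a genuinely different route from the paper. The paper works in the Fourier domain of the length-$r$ vector $\cc=[c_1,c_x,\dots,c_{x^{r-1}}]^T$: it first proves (Lemma \ref{lem:wvec}, via the translation degrees $a_{x^sK,x^tK}$) that the transform $\w=F\cc$ satisfies $w_1=-\tfrac1m$ and $|w_i|=\beta$ for every $i\ne1$, then uses $-1\in x^{r/2}K$ to get the conjugate symmetry $w_i^*=(-1)^{i-1}w_{r-i+2}$ (Lemma \ref{lem:cstarwstar}), and finally bounds each $|c_{x^{i-1}}|$ by pairing conjugate terms in the inverse transform and applying the triangle inequality separately to the real contributions (odd $j$) and imaginary contributions (even $j$), with the parity count $n_e,n_o$ producing the coefficients $\tfrac{r}{2}-1$ and $\tfrac{r}{2}$. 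You instead stay entirely on the $c_\ell$ side: the three power sums $\sum c_\ell=-1$, $\sum|c_\ell|^2=\tfrac{n-m}{m}$, and $\sum c_\ell^2=-1$ (the last being your essential use of $-1\notin K$, since $k_i+k_{i'}\equiv 0$ has no solutions in $K\times K$), combined with the conjugate pairing of cosets, reduce everything to prescribing the first and second moments of the $r/2$ real parts and the second moment of the imaginary parts; the one-coordinate extremal problem and the simplification $pQ-S^2=\tfrac{(r-2)n}{8m^2}$ under $n=rm+1$ then reproduce the two terms of the bound exactly, and the split $\max_j(x_j^2+y_j^2)\le\max_j x_j^2+\max_j y_j^2$ accounts for the same slack the paper incurs. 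What your version buys is self-containedness and economy: it needs only orthogonality-of-characters counts and avoids the translation-degree computations behind Lemma \ref{lem:wvec}. What the paper's version buys is reusability: the per-coefficient information $|w_i|=\beta$ is strictly stronger than your aggregate second-moment identities and is the same lemma that drives Theorem \ref{thm:coherenceupperbound} for general $r$ and Corollary \ref{cor:gendihedralcoherenceupperbounds}. Your closing sanity check at $r=2$ (recovering $\sqrt{\tfrac{n-m}{m(n-1)}}$ when $n\equiv3\bmod4$) is a nice confirmation that the bound is tight in the equiangular case; the one point you flag as delicate --- that relaxing to the moment constraints can only weaken the bound --- is actually immediate, since any upper bound valid over a superset of the feasible configurations is valid over the true one.
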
 

\begin{proof} 
We delay the proof of this theorem until Appendix \ref{sec:generalr}. 
\end{proof}

It is worth noting that this latter bound has no analog in the $r = 3$ situation because $m$ must always be even in that case. We explain the reason for this in the sequel, and give an alternate classification for exactly when this latter coherence bound applies. We illustrate the upper and lower bounds for $r=3$ in Figure \ref{fig:r3cohbounds} and the two upper upper bounds from Theorem \ref{thm:coherenceupperboundmodd} for when $r = 4$. When $r = 4$, we can also derive different lower bounds on the coherence for when $m$ is even or odd, and together with the two upper bounds from the theorems they form two non-overlapping regions in which the coherences can fall in the graph. While these regions will exist for every $r$, they will sometimes overlap (that is, the lower bound on coherence for $m$ even could be less than the upper bound for $m$ odd).

\begin{figure}
\begin{center}
 \includegraphics[height = 220pt, width=260pt]{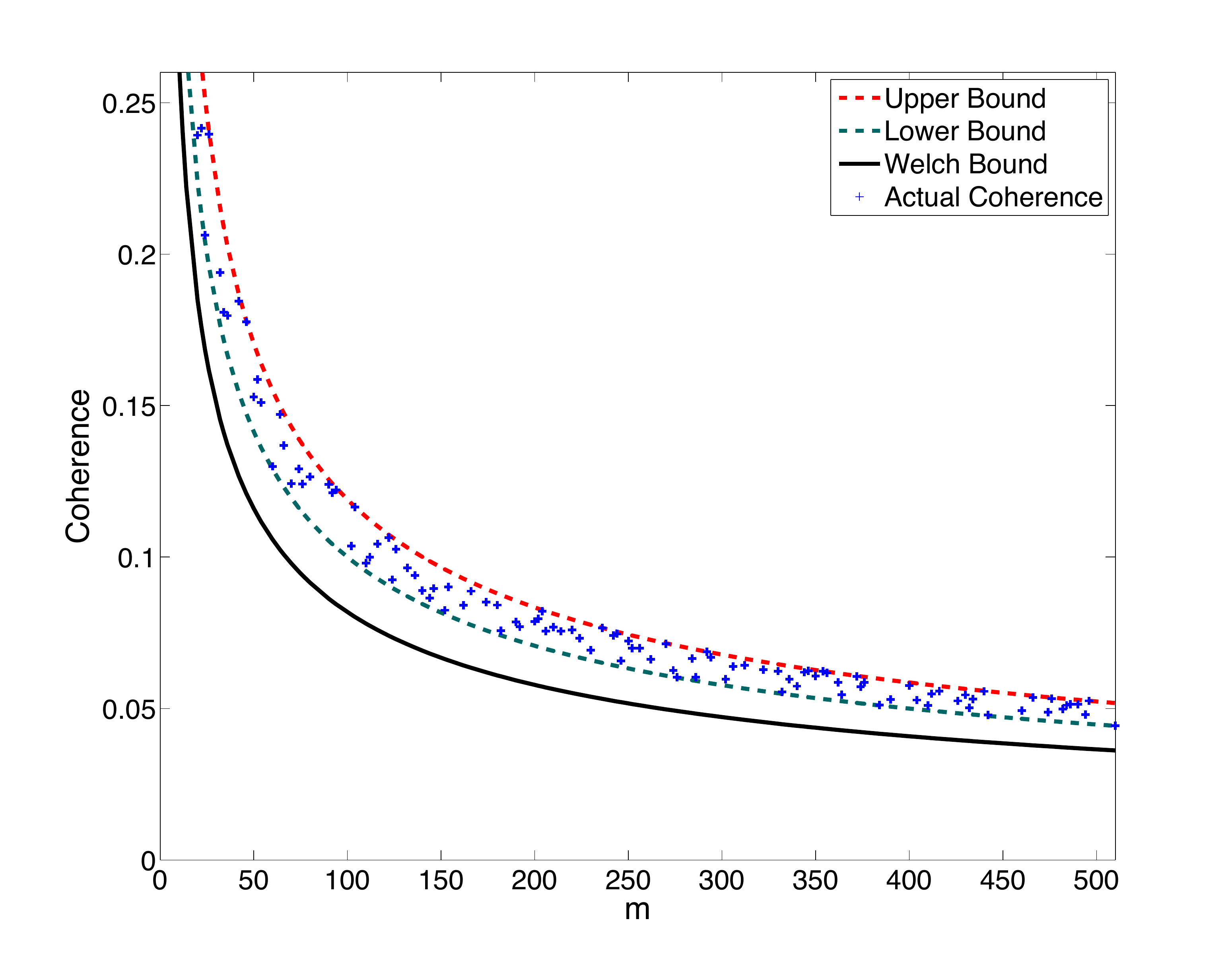}
 \caption{The upper and lower bounds on coherence for $r = 3$. }
 \label{fig:r3cohbounds}
\end{center}
\end{figure}

\begin{figure}
\begin{center}
\includegraphics[height = 220pt, width=260pt]{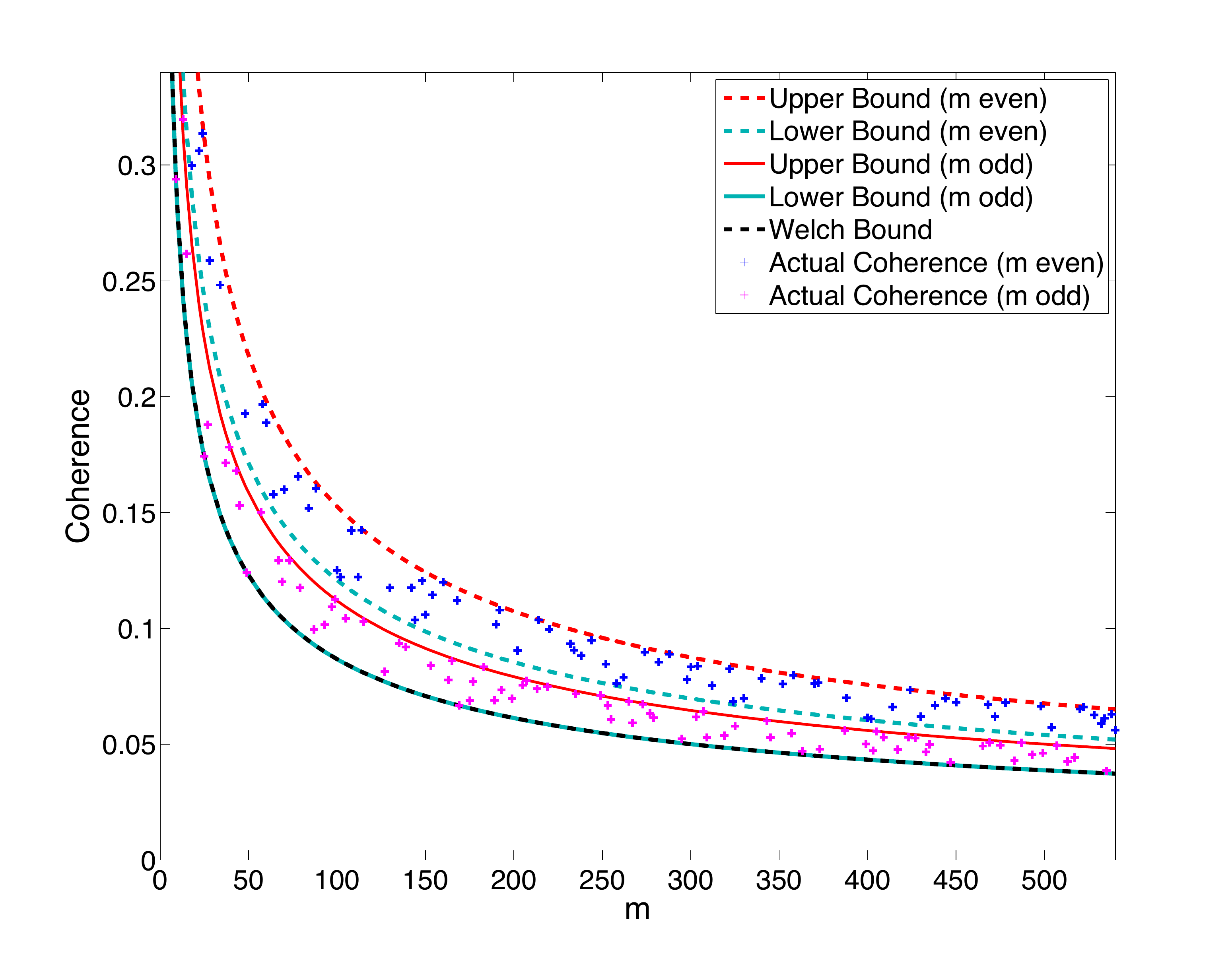}
\caption{The upper and lower bounds on coherence for $r = 4$. }
 \label{fig:r4cohbounds}
\end{center}
\end{figure}

\section{Generalized Dihedral Groups} 
Rather than dwell on clever constructions of general abelian groups, let us instead investigate what changes when $\mathcal{U}$ is nonabelian. In this case the irreducible representations at our disposal will no longer all be one-dimensional, so we will no longer have all the matrices $\U_i$ be simultaneously diagonal. Consequently, it is no longer clear that we can restrict our vector $\vv$ to be real-valued. 

One simple class of nonabelian groups is that of semidirect products of cyclic groups. On this note, consider the following group presentation (which arises in \cite{Hassibi}): 
\begin{align} 
G_{n, r} = \langle \sigma, \tau~|~ \sigma^n = 1, \tau^D = 1, \tau \sigma \tau^{-1} = \sigma^r \rangle. \label{eqn:Gnr} 
\end{align} 
Here, $n$ and $r$ are relatively prime integers, and $D$ is the multiplicative order of $r$ modulo $n$. $G_{n, r}$ is precisely a semidirect product in the form $\frac{\Z}{n\Z} \rtimes \frac{\Z}{D \Z}$, and if we take $D = 2$ and $r = n-1$, we see that we obtain the familiar dihedral group $D_{2n}$. 

There are $n \cdot D$ group elements in $G_{n, r}$, each of which can be written in the form $\sigma^a \tau^b$ for some integers $0 \le a < n$ and $0 \le b < D$. $G_{n, r}$ has an irreducible representation in the form 
\begin{align} 
\sigma \mapsto \matr{S} & := \begin{bmatrix} \omega & &  &\\ & \omega^{r} & & \\ & & \ddots & \\ & & & \omega^{r^{D - 1}} \end{bmatrix} \in \C^{D \times D},\\  
\tau \mapsto \matr{T} & := \begin{bmatrix} & 1 &  & & \\ & & 1 & & \\ & & &\ddots &  \\ & & & & 1 \\ 1 & & & &  \end{bmatrix} \in \C^{D \times D}, \label{eqn:sigmataueqn1} 
\end{align} 
where $\omega = e^\frac{2 \pi i}{n}$ (see again \cite{Hassibi}). The informed reader might note that this representation is quite similar to that of Heisenberg groups, which have been extensively applied to the construction of frames \cite{BosWaldron, Khatirinejad, RenesBlumeKohoutScottCaves, ScottGrassi}. Our following methods can be conceivably adjusted for use with Heisenberg frames as well. 

In order to construct our frames, we would like to follow the example of our previous construction in Theorem \ref{thm:nprimealgorithm} by selecting a representation for $G_{n, r}$ of the form 
\begin{align} 
\sigma \mapsto [\sigma] := \begin{bmatrix} \matr{S}^{k_1} & & \\ & \ddots & \\ & & \matr{S}^{k_m} \end{bmatrix}, ~ \tau \mapsto [\tau] := \begin{bmatrix} \matr{T} & &  \\ & \ddots & \\  & & \matr{T} \end{bmatrix}, \label{eqn:sigmataueqn2} 
\end{align} 
where $m$ and the $k_i$ are cleverly chosen integers. Then we will select a vector $\vv \in \C^{Dm \times 1}$ and take our frame to be the columns of the matrix $\matr{M} := [ \hdots [\sigma]^a [\tau]^b \vv \hdots]_{0 \le a < n, ~ 0 \le b < D}$. We must require that the greatest common divisor between the $k_i$ is relatively prime to $n$ in order for the columns to be distinct, and again we satisfy this by taking $n$ to be prime. Note that in our above notation, this will be a $Dm$-dimensional representation, so our resulting frame matrices will have dimensions $Dm \times Dn$. 

At this point, we can see that in order to minimize coherence we must deviate from our original construction, for if we were to set $\vv$ to the vector $\matr{1}$ of all ones it would be fixed by $[\tau]^b$ for any $b$, and the inner product corresponding to $[\tau]^b$ would be 1. We must therefore be more clever in how we construct $\vv$. A natural form for $\vv$ would be to find some $D$-dimensional vector $\w = [w_1, ..., w_D]^T \in \C^{D \times 1}$ and set $\vv$ equal to the periodic vector $\vv = \begin{bmatrix} \w^T & \w^T & ...& \w^T \end{bmatrix}^T \in \C^{Dm \times 1}$. The question now becomes how to choose $\w$? 

In order to preserve as much of the structure from our previous construction as possible, we would like each entry of $\w$ to have the same norm. This will ensure that the inner products corresponding to the elements $[\sigma]^a$ will have the same values as those in our previous construction from Theorem \ref{thm:nprimealgorithm} corresponding to when $\mathcal{U}$ was the cyclic group $\Z/n\Z$ generated by $[\sigma]$. Let us require that $w_d$ be unit norm for each $d$, and consider attempting to force $\w$ to satisfy the constraint that 
\begin{align} 
\w^* \matr{T}^b \w = \sum_d w^*_d w_{d+b} = 0, ~ \forall b 
\end{align} 
where the indices are taken modulo $D$. It turns out that we can satisfy all our restrictions on $\w$ by selecting its indices to form a \textit{Zadoff-Chu} (ZC) sequence \cite{Chu, Kahn}: 

\begin{align} 
w_d = 
\begin{cases} 
e^\frac{i \pi d^2}{D}, & \text{ if $D$ is even} \\ 
e^\frac{i \pi d(d+1)}{D}, & \text{ if $D$ is odd} 
\end{cases} 
\label{eqn:ZCsequence}
\end{align} 

This is a well-known constant amplitude zero autocorrelation (CAZAC) sequence. Our frame elements will now take the form 
\begin{align} 
[\sigma]^a [\tau]^b \vv = \begin{bmatrix} \matr{S}^{a k_1} \w_{d+b} \\ \vdots \\ \matr{S}^{a k_m} \w_{d+b} \end{bmatrix}, 
\end{align}
where $\w_{d + b} = \matr{T}^b \w$ denotes the vector obtained by cyclically shifting the entries of $\w$ by $b$ positions. Thus, as the notation would suggest, the $d^{th}$ entry of $\w_{d + b}$ is $w_{d+b}$. (Note that by this notation, $\w_d$ is simply $\w$ itself). Our inner products will take the form 
\begin{align} 
\frac{\vv^* [\sigma]^a [\tau]^b \vv}{||\vv||_2^2} = \frac{1}{m \cdot D} \sum_{j = 1}^m \w_d^* \matr{ S}^{a k_j} \w_{d + b}. \label{eqn:dihedralinnerprod} 
\end{align} 

Our new frames remain tight:

\begin{theorem} 
Let $n$ and $r$ be relatively prime integers, and $D$ the order of $r$ modulo $n$. Let $[\sigma] \in \C^{Dm \times Dm}$ and $[\tau] \in \C^{Dm \times Dm}$ be the generating matrices for $G_{n, r}$ defined in (\ref{eqn:sigmataueqn1}) and (\ref{eqn:sigmataueqn2}). If $\w = [w_1, ..., w_D]^T \in \C^{D \times 1}$ is a ZC-sequence (\ref{eqn:ZCsequence}), and $\vv = \begin{bmatrix} \w^T & ... & \w^T \end{bmatrix}^T \in \C^{Dm \times 1}$, then the the columns of $\M = \begin{bmatrix} \hdots & [\sigma]^a [\tau]^b \vv & \hdots \end{bmatrix} \in \C^{Dm \times Dn}$ form a tight frame. 
\end{theorem}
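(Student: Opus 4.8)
\emph{Proof sketch.} The plan is to verify the tight-frame condition directly in the form $\M\M^* = nD\,\matr{I}_{Dm}$, exploiting the block structure of $[\sigma]$ and $[\tau]$. Write
\begin{align}
\M\M^* = \sum_{a=0}^{n-1}\sum_{b=0}^{D-1}\big([\sigma]^a[\tau]^b\vv\big)\big([\sigma]^a[\tau]^b\vv\big)^*.
\end{align}
Since $[\sigma]^a[\tau]^b = \diag(\matr{S}^{ak_1}\matr{T}^b,\ldots,\matr{S}^{ak_m}\matr{T}^b)$ and $\vv$ is $m$ stacked copies of $\w$, the vector $[\sigma]^a[\tau]^b\vv$ has $j$-th block of length $D$ equal to $\matr{S}^{ak_j}\matr{T}^b\w$ (cf.\ the display preceding (\ref{eqn:dihedralinnerprod})). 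Hence $\M\M^*$ is naturally a $Dm\times Dm$ matrix whose $(j,j')$ block of size $D\times D$ is
\begin{align}
\matr{B}_{j,j'} = \sum_{a=0}^{n-1}\sum_{b=0}^{D-1}\matr{S}^{ak_j}\matr{T}^b\,\w\w^*\,\matr{T}^{-b}\matr{S}^{-ak_{j'}},
\end{align}
using $\matr{T}^{-b}=(\matr{T}^b)^*$ and $\matr{S}^{-c}=(\matr{S}^c)^*$ ($\matr{S}$ and $\matr{T}$ are both unitary). It therefore suffices to show $\matr{B}_{j,j}=nD\,\matr{I}_D$ for every $j$ and $\matr{B}_{j,j'}=\matr{0}$ for $j\neq j'$. (Equivalently, one may note that $\M\M^*$ is the frame operator of the group frame $\{[\sigma]^a[\tau]^b\vv\}$ and so commutes with the representation (\ref{eqn:sigmataueqn2}); but the direct computation below is self-contained.)

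The crucial step is the inner sum over $b$: I claim that
\begin{align}
\sum_{b=0}^{D-1}\matr{T}^b\,\w\w^*\,\matr{T}^{-b} = D\,\matr{I}_D.
\end{align}
Indeed, since $(\matr{T}^b\w)_d = w_{d+b}$ with indices read modulo $D$, the $(d,e)$ entry of the left-hand side is $\sum_{b}w_{d+b}\overline{w_{e+b}}$, which is the periodic autocorrelation of $\w$ at lag $e-d$. Because $\w$ is a Zadoff--Chu sequence (\ref{eqn:ZCsequence}), it is a CAZAC sequence: $|w_d|=1$ for all $d$, so the lag-$0$ term equals $D$, while all nonzero-lag autocorrelations vanish --- this is precisely the family of constraints $\w^*\matr{T}^b\w=0$ for $b\neq 0$ already recorded before (\ref{eqn:ZCsequence}). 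This proves the claim, and is exactly the property for which the ZC sequence was introduced.

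Substituting the claim into $\matr{B}_{j,j'}$ collapses the double sum to a single sum over $a$. For $j=j'$ we get $\matr{S}^{ak_j}(D\,\matr{I}_D)\matr{S}^{-ak_j}=D\,\matr{I}_D$, whence $\matr{B}_{j,j}=nD\,\matr{I}_D$. For $j\neq j'$ we get $\matr{B}_{j,j'}=D\sum_{a=0}^{n-1}\matr{S}^{a(k_j-k_{j'})}$; since $\matr{S}=\diag(\omega,\omega^r,\ldots,\omega^{r^{D-1}})$ with $\omega=e^{2\pi i/n}$ primitive and $n$ prime, and since the $k_i$ are distinct modulo $n$ while $r$ is invertible modulo $n$, each exponent $r^\ell(k_j-k_{j'})$ is nonzero modulo $n$, so every diagonal entry of the sum is a full geometric sum $\sum_{a=0}^{n-1}\big(\omega^{r^\ell(k_j-k_{j'})}\big)^a=0$ --- the same cancellation used in the proof of Theorem~\ref{thm:nprimealgorithm}. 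Hence $\matr{B}_{j,j'}=\matr{0}$ and $\M\M^*=nD\,\matr{I}_{Dm}$. As $\|\vv\|_2^2 = mD$, this is the tight-frame identity with $A=B=nD$ (equivalently, normalizing the columns gives a unit-norm tight frame of $Dn$ vectors in $\C^{Dm}$ with frame constant $Dn/Dm=n/m$).

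I expect the only non-routine ingredient to be the matrix identity $\sum_b\matr{T}^b\w\w^*\matr{T}^{-b}=D\,\matr{I}_D$, i.e.\ that \emph{all} nonzero periodic autocorrelations of the Zadoff--Chu sequence vanish; the $a$-summation is just the standard primitive-root geometric-series argument. Two small points need care: the index arithmetic in $\w$ must be read modulo $D$ so that it matches the cyclic action of $\matr{T}$, and the off-diagonal cancellation genuinely uses that the $k_i$ are pairwise distinct modulo $n$ (guaranteed, as in Theorem~\ref{thm:nprimealgorithm}, by the construction of the $k_i$ together with $n$ prime).
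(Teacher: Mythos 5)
Your proof is correct and is precisely the ``direct calculation'' that the paper asserts but does not write out: the block decomposition of $\M\M^*$ into $D\times D$ blocks $\matr{B}_{j,j'}$, the identity $\sum_b \matr{T}^b\w\w^*\matr{T}^{-b}=D\,\matr{I}_D$ coming from the vanishing nonzero-lag periodic autocorrelations of the Zadoff--Chu sequence, and the geometric-series cancellation over $a$ (which needs only that the $k_i$ are distinct modulo $n$ and $\gcd(r,n)=1$, not primality of $n$) are exactly the right ingredients, and the bookkeeping checks out, including the frame constant $n/m$ after normalization. The paper's stated alternative is to invoke Theorem 5.4 of \cite{CasazzaKutyniok} (equal-dimensional irreducible subrepresentations with equal-norm components of $\vv$), but your self-contained computation matches the route the paper names first.
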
 

\begin{proof} 
This result follows from a direct calculation, but can also be deduced from Theorem 5.4 of \cite{CasazzaKutyniok} since all the representations are of the same dimension and the corresponding components $\w_d$ of $\vv$ all have the same norm.  
\end{proof} 

Exploiting the properties of our construction, we can bound the coherence of our new frames by that of our original frames arising from representations of cyclic groups. 

\begin{theorem} 
Let $n$ be an integer, and $k_1, ..., k_m$ distinct integers modulo $n$ whose greatest common divisor is relatively prime to $n$. Take $r$ an integer relatively prime to $n$, and $D$ the multiplicative order of $r$ modulo $n$. Set $\omega = e^{\frac{2 \pi i}{n}}$. Consider the two frames: 
\begin{enumerate} 
\item The columns of the ``cyclic frame'' $\matr{M}_1 = [\vv_1, \U \vv_1, \hdots, \U^{n-1} \vv_1] \in \C^{m \times n}$ where $\U = \diag(\omega^{k_1}, ..., \omega^{k_m}) \in \C^{m \times m}$ and $\vv_1 = [1, ..., 1]^T \in \C^{m \times 1}$.  

\item The columns of the ``generalized dihedral frame'' $\matr{M}_2 = \begin{bmatrix} \hdots & [\sigma]^a [\tau]^b \vv_2 & \hdots \end{bmatrix} \in \C^{Dm \times Dn}$ where $\vv_2 = \begin{bmatrix} \w^T & ... & \w^T \end{bmatrix}^T \in \C^{Dm \times 1}$ and $\w = [w_1, ..., w_D]^T \in \C^{D \times 1}$ is a ZC-sequence. 
\end{enumerate} 
If $\mu^{cyc}_K$ is the coherence of the cyclic frame $\matr{M}_1$ and $\mu^{D}_K$ the coherence of the generalized dihedral frame $\matr{M}_2$, then $\mu^{D}_K \le \mu^{cyc}_K$. 
\label{thm:gendihedralcyclicbound} 
\end{theorem}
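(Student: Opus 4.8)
The plan is to show that every inner product value arising in the generalized dihedral frame $\matr{M}_2$ is, in absolute value, a weighted average of absolute values of inner products arising in the cyclic frame $\matr{M}_1$; since a weighted average of numbers cannot exceed their maximum, this immediately gives $\mu^D_K \le \mu^{cyc}_K$. The starting point is the inner product formula (\ref{eqn:dihedralinnerprod}): the normalized inner product of two columns of $\matr{M}_2$ indexed by the group element $[\sigma]^a[\tau]^b$ equals $\frac{1}{mD}\sum_{j=1}^m \w_d^* \matr{S}^{a k_j} \w_{d+b}$. The key structural observation is that $\w_d^* \matr{S}^{a k_j} \w_{d+b}$ splits as a product: the ZC/CAZAC structure of $\w$ and the diagonal form of $\matr{S}$ let one factor this as (a scalar depending only on $a k_j$ through the Fourier-type exponential sum that defines the cyclic inner product) times (a scalar depending only on $b$, coming from the autocorrelation of the ZC sequence), with the $b$-dependent factor having modulus $1$ when $b \ne 0$ and the sum vanishing entirely — actually, more carefully, the ZC sequence is chosen precisely so that $\w^* \matr{T}^b \w = 0$ for $b\ne 0$, so the $b\neq 0$ columns need separate handling from the $b=0$ columns.

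First I would treat the case $b = 0$ separately: there $[\sigma]^a[\tau]^0 \vv_2 = [\sigma]^a \vv_2$, and the block-diagonal/periodic structure makes the normalized inner product equal exactly to $\frac{1}{m}\sum_{j=1}^m \omega^{a k_j}$ divided by the appropriate norm — i.e.\ literally one of the cyclic frame inner products $\vv_1^* \U^a \vv_1 / \|\vv_1\|^2$ (the $\w$-factor contributes $\|\w\|^2 = D$ in numerator and denominator and cancels). So for $b=0$ the dihedral inner products are a \emph{subset} of the cyclic ones, already bounded by $\mu^{cyc}_K$. Then for $b \ne 0$, I would expand $\w_d^* \matr{S}^{a k_j}\w_{d+b} = \sum_{d=1}^D \overline{w_d}\, \omega^{r^{d-1} a k_j}\, w_{d+b}$ and use the explicit ZC formula (\ref{eqn:ZCsequence}) to write $\overline{w_d} w_{d+b}$ as $w_b'$ times a pure phase of the form $\zeta^{d}$ for an appropriate root of unity $\zeta = \zeta(b)$ (this is the standard "shift-and-multiply gives a linear-phase chirp" identity for CAZAC sequences). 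Combining, $\sum_j \w_d^*\matr{S}^{ak_j}\w_{d+b} = \sum_j \big(\sum_d \zeta^d \omega^{r^{d-1}ak_j}\big)$; the inner sum over $d$ has modulus at most... and here is the crux.

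The main obstacle I anticipate is controlling the inner sum $\sum_{d=1}^D \zeta^d \omega^{r^{d-1} a k_j}$ and relating it back to the cyclic inner products rather than merely bounding it crudely. The cleanest route is to recognize that the whole double sum $\frac{1}{mD}\sum_{j,d}$ can be reorganized so that, for fixed $b\ne 0$, the dihedral inner product equals a \emph{convex combination} $\sum_{t} \lambda_t \, (\text{cyclic inner product indexed by } t)$ where the weights $\lambda_t \ge 0$ sum to $1$ — the weights being determined by how the index $r^{d-1} a k_j$ ranges over cosets as $d$ and $j$ vary, together with the unimodular phases from $\w$ which, after taking absolute values, satisfy $|\sum \lambda_t c_t| \le \sum \lambda_t |c_t| \le \max_t |c_t| = \mu^{cyc}_K$. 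Making the coset bookkeeping precise — that multiplication by the powers $r^{d-1}$ permutes the relevant index set compatibly with the subgroup $K$ (which it does, since $r$ is a unit mod $n$ and $K$ is a subgroup) — is the technical heart; once that combinatorial identity is in place the inequality $\mu^D_K \le \mu^{cyc}_K$ follows by the triangle inequality. I would present the $b=0$ reduction, then the ZC phase identity, then the coset/convex-combination argument, and close with the triangle-inequality bound.
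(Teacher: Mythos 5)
Your proposal is correct and follows essentially the same route as the paper: swap the order of summation so that the inner sum over $j$ (for fixed $d$) is $m$ times a cyclic inner product indexed by $a' = a r^{d-1}$, then apply the triangle inequality over $d$ using only $|w_d^* w_{d+b}| = 1$. The ZC linear-phase chirp identity and the coset bookkeeping you anticipate as the ``technical heart'' are unnecessary---unimodularity of the weights alone suffices, exactly as in the paper's proof.
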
 

\begin{proof} 
From (\ref{eqn:dihedralinnerprod}), we see that the inner products for the generalized dihedral representation will take the form 
\begin{align} 
\frac{\vv^* [\sigma]^a [\tau]^b \vv}{||\vv||_2^2}  & = \frac{1}{m \cdot D} \sum_{k \in K} \sum_{d} w_d^* w_{d + b}\omega^{k a r^{d-1}} \\ 
& = \frac{1}{m \cdot D}\sum_{d} w_d^* w_{d + b} \sum_{k \in K} \omega^{k a r^{d-1}} \\ 
& = \frac{1}{m \cdot D}\sum_{d} w_d^* w_{d + b} \sum_{k \in K} \omega^{k a'}, \label{eqn:gendihedralcosetinnerprods} 
\end{align} 
where $a' = a r^{d-1}$. Furthermore, 
\begin{align} 
\frac{|\vv^* [\sigma]^a [\tau]^b \vv|}{||\vv||_2^2} & \le  \frac{1}{m \cdot D}\sum_{d} \left| w_d^* w_{d + b} \sum_{k \in K} \omega^{k a'} \right| \\ 
& = \frac{1}{m \cdot D}\sum_{d} \left|\sum_{k \in K} \omega^{k a'} \right| \\ 
& \le \frac{1}{m \cdot D} \sum_{d} m \mu^{cyc}_K = \mu^{cyc}_K, 
\end{align} 
so $\mu^{D}_K \le \mu^{cyc}_K$. 
\end{proof}

Theorem \ref{thm:gendihedralcyclicbound} allows us to bound the coherence of our generalized dihedral frames using the same bounds from Theorems \ref{thm:coherenceupperbound} and \ref{thm:coherenceupperboundmodd}: 

\begin{corollary} 
Let $n$ be a prime and $m$ a divisor of $n-1$, and let $K = \{k_1, ..., k_m\}$ be the unique subgroup of $(\Z/n\Z)^\times$ of size $m$. Set $s = \frac{n-1}{m}$. Take $r$ an integer relatively prime to $n$, and $D$ the multiplicative order of $r$ modulo $n$. 

Let $[\sigma] \in \C^{Dm \times Dm}$ and $[\tau] \in \C^{Dm \times Dm}$ be the generating matrices for $G_{n, r}$ defined in (\ref{eqn:sigmataueqn1}) and (\ref{eqn:sigmataueqn2}). If $\w = [w_1, ..., w_D]^T \in \C^{D \times 1}$ is a ZC-sequence (\ref{eqn:ZCsequence}), and $\vv = \begin{bmatrix} \w^T & ... & \w^T \end{bmatrix}^T \in \C^{Dm \times 1}$, then the the columns of $\M = \begin{bmatrix} \hdots & [\sigma]^a [\tau]^b \vv & \hdots \end{bmatrix} \in \C^{Dm \times Dn}$ have at most $D \cdot \frac{n-1}{m}$ distinct inner product values between them, and the coherence $\mu$ of $\matr{M}$ is bounded by 
\begin{align} 
\mu & \le \frac{1}{s} \left( (s-1) \sqrt{\frac{1}{m} \left(s + \frac{1}{m}\right)} + \frac{1}{m} \right). \label{eqn:gendihedralcycbound1} 
\end{align} 

If $m$ is odd, then the coherence of $\matr{M}$ is upper-bounded by 
\begin{align} 
\mu & \le \frac{1}{s} \sqrt{\left(\frac{1}{m} + \left(\frac{s}{2} - 1 \right) \beta \right)^2 + \left(\frac{s}{2}\right)^2 \beta^2},  \label{eqn:gendihedralcycbound2} 
\end{align} 
where $\beta = \sqrt{\frac{1}{m} \left( s + \frac{1}{m} \right) }$. 

\label{cor:gendihedralcoherenceupperbounds} 
\end{corollary}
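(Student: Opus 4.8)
The plan is to obtain this corollary for free by chaining three results already in hand: Theorem~\ref{thm:gendihedralcyclicbound}, which bounds a generalized dihedral frame's coherence by that of an associated cyclic frame, and Theorems~\ref{thm:coherenceupperbound} and~\ref{thm:coherenceupperboundmodd}, which bound the coherence of exactly that kind of cyclic frame. Concretely, first I would verify that the hypotheses of Theorem~\ref{thm:gendihedralcyclicbound} hold here: $n$ is prime and $K\subset(\Z/n\Z)^\times$, so the $k_i$ are distinct modulo $n$ and (since at least one is a nonzero residue mod the prime $n$) their gcd is coprime to $n$; $r$ is coprime to $n$ by assumption; and $\w$ is a ZC-sequence with $\vv=[\w^T,\dots,\w^T]^T$, exactly the setup of that theorem. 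It then yields $\mu = \mu^{D}_{K}\le\mu^{cyc}_{K}$, where $\mu^{cyc}_{K}$ is the coherence of the cyclic frame $\matr{M}_1=[\vv_1,\U\vv_1,\dots,\U^{n-1}\vv_1]$ with $\U=\diag(\omega^{k_1},\dots,\omega^{k_m})$.

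Next I would observe that $\matr{M}_1$ is, up to the irrelevant scaling of $\vv_1$, precisely the cyclic-group frame of Theorem~\ref{thm:nprimealgorithm} for the same prime $n$, the same divisor $m$ of $n-1$, and index $\tfrac{n-1}{m}$ — which is the quantity the corollary calls $s$ (renamed to avoid clashing with the $r$ of $G_{n,r}$). Hence Theorem~\ref{thm:coherenceupperbound}, read with its parameter ``$r$'' set equal to $s$, gives
\[
\mu^{cyc}_{K}\le \frac{1}{s}\left((s-1)\sqrt{\frac{1}{m}\left(s+\frac{1}{m}\right)}+\frac{1}{m}\right),
\]
and combining with $\mu\le\mu^{cyc}_{K}$ yields (\ref{eqn:gendihedralcycbound1}). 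If in addition $m$ is odd, Theorem~\ref{thm:coherenceupperboundmodd} (again with its ``$r$'' equal to $s$) gives the sharper estimate $\mu^{cyc}_{K}\le \frac{1}{s}\sqrt{\left(\frac{1}{m}+\left(\frac{s}{2}-1\right)\beta\right)^2+\left(\frac{s}{2}\right)^2\beta^2}$ with $\beta=\sqrt{\frac{1}{m}(s+\frac{1}{m})}$, and chaining once more gives (\ref{eqn:gendihedralcycbound2}). For the count of distinct inner products: by (\ref{eqn:gendihedralcosetinnerprods}) the inner product attached to $\sigma^a\tau^b$ is $\frac{1}{mD}\sum_d w_d^*w_{d+b}\sum_{k\in K}\omega^{kar^{d-1}}$, and for each fixed $d$ the inner sum depends only on the coset $ar^{d-1}K$, hence only on $aK$; as in the proof of Theorem~\ref{thm:nprimealgorithm} there are $\tfrac{n-1}{m}$ such cosets, while the $a=0$ elements contribute only the value $0$ by the ZC zero-autocorrelation property. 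Running this over the $D$ choices of $b$ gives at most $D\cdot\tfrac{n-1}{m}$ distinct values.

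The main point is that there is essentially no analytic obstacle here; the substance already went into Theorems~\ref{thm:coherenceupperbound}, \ref{thm:coherenceupperboundmodd}, and \ref{thm:gendihedralcyclicbound}. The only thing to be careful about is bookkeeping — correctly matching the symbol $s=\tfrac{n-1}{m}$ in the corollary with the symbol ``$r$'' used in the two cyclic coherence theorems, confirming the ZC/tightness hypotheses so that Theorem~\ref{thm:gendihedralcyclicbound} genuinely applies to this particular $K$, and making sure the cyclic comparison frame $\matr{M}_1$ is literally the frame of Theorem~\ref{thm:nprimealgorithm} whose coherence those theorems bound. Once these identifications are in place the two displayed inequalities and the inner-product count follow immediately.
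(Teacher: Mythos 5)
Your proposal is correct and follows essentially the same route as the paper: the paper likewise reads off the count of at most $D\cdot\frac{n-1}{m}$ distinct inner products from the coset structure in (\ref{eqn:gendihedralcosetinnerprods}) and then obtains both coherence bounds by chaining Theorem~\ref{thm:gendihedralcyclicbound} with Theorems~\ref{thm:coherenceupperbound} and~\ref{thm:coherenceupperboundmodd}. Your extra bookkeeping (matching $s$ with the ``$r$'' of the cyclic theorems, checking the gcd and ZC hypotheses, and handling $a=0$) only makes explicit what the paper leaves implicit.
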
 

\begin{proof} 
From (\ref{eqn:gendihedralcosetinnerprods}), we can write out the inner product corresponding to the group element  $\sigma^a \tau^b$ in the form 
\begin{align} 
\frac{\vv^* [\sigma]^a [\tau]^b \vv}{||\vv||_2^2} = \frac{1}{m \cdot D}\sum_{d} w_d^* w_{d + b} \sum_{k \in K} \omega^{k a'},  
\end{align} 
where $a' = a r^{d-1}$. In this form, we see that for each value of $d$ in the summation, there are $\frac{n-1}{m}$ possible distinct inner product values associated to the different cosets $a' K$, so there are at most $D\frac{n-1}{m}$ possible values. The last two bounds (\ref{eqn:gendihedralcycbound1}) and (\ref{eqn:gendihedralcycbound2}) follow from Theorem \ref{thm:gendihedralcyclicbound} and the bounds given from Theorems \ref{thm:coherenceupperbound} and \ref{thm:coherenceupperboundmodd}. 
\end{proof}

In the case of regular dihedral groups ($D = 2$), our $\w$ becomes $[1, i]^T$, and we can readily calculate our inner products to be 
\begin{footnotesize}
\begin{align} 
\nonumber  \frac{\vv^* [\sigma]^\ell \vv}{||\vv||_2^2}    = \operatorname{Re} \left( \frac{1}{m} \sum_{j = 1}^m \omega^{\ell k_j}\right)~,~ \frac{\vv^* [\sigma]^\ell [\tau] \vv}{||\vv||_2^2}  =  \operatorname{Im} \left( -\frac{1}{m} \sum_{j = 1}^m  \omega^{\ell k_j} \right). 
\end{align} 
\end{footnotesize}

As we can clearly see, each of these has magnitude bounded by that of the corresponding inner product in the cyclic counterpart, $\left|\frac{1}{m} \sum_{j = 1}^m  \omega^{\ell k_j}\right|$. In general, the dihedral coherence could be substantially smaller than the corresponding cyclic coherence. Most importantly, by extending to generalized dihedral groups, we allow for frame matrices $\M$ with a greater variety of dimensions. In particular, the number of columns ($nD$) no longer need be prime.

\section{Conclusion} 
We have presented a method to select a set of representations of a finite cyclic group to construct tight, unit-norm group frames  such that the frame elements take on very few distinct pairwise inner product values. Our construction ensures that each such inner product value arises the same number of times, allowing us to derive upper bounds on the coherence of the frames which approach the Welch lower bound. In certain cases, our construction has yielded instances of previously known tight, equiangular frames which achieve the Welch bound. We have then demonstrated how our method can be applied to constructing tight group frames from abelian and generalized dihedral groups to obtain a richer set of frames of different sizes and dimensions. We have derived similar bounds on coherence in these situations. Though we have omitted it in this paper due to space constraints, we have previously explored a way to use a variation of our method to conduct efficient searches over subsets of the representations of a finite cyclic group to quickly find group frames which achieve even lower coherence than those constructed in this paper (see \cite{Thill_ISIT2013}). In the sequel to this paper \cite{ThillGroupFourierFrames}, we will realize our method in a more general context, showing how to choose representations of a general group to construct group frames. We will develop a general framework which will tie all of our previous constructions together, and it will become apparent why our cyclic group construction extends so naturally to generalized dihedral groups. Furthermore, we will identify other groups for which our method produces frames with particularly low coherence, including certain other tight, equiangular frames. An interesting future direction would be to see whether the methods we have employed in this paper could be used to control the \textit{average coherence} of a frame, as defined in \cite{MixonBajwaCalderbank, BajwaCalderbankJafarpour}. This could allow us to construct frames satisfying the Strong Coherence Property \cite{MixonBajwaCalderbank, BajwaCalderbankJafarpour}, which have certain guaranteed performance in sparse signal processing.

\begin{appendices} 
\section{The Fourier Pairing of (\ref{eqn:alphaintermsofa}) and (\ref{eqn:aintermsofalpha}) for Cyclic Groups of Prime Order} 

We will now begin to develop the tools needed to prove Theorems \ref{thm:r2innerprods}, \ref{thm:r3cohbounds}, \ref{thm:coherenceupperbound} and \ref{thm:coherenceupperboundmodd}. We will explicitly prove Theorems \ref{thm:r2innerprods} and \ref{thm:r3cohbounds} and defer the proofs of Theorems \ref{thm:coherenceupperbound} and \ref{thm:coherenceupperboundmodd} to the sequel to this paper. Let us return to representations of the cyclic group $\Z/n\Z$, where $K = \{k_1, ..., k_m\}$ (not necessarily a group), $\U = \diag(\omega^{k_{1}},  \omega^{k_{2}}, ..., \omega^{k_{m}})$, with the powers $\omega^{k_i}$ distinct, and $\mathcal{U} = \{\U, \U^2, ..., \U^{n-1}, \U^n = \matr{I_m}\}$. As before, taking $\vv$ to be the normalized vector of all ones, $\U^\ell \vv = \begin{bmatrix} \omega^{k_{1} \ell}  &  \omega^{k_{2} \ell} & \hdots  &  \omega^{k_{m} \ell} \end{bmatrix}^T$. Then if we index the columns as $\ell = 0, 1, ..., n-1$, we have $\M$ as in (\ref{eqn:cyclicframemat}). The inner product associated to the element $\U^\ell$ takes the form $c_{\ell} := \frac{\vv^* \U^{\ell} \vv}{||\vv||_2^2} =  \frac{1}{m} \sum_{k \in K} \omega^{\ell k}$. We define $\alpha_\ell := |c_\ell|^2$ to be the squared norm of the $\ell^{th}$ inner product. If for any $t \in \Z/n\Z$ we define the set $A_t := \{(k_i, k_j) \in K \times K ~|~ k_i - k_j \equiv t \mod n \}$ with size $a_t := |A_t |$, then we have the Fourier pairing given by Equations (\ref{eqn:alphaintermsofa}) and (\ref{eqn:aintermsofalpha}).

Now consider the framework of Section \ref{sec:cyclicprimegrp} where $n$ is a prime, $m$ is a divisor of $n-1$, and $K$ is the unique cyclic subgroup of $(\Z/n\Z)^\times$. If $r = \frac{n-1}{m}$, then $K$ consists of the nonzero $r^{th}$ powers in $\Z/n\Z$. Let $x$ be a generator of $(\Z/n\Z)^\times$. Then the distinct cosets of $K$ in $(\Z/n\Z)^\times$ are $\{K, xK, x^2K, ..., x^{r-1}K\}$. If $\ell \in x^d K$, then we see that $c_\ell = c_{x^d}$ and hence $\alpha_\ell = \alpha_{x^d}$. Likewise, if $t \in x^d K$, it is not too difficult to see that we have a bijection 
\begin{align} 
\nonumber A_t \to A_{x^d}: \hspace{.1in} (k_i, k_j) \mapsto (x^d t^{-1} k_i, x^d t^{-1} k_j).
\end{align} 
It follows that 
\begin{align} 
a_t = a_{x^d} \text{ if } t \in x^d K. \label{eqn:atxdeqn} 
\end{align} 

It is straightforward to see from their definitions that $c_0 = \alpha_0 = 1$ and $a_0 = m$. With this in mind, we may write the condensed forms of (\ref{eqn:alphaintermsofa}) and (\ref{eqn:aintermsofalpha}): 
\begin{align} 
\nonumber \alpha_\ell  &= \frac{1}{m^2} \left(a_0 + \sum_{d = 0}^{r-1} a_{x^d} \sum_{k \in K} \omega^{x^d \ell k}\right), \\ 
\nonumber a_t  &= \frac{m^2}{n} \left( \alpha_0 + \sum_{d = 0}^{r-1} \alpha_{x^d} \sum_{k \in K} \omega^{-x^d t k}\right). 
\end{align} 
In particular, 
\begin{align}
\alpha_{x^{d'}} & = \frac{1}{m^2} \left(a_0 + \sum_{d = 0}^{r-1} a_{x^d} \sum_{k \in K} \omega^{x^{d+d'} k}\right) \\ 
&  = \frac{1}{m} \left(1 + \sum_{d = 0}^{r-1} a_{x^d} c_{x^{d+d'}} \right), \label{eqn:innerproductmagsq} \\ 
a_{x^{d'}} & = \frac{m^2}{n} \left( \alpha_0 + \sum_{d = 0}^{r-1} \alpha_{x^d} \sum_{k \in K} \omega^{-x^{d+d'} k}\right) \\ 
  & = \frac{m^2}{n} \left( 1 + m \sum_{d = 0}^{r-1} \alpha_{x^d} c^*_{x^{d + d'}} \right). 
 \end{align} 
 
On one final note, since the roots of unity sum to 0: 
\begin{align} 1 + m c_1 + m c_{x} + m c_{x^2} + ... + m c_{x^{r-1}} = 0. \label{eqn:ceqn} \end{align}

\section{$r = 2$, and Proof of Theorem \ref{thm:r2innerprods}} 
\label{sec:r2} 

As before, take $n$ to be a prime, $m$ a divisor of $n-1$, and $K = \{k_1, ..., k_m\}$ the unique multiplicative subgroup of $(\Z/n\Z)^\times$ of size $m$. Let us examine the case where $r := \frac{n-1}{m} = 2$. Fix a multiplicative generator $x$ for $(\Z/n\Z)^\times$. In this case, $K$ has two distinct cosets: $K$ and $xK$. Our frame will correspondingly have two distinct inner product values: $c_1 = \frac{1}{m} \sum_{k \in K} \omega^{k}$ and $c_x = \frac{1}{m} \sum_{k \in K} \omega^{x k}$. There are two equations of the form (\ref{eqn:innerproductmagsq}), 
\begin{align} 
\nonumber \alpha_1 & = \frac{1}{m} \left( 1 + a_1 c_1 + a_x c_x \right)~,~ \alpha_x  = \frac{1}{m} \left( 1 + a_1 c_x + a_x c_1 \right). 
\end{align} 
From (\ref{eqn:ceqn}), substituting $c_x = - \left(\frac{1}{m} + c_1 \right)$ gives us
\begin{align} 
\alpha_1 & = \frac{1}{m} \left( 1 - \frac{1}{m} a_x + (a_1 - a_x) c_1 \right),  \label{eqn:f1eqn} \\ 
\alpha_x & = \frac{1}{m} \left( 1 - \frac{1}{m} a_1 + (a_x - a_1) c_1 \right). \label{eqn:fxeqn} 
\end{align} 
From (\ref{eqn:f1eqn}) and (\ref{eqn:fxeqn}), we can see that since $\alpha_1, \alpha_x, a_1,$ and $a_x$ are real, then $c_1$ must be real as well (and thus so is $c_x$). This allows us to write 
\begin{align} 
\alpha_1 & = c_1^2, \hspace{.25in} \alpha_x = c_x^2 = \left(\frac{1}{m} + c_1 \right)^2. \label{eqn:alpha1c1}
\end{align}

\begin{lemma} 
Let $n$ be a prime, and $r$ and $m$ satisfy $r = \frac{n-1}{m} = 2$. Let $K$ be the unique subgroup of $(\Z/n\Z)^\times$ of size $m$. As before, let $a_t$ be the number of pairs $(k_1, k_2) \in K \times K$ such that $k_1 - k_2 = t$. Let $x$ be the multiplicative generator of $(\Z/n\Z)^\times$. Then 
\begin{itemize} 
\item If $n-1$ is divisible by 4, $a_1 = \frac{1}{2} (m-2)$ and $a_x = \frac{1}{2} m$. 
\item Otherwise, $a_1 = a_x = \frac{1}{2} (m-1), ~ (-1 \notin K)$. 
\end{itemize} 
\label{lem:asizer2lemma} 
\end{lemma}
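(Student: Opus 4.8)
The plan is to compute $a_1$ and $a_x$ directly by counting pairs $(k_i, k_j) \in K \times K$ with prescribed difference, exploiting the fact that $K$ is the index-$2$ subgroup of $(\Z/n\Z)^\times$, i.e.\ the set of nonzero quadratic residues modulo $n$. First I would record the basic bookkeeping identities: since $|K| = m$ and the $a_t$ partition the $m^2$ ordered pairs, we have $a_0 = m$ (the diagonal pairs) and $\sum_{t \ne 0} a_t = m^2 - m = m(m-1)$. From \eqref{eqn:atxdeqn} in the previous appendix, $a_t$ depends only on the coset of $t$ in $(\Z/n\Z)^\times / K$, so $a_t = a_1$ for every nonzero quadratic residue $t$ and $a_t = a_x$ for every quadratic nonresidue $t$. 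Since there are $m$ residues and $m$ nonresidues among the nonzero elements, this already gives the single linear relation $m \, a_1 + m \, a_x = m(m-1)$, hence $a_1 + a_x = m-1$.

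To pin down $a_1$ and $a_x$ individually I would bring in the sign of $-1$ modulo $n$, which is exactly what distinguishes the two cases. Note $-1 \in K$ iff $-1$ is a quadratic residue mod $n$ iff $n \equiv 1 \pmod 4$ iff $4 \mid n-1$. The key observation is a reflection symmetry: $k_i - k_j = t$ iff $k_j - k_i = -t$, so the map $(k_i,k_j) \mapsto (k_j, k_i)$ is a bijection $A_t \to A_{-t}$, giving $a_t = a_{-t}$ for all $t$. When $4 \mid n-1$, $-1 \in K$ so $-t$ lies in the same coset as $t$ and this symmetry is automatic, giving no new information; I would instead count $a_1$ via the classical quadratic-residue difference count, or equivalently evaluate $\sum_{k \in K}\omega^{k}$ using the Gauss sum. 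Concretely, the cyclotomic identity $1 + m c_1 + m c_x = 0$ from \eqref{eqn:ceqn} together with $\alpha_1 = c_1^2$, $\alpha_x = c_x^2 = (\tfrac1m + c_1)^2$ from \eqref{eqn:alpha1c1} and the Fourier relations \eqref{eqn:f1eqn}--\eqref{eqn:fxeqn} yields a consistent linear system whose solution forces $a_1 = \tfrac12(m-2)$, $a_x = \tfrac12 m$ in the $4\mid n-1$ case; alternatively, the well-known fact that the quadratic-residue set is a $(n, \tfrac{n-1}{2}, \tfrac{n-3}{4})$-difference set when $n \equiv 3 \pmod 4$ handles the other case directly.

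For the case $4 \nmid n-1$, i.e.\ $n \equiv 3 \pmod 4$, I have $-1 \notin K$, so $-1$ is a nonresidue and multiplication by $-1$ swaps the two cosets $K$ and $xK$. Then the reflection bijection $A_t \to A_{-t}$ sends the residue coset to the nonresidue coset, giving $a_1 = a_x$ directly; combined with $a_1 + a_x = m-1$ this yields $a_1 = a_x = \tfrac12(m-1)$, and we note in passing that $m-1$ is even here since $m = \tfrac{n-1}{2}$ is odd. In both cases I would double-check the answer against the constraint that all $a_t$ are nonnegative integers and against the total $\sum_{t\ne 0} a_t = m(m-1)$. The main obstacle is the $4 \mid n-1$ case: here the easy symmetry argument collapses, so I need to invoke either the Gauss-sum evaluation of $c_1$ (equivalently the known difference-set / ``Paley''-type parameters) or push the Fourier system \eqref{eqn:f1eqn}--\eqref{eqn:alpha1c1} to a genuine determination rather than just a relation — everything else is elementary coset counting.
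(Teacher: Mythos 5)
Your handling of the case $4 \nmid n-1$ is complete and in fact cleaner than the paper's: the bookkeeping identity $m a_1 + m a_x = m(m-1)$ (hence $a_1+a_x = m-1$) together with the reflection bijection $A_t \to A_{-t}$ and $-1 \in xK$ immediately gives $a_1 = a_x = \tfrac12(m-1)$. The paper does not argue this way at all; in both cases it writes $k_1 = a^2$, $k_2 = b^2$, factors $k_1 - k_2 = (a-b)(a+b) = x_1x_2$, observes that $(a,b)\mapsto(x_1,x_2)$ is a bijection, counts the pairs $(x_1,x_2)$ lying in prescribed cosets (removing the loci $x_1=\pm x_2$ corresponding to $a=0$ or $b=0$), and divides by $4$ to account for $(\pm a,\pm b)$. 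That factorization is the paper's key idea and it is what carries the $4\mid n-1$ case; your proposal never produces an analogue of it.

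For $4 \mid n-1$ — the substantive case — what you have is not yet a proof. The symmetry argument degenerates, as you note, and the two escape routes you offer are both problematic as stated. Citing the Gauss-sum evaluation of $c_1$ or the classical cyclotomic-number/difference-set parameters would work but imports exactly the fact to be proved. The route through (\ref{eqn:f1eqn}), (\ref{eqn:fxeqn}), (\ref{eqn:alpha1c1}) and (\ref{eqn:ceqn}) is not a ``consistent linear system whose solution forces'' the answer: those relations are quadratic in $c_1$, and subtracting (\ref{eqn:fxeqn}) from (\ref{eqn:f1eqn}) using $\alpha_1 = c_1^2$, $\alpha_x = (\tfrac1m+c_1)^2$ and $a_1+a_x=m-1$ yields $\bigl(2c_1+\tfrac1m\bigr)\bigl(a_x-a_1-1\bigr)=0$, so you must separately exclude the spurious branch $c_1=c_x=-\tfrac1{2m}$ (e.g.\ by noting that summing the two equations forces $\alpha_1+\alpha_x=\tfrac{m+1}{m^2}$, incompatible with $\alpha_1+\alpha_x=\tfrac1{2m^2}$) before concluding $a_x - a_1 = 1$ and hence $a_1=\tfrac12(m-2)$, $a_x=\tfrac12 m$. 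There is also a circularity hazard: in the paper these Fourier relations are used to \emph{derive} the inner-product values from Lemma \ref{lem:asizer2lemma} (Theorem \ref{thm:r2innerprods}), so any argument running in the reverse direction must be checked to use only the relations themselves and not their consequences. Either patch is feasible, but as written the $4\mid n-1$ case is a gap.
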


\begin{proof} 
Let us first count the number of pairs $(k_1, k_2)$ such that $k_1 - k_2 \in K$, which will give us $\sum_{k \in K} a_k$. From (\ref{eqn:atxdeqn}), this is precisely equal to $m a_1$. Since $K$ is the group of nonzero squares in $\Z/p\Z$, we can write $k_1 = a^2$ and $k_2 = b^2$ for some choice of $a, b \in (\Z/p\Z)^\times$. If we let $x_1 = a - b$ and $x_2 = a + b$, then $k_1 - k_2 = (a - b)(a + b) = x_1 \cdot x_2$. Equivalently, we may write $$\begin{bmatrix} 1 & -1 \\ 1 & 1 \end{bmatrix} \begin{bmatrix} a \\ b \end{bmatrix} = \begin{bmatrix} x_1 \\ x_2 \end{bmatrix}. $$ We see that for any choice of the pair $(x_1, x_2)$, there is a unique pair $(a, b)$ which maps to it. Since we need to consider only pairs where $a$ and $b$ are nonzero, we must eliminate the cases where $x_1 = x_2$ (corresponding to when $b = 0$) and $x_1 = - x_2$ (corresponding to when $a = 0$). 

In order to have $x_1 \cdot x_2 \in K$, we must either have $x_1$ and $x_2$ both in $K$ or both in $xK$. If $-1 \in K$, a quick counting argument shows that there are $2m(m-2)$ valid choices for $(x_1, x_2)$ which satisfy $x_1 \cdot x_2 \in K$, each yielding a pair $(a, b)$ with $a$ and $b$ nonzero. But we are concerned only with their squares $a^2$ and $b^2$, so we can group these ordered pairs into sets of four, $\{ (\pm a, \pm b)\}$, and the number of distinct pairs $(a^2, b^2)$ with $a^2$ and $b^2$ nonzero and $a^2 - b^2 \in K$ is thus $$m a_1 = \frac{1}{4} (2 m(m-2)) = \frac{m}{2}(m-2), \text{ if } -1 \in K. $$

Likewise, $x_1 \cdot x_2 \in xK$ precisely when $x_1$ and $x_2$ are in opposite cosets of $K$. If this is true, and $-1 \in K$, then we cannot have $x_1 = x_2$ or $x_1 = -x_2$, since this would imply that $x_1$ and $x_2$ are in the same coset. Thus, any pair $(x_1, x_2)$ in either $K \times xK$ or $xK \times K$ will yield $x_1 \cdot x_2 \in xK$, so there are $2m^2$ possible pairs, each yielding a pair $(a, b)$. Again, we must divide by $4m$ to get the number of feasible pairs $(a^2, b^2)$ such that $a^2 - b^2 = x$, and we find that $$a_x = \frac{1}{2} m, ~ (-1 \in K). $$ 

If $-1 \notin K$, then the calculations for $a_1$ and $a_x$ change slightly: Now the condition $x_1 = -x_2$ implies that $x_1$ and $x_2$ are in opposite cosets of $K$. Thus, we have one extra case to consider when calculating $a_1$, and one less case when calculating $a_x$, so we find $$a_1 = a_x = \frac{1}{2} (m-1), ~ (-1 \notin K). $$ 

Note that $-1 \in K$, or rather -1 is a square modulo $n$, precisely when $(\Z/n\Z)^\times$ contains a fourth root of unity, and since $(\Z/n\Z)^\times$ is a cyclic multiplicative group of size $n-1$, this occurs precisely when $n-1$ is divisible by 4. 
\end{proof}

\begin{proof}
\textit{(Theorem \ref{thm:r2innerprods})} From (\ref{eqn:f1eqn}), (\ref{eqn:fxeqn}), (\ref{eqn:alpha1c1}), and Lemma \ref{lem:asizer2lemma}, we have that if $n-1$ is divisible by 4, $$c_1^2 = \frac{1}{m}\left(\frac{1}{2} - c_1 \right), $$ and making the substitution $c_1 = - \left( \frac{1}{m} - c_x \right)$ from (\ref{eqn:ceqn}) yields the same quadratic equation in $c_x$. Solving this reveals that $c_1$ and $c_x$ will take on the values $\frac{-1 \pm \sqrt{1 + 2m}}{2m}$, and the solution with the larger norm is $\frac{-1 - \sqrt{1 + 2m}}{2m}$, which indicates that the coherence is 
\begin{footnotesize} 
\begin{align} 
\nonumber \mu  = \left| \frac{-1 - \sqrt{1 + 2m}}{2m} \right| = \sqrt{\frac{n - m - \frac{1}{2}}{m(n-1)}} + \frac{1}{2m} \hspace{10pt} (n \equiv 1 \mod 4). 
\end{align}
\end{footnotesize}

On the other hand, if $n-1$ is not divisible by 4, then from Lemma \ref{lem:asizer2lemma} equations (\ref{eqn:fxeqn}) and (\ref{eqn:fxeqn}) become $$c_1^2  = c_x^2 = \frac{1}{m} \left(\frac{1}{2} + \frac{1}{2m}\right), $$ so this gives us coherence 
\begin{align} 
\nonumber \mu = \sqrt{\frac{1}{m} \left(\frac{1}{2} + \frac{1}{2m}\right)} = \sqrt{\frac{n-m}{m(n-1)}} \hspace{10pt} (n \not \equiv 1 \mod 4). 
\end{align} 
\end{proof}

\section{$r = 3$, and Proof of Theorem \ref{thm:r3cohbounds}} 
\label{sec:r3} 

Take $n$ to be a prime, $m$ a divisor of $n-1$, and $K = \{k_1, ..., k_m\}$ the unique subgroup of $(\Z/n\Z)^\times$ of size $m$. We now consider the case where $r = \frac{n-1}{m} = 3$, so that if $x$ is a generator of $(\Z/n\Z)^\times$, then $K$ is cyclically generated by $x^3$, and consists of the cubes of all the nonzero integers modulo $n$. In this case our distinct inner products will be $c_1, c_x$, and $c_{x^2}$, with corresponding squared norms $\alpha_1$, $\alpha_x$, and $\alpha_{x^2}$. Our goal in this section will be to prove Theorem \ref{thm:r3cohbounds}.

We first make the following remark: 

\begin{lemma} 
Let $n$ be a prime, $\omega = e^\frac{2 \pi i}{n}$, and $r$ and $m$ satisfy $r = \frac{n-1}{m} = 3$. If we take $K$ to be the unique subgroup of $(\Z/n\Z)^\times$ of size $m$, then the inner product values $c_\ell = \frac{1}{m} \sum_{k \in K} \omega^{\ell k}$ are all real. 
\label{lem:r3realinnerprods} 
\end{lemma}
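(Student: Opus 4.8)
The plan is to prove that complex conjugation fixes every $c_\ell$, which reduces to showing that $-1$ lies in the subgroup $K$. The key preliminary observation is a parity fact: since $n = 3m+1$ is a prime larger than $2$, it is odd, so $3m$ is even, and hence $m$ is even. (This is exactly the reason, alluded to in the main text, that $m$ must always be even when $r = 3$.)

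Given that $m$ is even, I would next argue that $-1 \in K$. Since $K$ is the unique subgroup of the cyclic group $(\Z/n\Z)^\times$ of order $m$, it coincides with $\{\,y \in (\Z/n\Z)^\times : y^m = 1\,\}$; as $(-1)^m = 1$ for even $m$, we conclude $-1 \in K$. Equivalently, $-1$ is a nonzero cube modulo $n$: since $n \equiv 1 \pmod 3$, the cubes form the index-$3$ subgroup $K$, and $-1$ being a cube is exactly the condition $(-1)^{(n-1)/3} = (-1)^m = 1$.

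With $-1 \in K$ the lemma is immediate. For any $\ell \in \Z/n\Z$,
\[ \overline{c_\ell} \;=\; \frac{1}{m}\sum_{k \in K}\overline{\omega^{\ell k}} \;=\; \frac{1}{m}\sum_{k \in K}\omega^{-\ell k} \;=\; \frac{1}{m}\sum_{k \in K}\omega^{\ell k} \;=\; c_\ell, \]
where the third equality holds because $k \mapsto (-1)\cdot k$ is a bijection of $K$ onto itself (as $-1 \in K$), so it merely permutes the exponents appearing in the sum. Hence $c_\ell = \overline{c_\ell} \in \R$; in particular $c_1, c_x, c_{x^2}$ are real, and the case $\ell \equiv 0$ is covered as well since it simply gives $c_0 = 1$.

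There is essentially no obstacle here beyond recording the parity argument cleanly; the rest is a one-line symmetry computation. It is worth remarking that the reality of the $c_\ell$ is precisely the statement $-1 \in K$, so the same proof applies verbatim for any $r$ with $-1 \in K$ (for $r = 2$ this is the condition $4 \mid n-1$, consistent with Theorem~\ref{thm:r2innerprods}), but it can fail otherwise. Alternatively, one could invoke the classical fact that the three cubic Gaussian periods attached to a prime $n \equiv 1 \pmod 3$ are the roots of a cubic polynomial over $\Q$ that splits over $\R$, but the conjugation argument above is shorter and self-contained.
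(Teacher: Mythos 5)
Your proof is correct and follows essentially the same route as the paper: establish $-1 \in K$ and then observe that conjugation merely permutes the summands of $c_\ell$. The only cosmetic difference is that the paper gets $-1 \in K$ directly from the fact that $-1 = (-1)^3$ is a cube and $K$ is the set of nonzero cubes, whereas you derive it from the evenness of $m$ (and you note the cube characterization as an equivalent formulation anyway).
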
 

\begin{proof} 
$K$ is the set of cubes in $(\Z/n\Z)^\times$, and since $-1$ is its own cube it will lie in $K$. Multiplication by $-1$ will therefore permute the elements of $K$, so we have 
\begin{align} 
c_{\ell}^* & = \left(\frac{1}{m} \sum_{k \in K} \omega^{\ell k}\right)^* = \frac{1}{m} \sum_{k \in K} \omega^{-\ell k} = c_{\ell}. 
\end{align} 

\end{proof}

We begin by making the following definition: 

\begin{definition} 
For any two cosets $t_1 K $ and $t_2 K$, we define the \textit{translation degree from $t_1 K$ to $t_2 K$}, to be the quantity
\begin{align} 
\nonumber a_{t_1 K, t_2 K} &:= | (1 + t_1 K) \cap t_2 K| = \#\{ \alpha \in t_1 K~|~1+\alpha \in t_2 K\}. 
\end{align} 
Similarly, for any coset $tK$, define the \textit{translation degree from $t K$ to 0} to be the quantity 
\begin{align} 
\nonumber a_{tK, 0} := |(1+t K) \cap \{0\}| = 
\begin{cases} 
1 & \text{ if } -1 \in tK, \\ 
0 & \text{ otherwise. } 
\end{cases} 
\end{align} 

\end{definition} 

We can express our previously defined values $a_t$ in terms of the translation degrees as follows: 

\begin{lemma} 
Let $n$ be a prime, and $m$ and $r$ satisfy $r = \frac{n-1}{m} = 3$. Let $K$ be the unique subgroup of $(\Z/n\Z)^\times$ of size $m$. Define $a_{t} = \#\{(k_1, k_2) \in K \times K~|~k_1 - k_2 \equiv t \mod n\}$ Then $a_{t} = a_{K, t K}$. 
\label{lem:linktotransdeg} 
\end{lemma}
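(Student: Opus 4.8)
The plan is to prove the identity by exhibiting an explicit bijection between the pair set $P_t := \{(k_1,k_2)\in K\times K \mid k_1-k_2\equiv t \pmod{n}\}$, which has cardinality $a_t$, and the set $Q_t := \{\alpha\in K \mid 1+\alpha\in tK\}$, which has cardinality equal to the translation degree $a_{K,tK}$. The only structural fact I will need about $K$ is that $-1\in K$: since $r=(n-1)/m=3$, the subgroup $K$ is precisely the set of nonzero cubes modulo $n$, and $-1=(-1)^3$ is a cube, exactly as observed in the proof of Lemma \ref{lem:r3realinnerprods}. Because $K$ is closed under multiplication, inversion, and negation, every element I construct below will automatically land in $K$ or in the coset $tK$ as required. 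I also take $t\not\equiv 0\pmod{n}$, which is the relevant case; both sides depend only on the coset of $t$, and $t\equiv 0$ is handled separately.

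First I would define the forward map $\Phi\colon P_t\to Q_t$ by $\Phi(k_1,k_2):=-k_2k_1^{-1}$. This lies in $K$ since $-1,k_2,k_1^{-1}\in K$, and a one-line computation gives $1+\Phi(k_1,k_2)=1-k_2k_1^{-1}=(k_1-k_2)k_1^{-1}=tk_1^{-1}\in tK$, so $\Phi$ does map $P_t$ into $Q_t$.

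Next I would check that $\Phi$ is a bijection by writing down its inverse. Given $\alpha\in Q_t$, there is a unique $u\in K$ with $1+\alpha=tu$ (unique because $v\mapsto tv$ is injective on $K$, and $1+\alpha\ne 0$ since $0\notin tK$); set $k_1:=u^{-1}$ and $k_2:=-\alpha u^{-1}$, both in $K$. Then $k_1-k_2=(1+\alpha)u^{-1}=t$, so $(k_1,k_2)\in P_t$, and $\Phi(k_1,k_2)=-k_2k_1^{-1}=\alpha$. Conversely, starting from $(k_1,k_2)\in P_t$ with $\alpha=\Phi(k_1,k_2)$, the relation $1+\alpha=tk_1^{-1}$ forces $u=k_1^{-1}$ (here I cancel $t\ne 0$), so the recipe returns $u^{-1}=k_1$ and $-\alpha u^{-1}=k_2$, recovering the original pair. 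Hence $|P_t|=|Q_t|$, i.e. $a_t=a_{K,tK}$.

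I do not expect a genuine obstacle here; the proof is a change of variables that converts the ``difference inside $K$'' description into an ``additive shift from $K$ to $tK$'' description. The only points demanding attention are bookkeeping ones: confirming each constructed quantity stays in $K$ (this is where $-1\in K$ is used repeatedly) and confirming the representation $1+\alpha=tu$ with $u\in K$ is unambiguous, which is automatic once $t\not\equiv 0$. Indeed, the argument uses nothing special about $r=3$ beyond $-1\in K$, so it applies verbatim whenever the latter holds.
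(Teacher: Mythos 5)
Your proof is correct and rests on exactly the same algebraic manipulation as the paper's: dividing the relation $k_1-k_2\equiv t$ by $k_1$ and using $-1\in K$ to see that $\alpha=-k_2k_1^{-1}\in K$ satisfies $1+\alpha\in tK$. The only packaging difference is that you exhibit an explicit bijection for each fixed $t$, whereas the paper double-counts the pairs with $k_1-k_2\in tK$ (getting $ma_t=ma_{K,tK}$, which additionally invokes the earlier fact that $a_s=a_t$ for all $s\in tK$); your version is marginally more self-contained but the idea is the same.
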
 

\begin{proof} 
For every pair $(k_1, k_2) \in K \times K$ we have that $k_1 - k_2 \in tK$ if and only if $1 - k_2 k_1^{-1} \in tK$. There are $m a_t$ such pairs in total ($a_t$ pairs for every element in $tK$). Note that $- k_2 k_1^{-1} \in K$, since $-1 \in K$. If we select any of the $m$ candidates for $k_1 \in K$, then there are $a_{K, tK}$ choices for $k_2$ that will satisfy this requirement. Thus, we have $m a_t = m a_{K, tK}$, and the result follows. 
\end{proof}

Some other facts about translation degrees: 

\begin{lemma} 
Let $n$ be a prime, $m$ a divisor of $n-1$ such that $\frac{n-1}{m} = 3$, and $K$ the unique subgroup of $(\Z/n\Z)^\times$ of size $m$. Then $a_{t_1K, t_2K} = a_{t_2 K, t_1 K}$ for all $t_1, t_2 \in \Z/n\Z$.   
\label{lem:transdegcommutingcosets}
\end{lemma}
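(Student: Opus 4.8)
The plan is to prove the identity by exhibiting an explicit involution of $\Z/n\Z$ that interchanges the two sets whose cardinalities are being compared. The crucial structural input is that $-1 \in K$: since $r = 3$, the subgroup $K$ is exactly the set of nonzero cubes modulo $n$, and $-1 = (-1)^3$ is a cube, precisely as noted in the proof of Lemma~\ref{lem:r3realinnerprods}. Consequently $-(tK) = tK$ for every coset $tK$ (and trivially $-\{0\} = \{0\}$), and this is the only property of $K$ I will need.

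Next I would introduce the map $\phi \colon \Z/n\Z \to \Z/n\Z$ given by $\phi(\alpha) = -1 - \alpha$. A one-line check gives $\phi(\phi(\alpha)) = -1 - (-1 - \alpha) = \alpha$, so $\phi$ is an involution, hence a bijection. I then claim that $\phi$ maps the set $A := \{\alpha \in t_1 K : 1 + \alpha \in t_2 K\}$ onto $B := \{\beta \in t_2 K : 1 + \beta \in t_1 K\}$. Indeed, if $\alpha \in A$ then $\phi(\alpha) = -(1+\alpha) \in -(t_2 K) = t_2 K$, while $1 + \phi(\alpha) = -\alpha \in -(t_1 K) = t_1 K$; hence $\phi(\alpha) \in B$. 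Applying the same reasoning with the roles of $t_1$ and $t_2$ reversed (and using $\phi^{-1} = \phi$) shows $\phi(B) \subseteq A$, so $\phi$ restricts to a bijection $A \to B$ and therefore $|A| = |B|$, which is exactly $a_{t_1 K, t_2 K} = a_{t_2 K, t_1 K}$.

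Finally I would dispatch the degenerate cases where $t_1$ or $t_2$ equals $0$, in which the relevant ``coset'' is $\{0\}$ and $a_{t_iK,\,\cdot\,}$ specializes to the translation degree to $0$: since $\phi$ together with the relations $-(tK)=tK$ and $-\{0\}=\{0\}$ treats the two kinds of cosets uniformly, the identical argument applies verbatim. I do not expect a genuine obstacle here; the only real content is spotting the involution $\alpha \mapsto -1-\alpha$ together with the fact that $-1 \in K$, after which the verification that it swaps $A$ and $B$ is immediate.
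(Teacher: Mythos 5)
Your proof is correct and takes essentially the same route as the paper: the paper's bijection $(b_1,b_2)\mapsto(-b_2,-b_1)$ on pairs with $1+b_1=b_2$ is exactly your involution $\alpha\mapsto -1-\alpha$ read off on the first coordinate, and both arguments hinge on the single fact that $-1$ is a cube and hence lies in $K$. No gaps.
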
 

\begin{proof} Suppose $b_1 \in t_1K$ such that $1+b_1 = b_2 \in t_2 K$. Then $1 - b_2 = -b_1$, with $-b_2 \in t_2K$ and $-b_1 \in t_1 K$ (since $-1$ is a cube and is thus in $K$). In fact, we see that we have a bijection between the sets $\{(b_1, b_2) \in t_1 K \times t_2 K ~|~ 1 + b_1 = b_2\}$ and $\{(c_1, c_2) \in t_2 K \times t_1  K ~|~ 1 + c_1 = c_2\}$ which sends $(b_1, b_2) \mapsto (c_1, c_2) = (-b_2, -b_1)$. This gives us $a_{t_1K, t_2K} = a_{t_2 K, t_1 K}$. 
\end{proof} 

\begin{lemma} 
Let $n$ be a prime, $m$ a divisor of $n-1$ such that $r = \frac{n-1}{m}$, and $K$ the unique subgroup of $(\Z/n\Z)^\times$ of size $m$. If $x$ is the multiplicative generator of $(\Z/n\Z)^\times$, then $a_{x^i K, x^j K} = a_{x^{r - i} K, x^{r - i + j} K}$.  
\label{lem:transdegshiftingcosets} 
\end{lemma}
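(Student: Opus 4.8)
The plan is to show that multiplication by a suitable power of $x$ gives a bijection between the pairs counted by $a_{x^i K, x^j K}$ and those counted by $a_{x^{r-i}K, x^{r-i+j}K}$. Recall that $a_{t_1 K, t_2 K} = \#\{\alpha \in t_1 K ~|~ 1 + \alpha \in t_2 K\}$. First I would observe that the exponents $i, j$ are naturally taken modulo $r$, since $x^r$ generates $K$ and the cosets $x^i K$ depend only on $i \bmod r$; so $x^{r-i}K$ is just the inverse coset $(x^i K)^{-1} = x^{-i}K$. The key idea is that for a pair $\alpha \in x^i K$ with $1 + \alpha \in x^j K$, I want to produce a pair $\beta \in x^{-i}K$ with $1 + \beta \in x^{-i+j}K$. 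The natural guess is $\beta = \alpha^{-1}$, or rather a scalar multiple of it: if $\alpha \in x^i K$ then $\alpha^{-1} \in x^{-i}K$, and $1 + \alpha^{-1} = \alpha^{-1}(\alpha + 1)$, so if $1 + \alpha \in x^j K$ then $1 + \alpha^{-1} \in x^{-i}K \cdot x^j K = x^{-i+j}K$. That is exactly the desired membership.

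So the main steps are: (1) note the exponents may be reduced mod $r$ and that $x^{r-i}K = x^{-i}K$; (2) define the map $\alpha \mapsto \alpha^{-1}$ from $\{\alpha \in x^i K : 1+\alpha \in x^j K\}$ to $\{\beta \in x^{-i}K : 1+\beta \in x^{-i+j}K\}$, checking it is well-defined using the identity $1 + \alpha^{-1} = \alpha^{-1}(1+\alpha)$ together with the fact that $K$ is a subgroup so cosets multiply as $x^a K \cdot x^b K = x^{a+b}K$; (3) observe $\alpha$ is never $0$ (it lies in a coset of $K \subset (\Z/n\Z)^\times$), so $\alpha^{-1}$ makes sense, and $1+\alpha$ could in principle be $0$ only if $\alpha = -1 \in x^i K$, in which case $1 + \alpha^{-1} = 0$ as well, so the degenerate case is handled consistently (and anyway $-1 \in K$ here so $x^i K = K$ and $x^{-i}K = K$); (4) check the map is an involution up to the obvious identification, hence a bijection, giving equality of cardinalities.

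I do not expect a serious obstacle — this is essentially the same symmetry trick used in Lemma \ref{lem:transdegcommutingcosets}, where $\alpha \mapsto -(1+\alpha)$ was used; here the relevant involution is inversion instead of negation-and-shift. The one point requiring a little care is the bookkeeping of coset indices modulo $r$: one must be sure that $x^{-i}K$, $x^{-i+j}K$ are the cosets actually written as $x^{r-i}K$, $x^{r-i+j}K$ in the statement, which is immediate once one notes $x^r \in K$. A secondary subtlety is the boundary case $\alpha = -1$; since the hypothesis of this particular lemma does not force $r = 3$, one should not assume $-1 \in K$ in general, but the identity $1+\alpha^{-1} = \alpha^{-1}(1+\alpha)$ shows that if $1+\alpha = 0$ then $1 + \alpha^{-1} = 0$ too, so such an $\alpha$ simply does not contribute to either count and can be excluded from both sides without affecting the equality.
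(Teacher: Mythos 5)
Your proposal is correct and is essentially the paper's own argument: the paper's bijection $(x^i a, x^j b) \mapsto \left(x^{r-i}(x^r a)^{-1},\, x^{r-i+j} b (x^r a)^{-1}\right)$ simplifies to exactly your map $\alpha \mapsto \alpha^{-1}$, $1+\alpha \mapsto \alpha^{-1}(1+\alpha)$, just written in coset-representative notation. Your extra care about the reduction of exponents modulo $r$ and the degenerate case $\alpha = -1$ is sound and does not change the substance.
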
 

\begin{proof} 
Let $a \in K$ such that $1 + x^i a = x^j b$, with $b \in K$. Then multiplying both sides of this equation by $x^{r - i}$, we get $x^{r-i} + x^r a = x^{r - i + j} b$. Note that $x^r a \in K$. Now, multiplying both sides of this equation by $(x^r a)^{-1} \in K$, we obtain $1 + x^{r-i}(x^r a)^{-1} = x^{r - i + j} b (x^r a)^{-1}$, where $x^{r-i}(x^r a)^{-1} \in x^{r-i}K$ and $x^{r - i + j} b (x^r a)^{-1} \in x^{r - i + j}K $. We see that we in fact have a bijection between the sets $\{(x^i a, x^j b) \in x^i K \times x^j K~|~ 1+x^i a = x^j b\}$ and $\{(x^{r-i} c, x^{r-i+j}d) \in x^{r-i} K \times x^{r-i+j} K~|~ 1+x^{r-i} c =x^{r-i+j}d\}$ which sends $(x^i a, x^j b) \mapsto (x^{r-i}(x^r a)^{-1}, x^{r - i + j} b (x^r a)^{-1}). $ 
\end{proof}

\begin{lemma} 
Let $n$ be a prime, $m$ a divisor of $n-1$ such that $r = \frac{n-1}{m} $, and $K$ the unique subgroup of $(\Z/n\Z)^\times$ of size $m$. Set $G = (\Z/n\Z)^\times$, with multiplicative generator $x$. For any coset $t_0 K$, we have $a_{t_0 K, 0} + \sum_{i = 1}^r a_{t_0 K, x^i K} = |t_0 K|$.  
\label{lem:sumoftransdegs} 
\end{lemma}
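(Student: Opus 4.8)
The plan is to read each translation degree as a count over the $m$-element coset $t_0K$: namely, $a_{t_0K,t_2K}$ (resp.\ $a_{t_0K,0}$) records how many $\alpha\in t_0K$ have their translate $1+\alpha$ landing in the target set $t_2K$ (resp.\ $\{0\}$). Since translation by $1$ is a bijection of $\Z/n\Z$, the set $1+t_0K$ has exactly $|t_0K|$ elements, so the identity should fall out simply by checking that the targets $\{0\},\,xK,\,x^2K,\dots,x^rK$ form a partition of $\Z/n\Z$ and summing the sizes of the corresponding intersections.

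First I would pin down the family of cosets appearing in the sum. Since $G=(\Z/n\Z)^\times$ is cyclic of order $n-1=rm$ with generator $x$, the unique subgroup of order $m$ is $K=\langle x^r\rangle$, and $|G/K|=r$. Hence the distinct cosets of $K$ in $G$ are precisely $K=x^rK,\,xK,\,x^2K,\dots,x^{r-1}K$, i.e.\ exactly the family $\{x^iK\}_{i=1}^{r}$, and these $r$ cosets partition $G=\Z/n\Z\setminus\{0\}$. Adjoining $\{0\}$, we get a partition of all of $\Z/n\Z$ into $r+1$ blocks.

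Next I would intersect the set $1+t_0K$ with each block of this partition. By the definitions of the translation degrees, $|(1+t_0K)\cap\{0\}|=a_{t_0K,0}$ and $|(1+t_0K)\cap x^iK|=a_{t_0K,x^iK}$ for each $i=1,\dots,r$; since the blocks are disjoint and cover $\Z/n\Z$, these intersections are disjoint and their union is $1+t_0K$. Summing cardinalities gives $a_{t_0K,0}+\sum_{i=1}^{r}a_{t_0K,x^iK}=|1+t_0K|$, and $|1+t_0K|=|t_0K|$ because $\alpha\mapsto1+\alpha$ is injective on $\Z/n\Z$. This is exactly the claimed identity.

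There is no real obstacle here: the argument is a one-line counting/partition computation. The only point that deserves a moment's care is the bookkeeping that $\{x^iK\}_{i=1}^{r}$ genuinely exhausts every coset of $K$ (so that no element of $1+t_0K$ is unaccounted for), which is immediate from $|G/K|=r$ together with $x^rK=K$.
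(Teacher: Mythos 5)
Your proof is correct and matches the paper's argument exactly: the paper likewise observes that each element of $t_0K$, when translated by $1$, lands in either $\{0\}$ or exactly one coset $x^iK$, so the counts sum to $|t_0K|$. Your version just spells out the partition and the bijectivity of translation more explicitly.
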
 

\begin{proof} 
This simply follows from the observation that any element of $t_0 K$, when translated by 1, must be sent to either 0 or exactly one of the cosets $x^i K \in G/K$. 
\end{proof}

\begin{lemma} 
Let $n$ be a prime, and $m$ a divisor of $n-1$ such that $r := \frac{n-1}{m} = 3$. Take $K$ to be the unique subgroup of $(\Z/n\Z)^\times$ of size $m$, and $x$ a multiplicative generator for $(\Z/n\Z)^\times$. Then 
\begin{align} 
a_{xK, x^2 K} - a_{K, K} & = 1. \label{eqn:transdegeq1}
\end{align} 
\label{lem:transdegeqnlemmar3} 
\end{lemma}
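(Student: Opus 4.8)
The plan is to extract the identity from the system of linear constraints already assembled in this appendix, rather than to recount cubic residues from scratch. Recall that $r=3$ means $K$ is the group of nonzero cubes in $\Z/n\Z$, so $n\equiv 1 \bmod 3$ and $-1\in K$. By Lemma~\ref{lem:linktotransdeg}, $a_t = a_{K,tK}$, and by Lemma~\ref{lem:asizer2lemma}-style counting one has the familiar total $\sum_{t\ne 0} a_t = m(m-1)$; combined with \eqref{eqn:atxdeqn} this gives $m(a_{K,K} + a_{K,xK} + a_{K,x^2K}) = m(m-1)$, i.e. a first relation among the three ``base'' translation degrees $a_{K,t K}$. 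The key is then to use the symmetry lemmas to collapse the full $3\times 3$ (or $4\times 4$, counting the $0$ coset) table of translation degrees $a_{x^iK,x^jK}$ down to essentially two unknowns, after which \eqref{eqn:transdegeq1} pops out.

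First I would apply Lemma~\ref{lem:transdegshiftingcosets} with $r=3$ in its various instances: $a_{xK,x^2K} = a_{x^2K,K}$ (taking $i=1$, $j=2$, so $r-i=2$, $r-i+j=4\equiv 1$), $a_{x^2K,K}=a_{xK,x^2K}$ again by symmetry (Lemma~\ref{lem:transdegcommutingcosets}), and so on; together with Lemma~\ref{lem:transdegcommutingcosets} ($a_{t_1K,t_2K}=a_{t_2K,t_1K}$) this forces the off-diagonal entries $a_{K,xK}, a_{K,x^2K}, a_{xK,x^2K}$ to satisfy enough coincidences that only two distinct values survive among all off-diagonal translation degrees, and similarly the three diagonal entries $a_{K,K}, a_{xK,xK}, a_{x^2K,x^2K}$ all coincide (shift by $i$, land back on the diagonal). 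Next I would invoke Lemma~\ref{lem:sumoftransdegs} with $t_0 = 1$: $a_{K,0} + a_{K,K} + a_{K,xK} + a_{K,x^2K} = m$, where $a_{K,0} = 1$ since $-1\in K$. Doing the same with $t_0 = x$ and $t_0 = x^2$ gives two more such equations; comparing them (using the collapse above) pins down how $a_{K,K}$ relates to the off-diagonal values.

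Now I combine the ``row-sum'' relations from Lemma~\ref{lem:sumoftransdegs} with the ``total count'' relation $a_{K,K}+a_{K,xK}+a_{K,x^2K} = m-1$. Subtracting the total-count relation from the $t_0=1$ row-sum ($a_{K,K}+a_{K,xK}+a_{K,x^2K} = m - 1 - a_{K,0} = m-1$... actually $m - 1$) is degenerate, so the real leverage comes from comparing the $t_0 = x$ row-sum $a_{x K,0}+a_{xK,K}+a_{xK,xK}+a_{xK,x^2K}=m$ (with $a_{xK,0}=0$ since $-1\notin xK$) against the $t_0=1$ row-sum: their difference yields $a_{xK,K}+a_{xK,xK}+a_{xK,x^2K} - (a_{K,K}+a_{K,xK}+a_{K,x^2K}) = -1$, and after cancelling the pairs that the symmetry lemmas equate ($a_{xK,K}=a_{K,xK}$, $a_{xK,xK}=a_{K,K}$), what remains is exactly $a_{xK,x^2K} - a_{K,K} = -(-1) = 1$. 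I expect the main obstacle to be bookkeeping: getting the index arithmetic modulo $r=3$ in Lemma~\ref{lem:transdegshiftingcosets} exactly right so that the correct entries are identified, and verifying that $-1\in xK$ fails while $-1\in K$ holds (which is where $n\equiv 1\bmod 3$, equivalently $3\mid n-1$, is genuinely used — it guarantees $K\ne\{0\}$-trivialities and that the coset of $-1$ is $K$, not a nontrivial coset). Once the table is correctly reduced, the identity is a one-line subtraction.
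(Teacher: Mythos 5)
Your route is genuinely different from the paper's, and it does work once the index bookkeeping is corrected---in fact it is considerably shorter. The paper proves the identity by double-counting the set $\{(k_1,k_2)\in K\times K : k_1-k_2\in K\}$, using the factorization $a^3-b^3=(a-b)(a-\zeta b)(a-\zeta^2 b)$ over $\mathbb{F}_n$ and a case analysis on which coset of $K$ the cube root of unity $\zeta$ lies in; your plan replaces all of that with the linear relations supplied by Lemmas \ref{lem:transdegcommutingcosets}, \ref{lem:transdegshiftingcosets} and \ref{lem:sumoftransdegs}. Carried out correctly: Lemma \ref{lem:transdegshiftingcosets} with $i=j=1$ gives $a_{xK,xK}=a_{x^2K,K}=a_{K,x^2K}$, and Lemma \ref{lem:transdegcommutingcosets} gives $a_{xK,K}=a_{K,xK}$, so the $t_0=x$ row sum of Lemma \ref{lem:sumoftransdegs} reads $a_{K,xK}+a_{K,x^2K}+a_{xK,x^2K}=m$ (since $-1\in K$ forces $a_{xK,0}=0$), while the $t_0=1$ row sum reads $a_{K,K}+a_{K,xK}+a_{K,x^2K}=m-1$ (since $a_{K,0}=1$); subtracting the second from the first gives $a_{xK,x^2K}-a_{K,K}=1$ at once.

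However, two steps as you wrote them are wrong and would derail the computation if taken literally. First, the three diagonal entries $a_{K,K}$, $a_{xK,xK}$, $a_{x^2K,x^2K}$ do \emph{not} all coincide: the shifting lemma sends $a_{xK,xK}$ to $a_{K,x^2K}$ and $a_{x^2K,x^2K}$ to $a_{K,xK}$, not back to $a_{K,K}$. (If the diagonal entries did coincide, the row sums would force $a_{K,K}=a_{K,xK}=a_{K,x^2K}$, i.e.\ $K$ would be a difference set and the $r=3$ frame would be equiangular, contradicting Theorem \ref{thm:r3cohbounds}.) Second, with the cancellations you actually list ($a_{xK,K}=a_{K,xK}$ and $a_{xK,xK}=a_{K,K}$), the difference of the two row sums leaves $a_{xK,x^2K}-a_{K,x^2K}$ rather than $a_{xK,x^2K}-a_{K,K}$, and the sign of that difference is $+1$, not $-1$. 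These are exactly the bookkeeping hazards you flagged yourself, and they are repairable as above, but as written the argument does not close.
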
 
\vspace{-15pt} 

\begin{proof} 
We prove this by counting the size of the set 
\begin{align} 
\nonumber A_K := \{(k_1, k_2) \in K \times K~|~ k_1 - k_2 \in K \}
\end{align} 
in two ways. First, using Equation (\ref{eqn:atxdeqn}), we can simply count the elements in this set as 
\begin{align} 
|A_K| = \sum_{k \in K} a_K = m a_1. \label{eqn:AKformula1} 
\end{align} 

Alternatively, we note that when $r = 3$, the difference between any two elements in $K$ takes the form $$a^3 - b^3 =(a - b)(a - \zeta b)(a - \zeta^2 b), $$ where $\zeta$ is a primitive third root of unity, and $a$ and $b$ are nonzero. Let us define 
\begin{align} 
x_1 := a - b, ~x_2 := a - \zeta b, ~ x_3 := a - \zeta^2 b. \label{eqn:x1x2x3} 
\end{align} 
We can express this using matrices as $$\begin{bmatrix} 1 & -1 \\ 1 & -\zeta \\ 1 & - \zeta^2 \end{bmatrix} \cdot \begin{bmatrix} a \\ b \end{bmatrix} = \begin{bmatrix} x_1 \\ x_2 \\ x_3 \end{bmatrix}. $$ In this form we can see that $a$ and $b$, and $x_3$ are uniquely determined by $x_1$ and $x_2$. In particular, 
\begin{align} 
x_3 = -\zeta(x_1 + \zeta x_2). 
\end{align}

Now, if $a^3 - b^3 \in K$, then we have the following possibilities for which cosets of $K$ $x_1, x_2$, and $x_3$ must belong to (up to a permutation of the cosets): 

\begin{table}[ht]
\caption{$a^3 - b^3 \in K$} 
\centering 
\begin{tabular}{c c c c}
\hline 

$x_1$ & $x_2$ & $x_3$ & Multiplicity  \\ 
\hline 
$K$ & $K$ & $K$ & 1 \\ 
$xK$ & $xK$ & $xK$ & 1 \\ 
$x^2K$ & $x^2K$ & $x^2K$ & 1 \\ 
$K$ & $xK$ & $x^2K$ & 6  \\ 
\hline 
\end{tabular} 
\end{table}  

The last case is representative of six possible cases which we obtain by permuting the order of the cosets (thus it has ``multiplicity 6"). In short, we must have $x_1, x_2$, and $x_3$ all in the same coset, or all in different cosets of $K$ in order to have $a^3 - b^3 \in K$. Let us attempt to count the quantity $$\#\{(x_1, x_2) \in K \times K ~|~ x_3 = -\zeta(x_1 + \zeta x_2) \in K, ~a \ne 0, b \ne 0 \}. $$ Since $x$ generates $(\Z/n\Z)^\times \cong \mathbb{F}_n^\times$, and $r$ divides $n-1$, the order of this group, then any $r^{th}$ root of unity will be contained in $\mathbb{F}_n^\times$, so $\zeta$ will lie in one of the cosets of $K$. 

We will first consider the case where $\zeta \in K$. Since $r = 3$, $-1 \in K$, so $- \zeta \in K$. Thus, the condition that $-\zeta(x_1 + \zeta x_2) \in K$ is equivalent to the condition that $x_1 + \zeta x_2 \in K \iff 1 + \zeta x_2 x_1^{-1} \in K$. If we fix $x_1$ to be any one of the $m$ elements in $K$, we have exactly $a_{K,K}$ choices for $x_2$ which satisfy this condition (for every $k \in K$ such that $1 + k \in K$, simply set $x_2 = k x_1 \zeta^{-1}$). This gives us a total of $m a_{K, K}$ ordered pairs $(x_1, x_2) \in K \times K$, each corresponding to a unique pair $(a, b)$ with $a^3 - b^3 \in K$. But we must rule out those which have either $a$ or $b$ equal to zero. If $a = 0$, then any choice of $b \in K$ will satisfy that all the $x_i$ are in $K$. Likewise, if $b = 0$, then any choice of $a \in K$ will do the same. Thus, there are $2m$ cases to eliminate, so 
\begin{align} 
\nonumber \#\{(x_1, x_2) \in K \times K ~|~ x_3  \in K, ~a \ne 0, b \ne 0 \} \\ 
= m a_{K, K} - 2m. \label{eqn:allinK} 
\end{align} 

By mimicking these calculations, it is not too difficult to see that we also have 
\begin{small} 
\begin{align} 
&~\#\{(x_1, x_2) \in xK \times xK ~|~ x_3  \in xK, ~a \ne 0, b \ne 0 \}  \label{eqn:allinxK}  \\ 
=& ~\#\{(x_1, x_2) \in x^2 K \times x^2 K ~|~ x_3  \in x^2 K, ~a \ne 0, b \ne 0 \}  \label{eqn:allinx2K} \\ 
=& ~ m a_{K, K} - 2m. 
\end{align} 
\end{small} 

Now consider the case where $x_1$, $x_2$, and $x_3$ are each in different cosets of $K$. We see that this rules out the case where either $a$ or $b$ is zero, since this would force all the $x_i$ to be in the same coset. Suppose $x_1 \in K$, $x_2 \in xK$, and $x_3 \in x^2 K$. Since $x_3 = - \zeta (x_1 + \zeta x_2)$, we must have $1 + \zeta x_2 x_1^{-1} \in x^2 K$, where we note that $x_2 x_1^{-1} \in xK$. For any fixed $x_1 \in K$, there are $a_{xK, x^2 K}$ choices for $x_2$ that satisfy this constraint. Thus, we arrive at 
\begin{align}
\nonumber &\#\{(x_1, x_2) \in K \times xK ~|~ x_3  \in x^2 K, ~a \ne 0, b \ne 0 \} \\ 
&= m a_{xK, x^2 K}. \label{eqn:diffKcosets} 
\end{align} 
With a little work exploiting Lemma \ref{lem:transdegcommutingcosets}, we see that we will arrive at the same result for any of the six permutations of the cosets corresponding to $x_1$, $x_2$, and $x_3$.

We comment that for any ordered pair $(k_1, k_2) \in K \times K$ such that $k_1 - k_2 \in K$ the nine pairs $(a, b) = (\zeta^{n_1} k_1^{1/3}, \zeta^{n_2} k_2^{1/3})$, for $n_1$ and $n_2$ ranging independently between 0 and 2, will all satisfy $(a^3, b^3) = (k_1, k_2)$. Thus, in counting the size of $A_K$, we will have to add up our previous quantities from (\ref{eqn:allinK}), (\ref{eqn:allinxK}), (\ref{eqn:allinx2K}), and (\ref{eqn:diffKcosets}) (with multiplicities) and then divide by 9. This gives us 
\begin{align} 
|A_K| = \frac{1}{9} \left( 3(m a_{K, K} - 2m) + 6m a_{xK, x^2 K}\right). \label{eqn:AKformula2} 
\end{align} 
Finally, combining (\ref{eqn:AKformula1}) and (\ref{eqn:AKformula2}), and using Lemma \ref{lem:linktotransdeg} to make the substitution $a_1 = a_{K, K}$, we obtain the result for the case where $\zeta \in K$. 

For the case where $\zeta \notin K$ we can verify that the relation does in fact still hold. It suffices to prove the result for when $\zeta \in xK$, for the result will also hold when $\zeta \in x^2 K$ due to the interchangeability of $xK$ and $x^2K$ which arises from both being multiplicative generators of $G/K$. In this case, we can show using similar counting arguments as before that for $d = 0, 1, 2,$ 
\begin{align} 
\nonumber &\#\{(x_1, x_2) \in x^d K \times x^{d} K~|~x_3 \in x^{d} K, a \ne 0, b \ne 0\} \\ 
= & ~ m a_{xK, x^2 K} - m, \\ 
\nonumber &\#\{(x_1, x_2) \in x^d K \times x^{d+1} K~|~x_3 \in x^{d+2} K, a \ne 0, b \ne 0\} \\ 
 = & ~ m a_{x^2 K, x K} - m, \\ 
\nonumber &\# \{(x_1, x_2) \in x^d K \times x^{d+2} K~|~x_3 \in x^{d+1} K, a \ne 0, b \ne 0\} \\
 = & ~ m a_{K, K}.
\end{align} 
Summing these values up for $d = 1, 2, 3$, and again dividing by 9 and equating the value to (\ref{eqn:AKformula1}), we obtain 
\begin{align} 
\nonumber m a_1 & = \frac{1}{9}  \left( 3( m a_{xK, x^2K} - m) \right) \\ 
& + \frac{1}{9} \left( 3( m a_{x^2K, xK} - m) +3 m a_{K, K} \right), 
\end{align} 
which after substituting $a_1 = a_{K, K}$ and $a_{x^2K, xK} = a_{xK, x^2K}$ (from Lemmas \ref{lem:linktotransdeg} and \ref{lem:transdegcommutingcosets}) reduces to the desired relation $a_{xK, x^2K} - a_{K, K} = 1$. 
\end{proof}

\begin{lemma} 
Let $n$ be a prime, $m$ a divisor of $n-1$ such that $\frac{n-1}{m} = 3$, and $K$ the subgroup of $(\Z/n\Z)^\times$ of size $m$. Then if $x$ is a multiplicative generator for $(\Z/n\Z)^\times$, $\omega = e^\frac{2 \pi i}{n}$, and $c_{\ell} = \frac{1}{m} \sum_{k \in K} \omega^{\ell k}$ is the inner product value corresponding to $\ell \in \Z/n\Z$, then 
\begin{align} 
\cc \cc^* = \frac{1}{m} [ I - \diag(\cc)  + P(I + A)C], \label{eqn:cmatrixeqn} 
\end{align}
where $\cc = [c_1, c_x, c_{x^2}]^T$, $I$ is the $3 \times 3$ identity matrix, and $$A = \begin{bmatrix} a_1 & a_{x^2} & a_x \\ a_x & a_1 & a_{x^2} \\ a_{x^2} & a_x & a_1 \end{bmatrix} , ~ C = \begin{bmatrix} c_1 & c_{x^2} & c_x \\ c_x & c_1 & c_{x^2} \\ c_{x^2} & c_x & c_1 \end{bmatrix}, $$ $$ P = \begin{bmatrix} 1 & 0 & 0 \\ 0 & 0 & 1 \\ 0 & 1 & 0 \end{bmatrix}. $$

\end{lemma}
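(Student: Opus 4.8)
The plan is to compute the $(i,j)$ entry of $\cc\cc^*$ directly and sort the resulting exponential sum by cosets of $K$. Index rows and columns by $i\in\{0,1,2\}$, so this entry is $c_{x^i}\overline{c_{x^j}}$. Expanding $c_{x^i}=\tfrac1m\sum_{k\in K}\omega^{x^i k}$,
\begin{align}
c_{x^i}\overline{c_{x^j}}=\frac{1}{m^2}\sum_{k,k'\in K}\omega^{\,x^i k-x^j k'}.
\end{align}
The exponent vanishes exactly when $k/k'=x^{j-i}$; since $|K|=m$ and $x$ has order $n-1=3m$, this forces $3\mid(j-i)$, i.e.\ $i=j$, with exactly $m$ such pairs, so the vanishing exponents contribute $\tfrac1m\delta_{ij}$, which is the $\tfrac1m I$ term.

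For the remaining terms I would group by the coset $x^pK$ containing the nonzero value $t=x^i k-x^j k'$. For fixed such $t$, dividing through by $t$ gives a bijection of $\{(k,k')\in K\times K:\,x^i k-x^j k'=t\}$ with $\{(\alpha,\beta)\in x^{i-p}K\times x^{j-p}K:\,\alpha=1+\beta\}$, whose size is the translation degree $a_{x^{j-p}K,\,x^{i-p}K}$ (exponents mod $3$) from the definition preceding Lemma~\ref{lem:linktotransdeg}; in particular it is constant on the coset, and since $\sum_{t\in x^pK}\omega^t=m\,c_{x^p}$,
\begin{align}
c_{x^i}\overline{c_{x^j}}=\frac{1}{m}\,\delta_{ij}+\frac{1}{m}\sum_{p=0}^{2}a_{x^{j-p}K,\,x^{i-p}K}\,c_{x^p}.
\end{align}
I would then reduce every $a_{x^uK,x^vK}$ with $u,v\in\{0,1,2\}$ to $a_1,a_x,a_{x^2}$ using Lemmas~\ref{lem:transdegcommutingcosets} and~\ref{lem:transdegshiftingcosets} (which bring it to $a_{K,x^dK}=a_{x^d}$ whenever possible, via Lemma~\ref{lem:linktotransdeg}), the only exception being the value $a_{xK,x^2K}=a_1+1$ supplied by Lemma~\ref{lem:transdegeqnlemmar3}. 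Taking $i=j$ here reproduces Equation~(\ref{eqn:innerproductmagsq}), a useful consistency check.

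It remains to package these nine scalar relations as the matrix identity. The coefficients equal to $a_1,a_x,a_{x^2}$ assemble, using the same symmetry and shift relations, into the product $P(I+A)C$ up to the ``$+1$'' corrections; here $P$ is the permutation induced by coset negation $x^d\mapsto x^{-d}$, which is legitimate precisely because $-1\in K$ when $r=3$ — the same fact underlying the reality of the $c_\ell$ in Lemma~\ref{lem:r3realinnerprods}. Each ``$+1$'' from Lemma~\ref{lem:transdegeqnlemmar3} adds one extra copy of the appropriate $c_{x^p}$ in an off-diagonal position, and these extra copies are exactly the off-diagonal entries of $PC$; since the diagonal entries of $PC$ are $c_{x^{-2i}}=c_{x^i}$ (as $-2\equiv1\bmod3$), completing them to the full $P(I+A)C$ overcounts the diagonal by $\diag(\cc)$, which is why that term enters with a minus sign. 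The one genuinely delicate point is this last bookkeeping step — tracking which coset lands in which matrix position and verifying that the scattered ``$+1$'' contributions close up into $PC$ with its diagonal deleted; everything before it is a direct application of the translation-degree lemmas already established.
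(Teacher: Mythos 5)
Your derivation of the scalar identity $c_{x^i}\overline{c_{x^j}}=\tfrac{1}{m}\delta_{ij}+\tfrac{1}{m}\sum_{p}a_{x^{j-p}K,\,x^{i-p}K}\,c_{x^p}$ is correct and follows essentially the same route as the paper's proof: expand the double sum over $K\times K$, sort the nonzero exponents by the coset of the difference, and count with translation degrees. Your handling of the vanishing exponents (absorbing the diagonal ``$+1$'' into $\tfrac1m\delta_{ij}$ rather than treating $i=j$ separately) is in fact slightly cleaner than the paper's split into (\ref{eqn:crossterms}) and (\ref{eqn:diagterms}), and the two forms agree via Lemmas \ref{lem:transdegcommutingcosets} and \ref{lem:transdegshiftingcosets}. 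Your observations that the ``$+1$'' from Lemma \ref{lem:transdegeqnlemmar3} lands exactly in the coefficient of $c_{x^{-i-j}}$ of each off-diagonal entry, producing the off-diagonal part of $PC$, and that $(PC)_{ii}=c_{x^{-2i}}=c_{x^i}$ forces the $-\diag(\cc)$ correction, are also right.

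The problem sits precisely in the step you flag as delicate and do not carry out: the remaining coefficients do \emph{not} assemble into $PAC$ with $A$ as displayed. Matching coefficients of $c_{x^p}$ in the off-diagonal entry $(i,j)$ (rows and columns indexed by $i,j\in\{0,1,2\}$), your formula gives the coefficient of $c_{x^i}$ as $a_{x^{j-i}K,K}=a_{K,x^{j-i}K}=a_{x^{j-i}}$, whereas $(PAC)_{ij}=\sum_k a_{x^{-i-k}}c_{x^{k-j}}$ puts $a_{x^{-2i-j}}=a_{x^{i-j}}$ there; these differ because $a_x\ne a_{x^2}$ in general. What your (correct) scalar relations actually yield is $\cc\cc^*=\tfrac{1}{m}[\,I-\diag(\cc)+P(I+A^{T})C\,]$, i.e.\ the statement with $a_x$ and $a_{x^2}$ interchanged in the displayed matrix $A$. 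A concrete check: for $n=7$, $m=2$, $x=3$ one has $K=\{1,6\}$, $a_1=a_x=0$, $a_{x^2}=1$, $\cc\approx[0.6235,\,-0.9010,\,-0.2225]^T$, and the entry of the stated right-hand side in the first row, second column evaluates to $c_xc_{x^2}\approx 0.2005$ rather than $c_1c_x\approx-0.5618$. So the assembly you defer cannot close as written; carrying the bookkeeping through would have revealed the transposition. (The discrepancy is harmless downstream, since the proof of Theorem \ref{thm:r3cohbounds} uses only the entries of the Fourier-conjugated identity that do not involve $F\va$, and these are the same for $A$ and $A^{T}$.)
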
 

\begin{proof} 
The terms of $\cc \cc^*$ will take the form 
\begin{align} 
c_{x^i} c_{x^j}^* = \frac{1}{m^2} \sum_{(k_1, k_2) \in K \times K } \omega^{x^i k_1 - x^j k_2}. \label{eqn:unrefinedcterms} 
\end{align} 
If $i \ne j$, note that, $x^i k_1 - x^j k_2 \in x^d K$ if and only if $1 - x^{j- i} k_2 k_1^{-1} \in x^{d - i} K$, and there are $m a_{x^{j-i}K, x^{d-i}K}$ choices for $(k_1, k_2)$ that satisfy this. Thus, we obtain 
\begin{align} 
c_{x^i} c_{x^j}^* & = \frac{1}{m} \sum_{d = 0}^{r-1} a_{x^{j-i}K, x^{d-i}K} c_{x^d}. ~ (i \ne j) \label{eqn:crossterms}
\end{align} 
If $i = j = d'$, (\ref{eqn:unrefinedcterms}) becomes $\frac{1}{m^2} \sum_{(k_1, k_2) \in K \times K } \omega^{x^{d'} (k_1 - k_2)}$. Separating the terms where $k_1 = k_2$, we can apply the same reasoning as above and use Lemma \ref{lem:linktotransdeg} to obtain 
\begin{align} 
|c_{x^{d'}}|^2 = \frac{1}{m} \left(1 + \sum_{d = 0}^{r-1} a_{x^d} c_{x^{d+d'}}\right). \label{eqn:diagterms} 
\end{align} 
Equation (\ref{eqn:cmatrixeqn}) can now be verified from (\ref{eqn:crossterms}) and (\ref{eqn:diagterms}) using Lemmas \ref{lem:linktotransdeg}, \ref{lem:transdegcommutingcosets}, \ref{lem:transdegshiftingcosets}, \ref{lem:sumoftransdegs}, and \ref{lem:transdegeqnlemmar3}. 
\end{proof}

\begin{proof} 
\textit{(Theorem \ref{thm:r3cohbounds})} Notice in (\ref{eqn:cmatrixeqn}) that $A$ and $C$ are circulant matrices (as is $I + A$), and hence they can be diagonalized by Fourier matrices. Let $\gamma = e^{2 \pi i /3}$ and $$F = \begin{bmatrix} 1 & 1 & 1 \\ 1 & \gamma & \gamma^2 \\ 1 & \gamma^2 & \gamma^4 \end{bmatrix} = \begin{bmatrix} 1 & 1 & 1 \\ 1 & \gamma & \gamma^{-1} \\ 1 & \gamma^{-1} & \gamma \end{bmatrix}, $$ so that $\frac{1}{\sqrt{3}} F$ is the $3 \times 3$ discrete Fourier matrix. We first note that the matrix $P$ from above is simply $\frac{1}{3} F^2 = \frac{1}{3}F^{*2}$. Now it is easy to verify that since $\cc$ has real components by Lemma \ref{lem:r3realinnerprods}, then if we write $F \cc = [w_1, w_2, w_3]^T$, then we have that $w_1$ is real and $w_2 = w_3^*$. So we may write $w_1 = \alpha$, $w_2 = \beta e^{i \theta}$, and $w_3 = \beta e^{-i \theta}$, where $\alpha$ and $\beta$ are real and $\beta$ is nonnegative. If we let $\va = [a_1, a_x, a_{x^2}]^T$, then we can easily verify that by pre-multiplying Equation (\ref{eqn:cmatrixeqn}) by $F$ and post-multiplying by $F^*$, noting that $FF^* = 3 I$, $FPF^* = 3P$, $FCF^* = 3 \diag(F \cc)$ and $FAF^* = \diag(F \va)$, we can rewrite it as 
\begin{align} 
\nonumber (F \cc)(F \cc)^* & =  \frac{1}{m} [3 I - F \diag(\cc) F^* \\ 
 & + 27 P (I + \diag(F \va)) \diag(F \cc)]. \label{eqn:cmatrixeqn2}
\end{align} 

One can further check that $F \diag(\cc) F^*$ is circulant with first column $F \cc$, and if we write $F \va = [y_1, y_2, y_3]^T$, then (\ref{eqn:cmatrixeqn2}) becomes 
\begin{align} 
\begin{bmatrix} w_1 \\ w_2 \\ w_3 \end{bmatrix} [w_1^*, w_2^*, w_3^*]  = & \frac{1}{m} [ \begin{bmatrix} 3 & 0 & 0 \\ 0 & 3 & 0\\ 0 & 0 & 3 \end{bmatrix}  - \begin{bmatrix} w_1 & w_3 & w_2 \\ w_2 & w_1 & w_3\\ w_3 & w_2 & w_1 \end{bmatrix} \label{eqn:cmatrixeqn3}  
\end{align} 
$$ + 27 \begin{bmatrix} (1+y_1)w_1 & 0 & 0 \\ 0 & 0 & (1+y_3) w_3 \\ 0 & (1+y_2) w_2 & 0 \end{bmatrix} ]. $$
If we consider only the coordinates of the above matrices which do not involve $y_1, y_2$ or $y_3$, then after substituting $w_1 = \alpha$, $w_2 = \beta e^{j \theta}$ and $w_3 = \beta e^{-j \theta}$, we can solve the resulting equations to obtain the relations 
\begin{equation} 
\alpha  = - \frac{1}{m}, \hspace{.25in} \beta = \sqrt{\frac{1}{m} \left(3 + \frac{1}{m} \right)}. \label{eqn:alphaeqn}
\end{equation}

We can use these to bound the coherence as follows: 
\begin{align} 
\begin{bmatrix} c_1 \\ c_x \\ c_{x^2} \end{bmatrix} & = F^{-1} \begin{bmatrix} \alpha \\ \beta e^{j \theta} \\ \beta e^{-j \theta} \end{bmatrix} = \frac{1}{3} \begin{bmatrix} \alpha + 2 \beta \cos(\theta) \\ \alpha + 2 \beta \cos(\theta - \frac{2\pi}{3}) \\ \alpha + 2 \beta \cos(\theta + \frac{2 \pi}{3}) \end{bmatrix}.  
\end{align}

\begin{align} \min_\theta  \max\{|c_1|, |c_x|, |c_{x^2}|\} \le \mu \le \max_\theta  \max\{|c_1|, |c_x|, |c_{x^2}|\} 
\end{align}

From (\ref{eqn:alphaeqn}), we know that $\alpha$ is negative, and $\beta$ is positive by definition. Since $|\alpha| < |\beta|$, then by inspection we have  
\begin{align} 
\max_\theta \max\{|c_1|, |c_x|, |c_{x^2}|\} & = \frac{1}{3} |\alpha + 2 \beta (-1)| \\ 
& = \frac{1}{3} \left(2\sqrt{\frac{1}{m} \left(3 + \frac{1}{m} \right)} + \frac{1}{m}\right). 
\end{align}  

This gives us our upper bound. Asymptotically, we can ignore the term $\alpha = - \frac{1}{m}$ in our expressions for $c_1, c_x$, and $c_{x^2}$, and if we do so, we find that $$\arg \min_\theta  \max\{|c_1|, |c_x|, |c_{x^2}|\} \approx \frac{\pi}{2}, $$ which follows from noting that since $|c_1|, |c_x|$, and $|c_{x^2}|$ are continuous functions of $\theta$, the smallest value of their maximum must occur when two of them are set equal to each other (in this case, when $|c_x| = |c_{x^2}|$, so that asymptotically $| \cos(\theta + \frac{2 \pi}{3})| = | \cos(\theta - \frac{2 \pi}{3})|$). Substituting $\frac{\pi}{2}$ for $\theta$ gives us our (asymptotic) lower bound on $\mu$: 
\begin{align} 
\min_\theta  \max\{|c_1|, |c_x|, |c_{x^2}|\} \approx \frac{1}{\sqrt{m}}. 
\end{align} 
We easily verify that this is greater than the Welch bound, which in this case becomes $$\sqrt{\frac{n-m}{m(n-1)}} = \sqrt{ \frac{2}{3m} + \frac{1}{3m^2}}. $$ 

\end{proof}

\section{Proof of Theorems \ref{thm:coherenceupperbound} and \ref{thm:coherenceupperboundmodd}} 
\label{sec:generalr} 

We now exploit the tools developed in the preceding two appendix sections to generalize our bound from Theorem \ref{thm:r3cohbounds} to general values of $r$. We first examine the Fourier transform of our vector $[c_1,  c_x, ...,  c_{x^{r-1}}]^T$ of our inner product values.

\begin{lemma} Let $n$ be a prime, $m$ a divisor of $n-1$, and $r := \frac{n-1}{m}$. Let $x$ be a multiplicative generator of the cyclic group $(\Z/n\Z)^\times$, $K$ the unique subgroup of $(\Z/n\Z)^\times$ of size $m$, and $c_{x^d}$ be the inner product value $\frac{1}{m} \sum_{k \in K} \omega^{x^d \cdot k}$ where $d \in \{0, ..., r-1\}$ and $\omega = e^{\frac{2 \pi i}{n}}$. Finally, let $\cc = [c_1, c_x, c_{x^2}, ..., c_{x^{r-1}}]^T$, and let $F$ be the scaled $r \times r$ Fourier matrix with entries defined by $F_{ij} = \gamma^{(i - 1) (j- 1)}$, where $\gamma = e^{\frac{2 \pi i}{r}}$. 
Then, if we let $\w := [w_1, ..., w_r]^T = F \cc$ so that $w_{d + 1} = \sum_{t = 0}^{r-1} \gamma^{t d} c_{x^{t}}$ for $d = 0, 1, ..., r-1$, we have 

\begin{align} 
w_1 &= - \frac{1}{m}, &  \label{eqn:wvec1} \\ 
|w_i| &= \sqrt{\frac{1}{m} \left(r + \frac{1}{m} \right)}, & i \ne 1. \label{eqn:wvecnot1} 
\end{align} 

\label{lem:wvec} 
\end{lemma}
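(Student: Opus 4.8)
\textit{Proof proposal.} The plan is to recognize each entry of $\w = F\cc$ as a (normalized) Gauss sum over $\mathbb{F}_n$ and then apply the classical magnitude formula for such sums. The entry $w_1$ is immediate: by definition $w_1 = \sum_{t=0}^{r-1} c_{x^{t}}$, and Equation (\ref{eqn:ceqn}) asserts precisely that $1 + m\sum_{t=0}^{r-1}c_{x^{t}} = 0$, so $w_1 = -\tfrac1m$, which is (\ref{eqn:wvec1}).

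For $i \neq 1$, I would expand $c_{x^{t}} = \tfrac1m\sum_{k\in K}\omega^{x^{t}k}$ inside $w_{d+1} = \sum_{t=0}^{r-1}\gamma^{td}c_{x^{t}}$ and re-index. Each $g\in(\Z/n\Z)^\times$ is written uniquely as $g = x^{s}k$ with $0\le s<r$ and $k\in K$, since the cosets of $K$ partition $(\Z/n\Z)^\times$; writing $s(g)$ for that exponent, this gives $w_{d+1} = \tfrac1m\sum_{g\in(\Z/n\Z)^\times}\gamma^{d\,s(g)}\,\omega^{g}$. Because $(\Z/n\Z)^\times/K$ is cyclic of order $r$, generated by the image of $x$, the map $\chi_d\colon g\mapsto \gamma^{d\,s(g)}$ is a multiplicative character of $(\Z/n\Z)^\times$, and it is trivial precisely when $r\mid d$, i.e. when $i = d+1 = 1$. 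Hence, for $i\ne 1$, the quantity $m\,w_i = \sum_{g}\chi_{i-1}(g)\,\omega^{g}$ is the Gauss sum attached to a nontrivial multiplicative character and the additive character $g\mapsto\omega^{g}$ of $\mathbb{F}_n$.

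The one substantive step is the identity $\bigl|\sum_{g}\chi(g)\,\omega^{g}\bigr|^{2}=n$ for every nontrivial multiplicative character $\chi$ of $\mathbb{F}_n^{\times}$. I would either cite this standard fact or include the short self-contained argument: expanding the square gives $\sum_{g,h}\chi(gh^{-1})\,\omega^{g-h}$; the substitution $g=hu$ turns it into $\sum_{u}\chi(u)\sum_{h}\omega^{h(u-1)}$; the inner sum over $h\in\mathbb{F}_n^{\times}$ equals $n-1$ when $u=1$ and $-1$ otherwise (the latter from $\sum_{h\in\mathbb{F}_n}\omega^{hc}=0$ for $c\ne 0$); and using $\sum_{u}\chi(u)=0$ the whole expression collapses to $n$. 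Therefore $|w_i|^{2} = n/m^{2}$ for $i\ne1$, and substituting $n = rm+1$ yields $|w_i|^{2} = \tfrac{rm+1}{m^{2}} = \tfrac1m\bigl(r+\tfrac1m\bigr)$, which is (\ref{eqn:wvecnot1}).

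I expect the only real obstacle to be bookkeeping rather than depth: checking cleanly that $(s,k)\mapsto x^{s}k$ is a bijection onto $(\Z/n\Z)^\times$, verifying that $\chi_d$ is genuinely multiplicative (which rests on $(\Z/n\Z)^\times/K\cong\Z/r\Z$), and carefully isolating the $\chi(1)=1$ term when reducing the character sum. The analytic core—the Gauss sum magnitude—is classical, and the hypothesis that $n$ is prime is exactly what makes $\mathbb{F}_n$ a field so that it applies.
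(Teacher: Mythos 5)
Your proposal is correct, and it takes a genuinely different route from the paper. The paper proves (\ref{eqn:wvecnot1}) by brute-force expansion: it writes $|w_{d+1}|^2$ as a double sum $\sum_{s}\gamma^{sd}\sum_{\ell}c_{x^{s+\ell}}c_{-x^{\ell}}$, expresses each product $m^2 c_{x^{s+\ell}}c_{-x^{\ell}}$ in terms of the translation degrees $a_{-x^sK,\,x^tK}$ (the coset-intersection counts it introduces for the $r=3$ analysis), and then collapses everything using $\sum_{t}c_{x^{t}}=-\tfrac1m$ together with the fact that the translation degrees out of a fixed coset sum to $m$ (Lemma \ref{lem:sumoftransdegs}). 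You instead observe that $m\,w_{d+1}=\sum_{g\in(\Z/n\Z)^\times}\chi_d(g)\,\omega^{g}$, where $\chi_d(g)=\gamma^{d\,s(g)}$ is the multiplicative character of $(\Z/n\Z)^\times$ induced by the quotient $(\Z/n\Z)^\times/K\cong\Z/r\Z$ (nontrivial exactly when $i=d+1\ne1$), so that $m\,w_i$ is a classical Gauss sum and $|m\,w_i|^2=n$ follows from the standard magnitude formula; substituting $n=rm+1$ gives (\ref{eqn:wvecnot1}). In effect the $c_{x^d}$ are Gaussian periods and the $w_i$ are the associated Gauss sums, and your self-contained verification of $\bigl|\sum_{g}\chi(g)\omega^{g}\bigr|^2=n$ (the substitution $g=hu$, splitting off $u=1$, and using $\sum_u\chi(u)=0$) is sound — it is essentially the same cancellation the paper performs, but carried out once at the level of the full character sum rather than coset by coset. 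Your argument is shorter and more conceptual; the paper's buys self-containment within the translation-degree machinery it needs anyway for Theorems \ref{thm:r3cohbounds} and \ref{thm:coherenceupperboundmodd}. The handling of $w_1=-\tfrac1m$ via (\ref{eqn:ceqn}) is identical in both.
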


\begin{proof} 
Note that (\ref{eqn:wvec1}) follows from (\ref{eqn:ceqn}) since $w_1 = \sum_{t = 0}^{r-1} c_{x^t} = - \frac{1}{m}$. 

Now, for any $d \in \{0, ..., r-1\}$, 

\begin{align} 
|w_{d+1}|^2 & = \left(\sum_{t = 0}^{\kappa-1} \gamma^{t d} c_{x^{t}} \right) \left( \sum_{\ell = 0}^{\kappa-1} \gamma^{\ell d} c_{x^{\ell}} \right)^* \\ 
& = \sum_{t = 0}^{\kappa-1} \sum_{\ell = 0}^{\kappa-1} \gamma^{(t - \ell) d} c_{x^t} c_{-x^\ell}\\
& = \sum_{s = 0}^{\kappa-1}  \gamma^{s d} \sum_{\ell = 0}^{\kappa-1} c_{x^{s+\ell}} c_{- x^\ell}  \label{eqn:wnormsq} 
\end{align} 

Also, we have 

\begin{align} 
m^2 c_{x^{s + \ell}} c_{- x^{\ell}} & = \left(\sum_{k \in K} \omega^{x^{s+ \ell} k}\right) \left(\sum_{k' \in K} \omega^{-x^\ell k'} \right) \\ 
& = \sum_{k, k' \in K} \omega^{-x^\ell k' (1 - x^{s} k k'^{-1})} \\ 
& = \sum_{t = 0}^{r - 1} \sum_{\substack{\{k', k'' \in K ~: \\ 1 - x^s k'' \in x^t K\}}} \omega^{-x^\ell k' (1 - x^{s} k'')} \\ 
\nonumber & \hspace{40pt}+ \sum_{\substack{\{k', k'' \in K ~: \\ 1 - x^s k'' = 0\}}} 1 \\ 
& = \sum_{t = 0}^{r-1} a_{-x^s K, x^t K} \left( \sum_{k''' \in K} \omega^{-x^t x^\ell k'''} \right) \\ 
& \hspace{40pt} + \sum_{k' \in K}  a_{-x^s K, 0} \\ 
& = m \sum_{t = 0}^{r-1} a_{-x^s K, x^t K} \cdot c_{-x^{t +\ell}} + m a_{-x^s K, 0} 
\end{align}

Dividing both sides by $m^2$ and substituting into (\ref{eqn:wnormsq}), we obtain: 

\begin{align} 
|w_{d+1}|^2 & = \sum_{s = 0}^{r-1}  \gamma^{s d} \sum_{\ell = 0}^{r-1} \left( \frac{1}{m} \left(\sum_{t = 0}^{r-1} a_{-x^s K, x^t K} \cdot c_{-x^{t +\ell}} + a_{-x^s K, 0} \right) \right) \\ 
& = \sum_{s = 0}^{r-1}  \gamma^{s d}   \frac{1}{m} \left(\sum_{t = 0}^{r-1} a_{-x^s K, x^t K} \sum_{\ell = 0}^{r-1} c_{-x^{t +\ell}} + \sum_{\ell = 0}^{r-1} a_{-x^s K, 0} \right) \\ 
& = \sum_{s = 0}^{r-1}  \gamma^{s d}   \frac{1}{m} \left(\sum_{t = 0}^{r-1} a_{-x^s K, x^t K} \left( -\frac{1}{m} \right) + r a_{-x^s K, 0} \right) \label{eqn:wnormsqeq2} \\ 
& = - \frac{1}{m^2} \sum_{s = 0}^{r-1}  \gamma^{s d} \left(m - a_{-x^s K, 0} \right) + \frac{r}{m} \sum_{s = 0}^{r-1}  \gamma^{s d} a_{-x^s K, 0} \label{eqn:wnormsqeq4}
\end{align}

\noindent
where (\ref{eqn:wnormsqeq2}) follows from Equation (\ref{eqn:ceqn}), and (\ref{eqn:wnormsqeq4}) follows from Lemma \ref{lem:sumoftransdegs}. Since $a_{-x^s K, 0}$ is equal to $1$ if $s = 0$ and equal to 0 otherwise, (\ref{eqn:wnormsqeq4}) becomes 
\begin{align} 
|w_{d+1}|^2 & = -\frac{1}{m^2} \left( (m-1) + m \sum_{s = 1}^{r-1}  \gamma^{s d} \right) + \frac{r}{m}.  \label{eqn:wnormsqeq5} 
\end{align} 

When $d \ne 0$ we have $\sum_{s = 1}^{r-1}  \gamma^{s d} = -1$, and after rearranging terms we obtain 
\begin{align} 
|w_{d+1}|^2 & = \frac{1}{m} \left( r + \frac{1}{m} \right),  
\end{align} 
giving us (\ref{eqn:wvecnot1}). 
\end{proof}

The proof of Theorem \ref{thm:coherenceupperbound} follows immediately from this lemma: 

\begin{proof} 
\textit{(Theorem \ref{thm:coherenceupperbound})} Using the notation of Lemma \ref{lem:wvec}, write $\mathbf{c} = \frac{1}{r} F^* \mathbf{w}$. Then 
\begin{align} 
|c_{x^d}| & = \frac{1}{r} \left| \sum_{j = 1}^r \gamma^{d (j-1)} w_j \right| \\ 
& \le \frac{1}{r}  \sum_{j = 1}^r |w_j| \label{eqn:cohprooftriineq} \\  
& = \frac{1}{r} \left( (r-1) \sqrt{\frac{1}{m} \left(r - \frac{1}{m}\right)} + \frac{1}{m} \right), \label{eqn:cohproofwveclemma} 
\end{align} 
where (\ref{eqn:cohproofwveclemma}) follows from Lemma \ref{lem:wvec}. Since the coherence is equal to the largest value among the $|c_{x^d}|$, $d = 0, ..., r - 1$, we are done. 
\end{proof}

Now, toward proving Theorem \ref{thm:coherenceupperboundmodd}, we present the following classification of when $|K|$ is even or odd: 

\begin{lemma}
Let $n$ be a prime, $m$ a divisor of $n - 1$, and $r := \frac{n - 1}{m}$. 
Let $x$ be a generator for the cyclic multiplicative group $(\Z/n\Z)^\times$, and let $K$ be the unique subgroup of $(\Z/n\Z)^\times$ of size $m$. 
Then $-1 \in K$ if and only if either $n$ or $m$ is even. If $n$ and $m$ are both odd, then $r$ is even and $-1 \in x^\frac{r}{2} A$. 
\label{lem:mevennegoneinA}
\end{lemma}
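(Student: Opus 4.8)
The plan is to reduce the statement to elementary facts about the cyclic group $(\Z/n\Z)^\times$. First I would dispose of the case $n = 2$, the only even prime: then $m \mid n - 1 = 1$ forces $m = 1$, so $K = \{1\}$ and $-1 \equiv 1 \in K$, consistent with the claim (``$n$ even''). For the rest, assume $n$ is an odd prime, so $(\Z/n\Z)^\times$ is cyclic of even order $n - 1$ with generator $x$. A cyclic group of even order has a unique element of order $2$, and $-1$ is such an element since $(-1)^2 = 1$ and $-1 \ne 1$; hence $-1 = x^{(n-1)/2}$. Moreover the unique subgroup $K$ of order $m$ is $K = \langle x^{r} \rangle = \{ x^{rj} : 0 \le j < m \}$, where $r = (n-1)/m$.

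The equivalence then drops out: $-1 = x^{(n-1)/2} \in K$ if and only if $(n-1)/2$ is divisible by $r$, i.e.\ if and only if $\frac{(n-1)/2}{(n-1)/m} = m/2$ is an integer, i.e.\ if and only if $m$ is even. Together with the $n = 2$ case, this shows $-1 \in K$ exactly when $n$ or $m$ is even.

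Finally, suppose $n$ and $m$ are both odd. Since $n - 1$ is even and $m$ is odd, every factor of $2$ in $n - 1$ survives into $r = (n-1)/m$, so $r$ is even and $r/2$ is a nonnegative integer. Writing $(n-1)/2 = rm/2 = r/2 + r \cdot \frac{m-1}{2}$ and noting that $\frac{m-1}{2}$ is an integer with $0 \le \frac{m-1}{2} < m$, we obtain $-1 = x^{(n-1)/2} = x^{r/2} \cdot x^{r(m-1)/2} \in x^{r/2} K$, as desired. The argument is purely combinatorial once $-1$ is identified with $x^{(n-1)/2}$; the only thing requiring care is the divisibility bookkeeping in the middle step and the parity argument showing $r$ is even precisely because $m$ is odd (given $n$ odd). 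There is no serious obstacle here.
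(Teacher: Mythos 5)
Your proof is correct and follows essentially the same route as the paper's: dispose of $n=2$, then use the cyclic structure of $(\Z/n\Z)^\times$ to locate $-1$ relative to $K$ and, when $m$ is odd, place it in the coset $x^{r/2}K$. The only cosmetic difference is that you pin down $-1 = x^{(n-1)/2}$ and do explicit exponent arithmetic, whereas the paper argues via the unique subgroup of order $2$ and the order of $-1K$ in the quotient $G/K$; both are sound.
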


\begin{proof} 
If $n$ is even, that is $n = 2$, then $-1 \equiv 1$ in $\Z/n\Z$, so trivially $-1 \in K$. If $n$ is odd, then the size $m$ of $K$ is even if and only if $K$ contains the unique cyclic subgroup of size $2$, which is $\{\pm 1\}$. 

If both $m$ and $n$ are odd, then $n - 1$ must be even, hence so is $r = \frac{n- 1}{m}$. By the argument above, $-1 \notin K$. In this case, since $(-1)^2 = 1 \in K$ (and noting that $K = x^r K$), we must have $-1 \in x^\frac{r}{2} K$. 
\end{proof}

\begin{lemma} 
Let $n$, $m$, $r$, $x$ and $K$ be defined as in Lemma \ref{lem:mevennegoneinA} and $\mathbf{c} = [c_1, c_{x}, ..., c_{x^{r-1}}]^T$ and $\w = [w_1, ..., w_r]^T$ be defined as in Lemma \ref{lem:wvec}. 
If either $n$ or $m$ is even ($-1 \in K$) then for any $d = 0, 1, ..., r-1$, we have $c_{x^d} = c^*_{x^d}$, and for any $i = 2, 3, ..., r$ we have $w^*_i = w_{r-i + 2}$. If $n$ and $m$ are both odd ($-1 \in x^\frac{r}{2} K$), then  $c_{x^d} = c^*_{x^{d + r/2}}$ and $w^*_i = (-1)^{i - 1} w_{r-i + 2}$. 
\label{lem:cstarwstar} 
\end{lemma}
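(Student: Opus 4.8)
The plan is to establish the conjugation symmetry of the inner-product values $c_{x^d}$ directly from their definition, and then transfer this symmetry to the Fourier transform $\w = F\cc$ by a routine index-chase. Recall $c_{x^d} = \frac{1}{m}\sum_{k \in K} \omega^{x^d k}$, so that $c_{x^d}^* = \frac{1}{m}\sum_{k\in K}\omega^{-x^d k}$. The key observation is that $-1$ acts on the cosets of $K$ by multiplication: if $-1 \in K$, then $-K = K$, so the map $k \mapsto -k$ permutes $K$ and $c_{x^d}^* = c_{x^d}$; if instead $-1 \in x^{r/2}K$ (the case $n,m$ both odd, by Lemma \ref{lem:mevennegoneinA}), then $-x^d K = x^{d + r/2}K$, so $k\mapsto -k$ is a bijection from $K$ onto the exponents appearing in $c_{x^{d+r/2}}$, giving $c_{x^d}^* = c_{x^{d+r/2}}$ (indices of $x$ read mod $r$).

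Next I would push this through the Fourier pairing. Write $w_i = \sum_{d=0}^{r-1}\gamma^{(i-1)d}c_{x^d}$ with $\gamma = e^{2\pi i/r}$. Taking conjugates, $w_i^* = \sum_{d=0}^{r-1}\gamma^{-(i-1)d}c_{x^d}^*$. In the case $-1\in K$, substitute $c_{x^d}^* = c_{x^d}$ and re-index by $d' = -d \bmod r$ to get $w_i^* = \sum_{d'}\gamma^{(i-1)d'}c_{x^{d'}}$; comparing with the formula $w_j = \sum_{d}\gamma^{(j-1)d}c_{x^d}$ shows $i - 1 \equiv j - 1 \pmod r$ with $j = r - i + 2$ (since $-(i-1) \equiv (r-i+2)-1 \pmod r$), i.e. $w_i^* = w_{r-i+2}$. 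In the case $-1 \in x^{r/2}K$, substitute $c_{x^d}^* = c_{x^{d+r/2}}$, re-index $d' = d + r/2$, and collect the resulting factor $\gamma^{-(i-1)d} = \gamma^{-(i-1)(d' - r/2)} = \gamma^{(i-1)r/2}\gamma^{-(i-1)d'}$; since $\gamma^{r/2} = e^{\pi i} = -1$, this factor is $(-1)^{i-1}$, and the remaining sum is again $w_{r-i+2}$ after the same index comparison, yielding $w_i^* = (-1)^{i-1}w_{r-i+2}$.

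The only mild subtlety — and the step I would be most careful with — is the bookkeeping of indices modulo $r$ versus the one-based indexing of the vector entries $w_1,\dots,w_r$: one must check that $d \mapsto -d$ (resp. $d\mapsto d+r/2$) on $\{0,\dots,r-1\}$ corresponds correctly to $i \mapsto r-i+2$ on $\{2,\dots,r\}$, and that the entry $w_1$ (fixed point, $i=1$) is handled consistently — indeed $w_1 = -1/m$ is real by Lemma \ref{lem:wvec}, consistent with both claimed relations at $i=1$ (where $r-i+2 = r+1 \equiv 1$). There are no analytic difficulties here; the content is entirely the observation that $-1$ lies in a specific coset of $K$, which is exactly what Lemma \ref{lem:mevennegoneinA} supplies.
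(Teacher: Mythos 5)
Your proposal is correct and follows essentially the same route as the paper: deduce $c_{x^d}^* = c_{x^d}$ or $c_{x^d}^* = c_{x^{d+r/2}}$ from which coset of $K$ contains $-1$ (Lemma \ref{lem:mevennegoneinA}), then push this through $\w = F\cc$ using $-(i-1)\equiv (r-i+2)-1 \pmod r$ and $\gamma^{r/2}=-1$. One cosmetic slip: in the first case the ``re-index by $d'=-d$'' step as written would leave $c_{x^{-d'}}$ rather than $c_{x^{d'}}$ inside the sum, but it is also unnecessary --- your parenthetical congruence $-(i-1)\equiv(r-i+2)-1$ applied directly to the exponent of $\gamma$ is the actual argument, and it is exactly what the paper does.
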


\begin{proof} 
As usual, set $\omega = e^{2 \pi i / n}$ and $\gamma := e^{2 \pi i / r}$. 
If $-1 \in K$, then multiplication by $-1$ permutes the elements of $K$, so we have 
\begin{align} 
c_{x^d}^* = \left(\frac{1}{m} \sum_{k \in K} \omega^{-x^d k}\right) = c_{x^d} \\ 
\end{align} 
It follows that $c_{x^d}$ is real. Furthermore, in this case we have 

\begin{align} 
w_i^* & = \sum_{j = 1}^r \gamma^{-(i-1)(j-1)} c_{x^{j-1}}^* \\ 
& = \sum_{j = 1}^r \gamma^{((r - i+2) - 1)(j-1)} c_{x^{j-1}} \\ 
& = w_{r - i + 2}.  
\end{align} 

If instead $-1 \in x^{\frac{r}{2}} K$, multiplication by $-x^\frac{r}{2}$ permutes the elements of $K$, so 
\begin{align} 
c_{x^d} & = \frac{1}{m} \sum_{k \in K} \omega^{- x^d x^{\frac{r}{2}} k} = c_{x^{d + r/2}}^*. 
\end{align} 

In this case, 
\begin{align} 
w_i^* & = \sum_{j = 1}^r \gamma^{-(i-1)(j-1)} c_{x^{j-1}}^* \\ 
& = \gamma^{(i-1) \frac{r}{2}} \sum_{j = 1}^r \gamma^{(r - i + 1)(j-1 + \frac{r}{2})} c_{x^{j-1 + \frac{r}{2}}} \\ 
& = (-1)^{i-1} w_{r - i + 2}. 
\end{align} 
\end{proof}

We are now ready to prove our theorem: 

\begin{proof} 
\textit{(Theorem \ref{thm:coherenceupperboundmodd})} Since both $n$ and $m$ are odd, then from Lemma \ref{lem:mevennegoneinA} we know that $r$ is even and $-1 \in x^{\frac{r}{2}} K$. As before, set $\omega = e^{2 \pi i/n}$,  $c_{x^d} = \frac{1}{m} \sum_{k \in K} \omega^{x^d k}$, and $\cc = [c_1, c_x, c_{x^2}, ..., c_{x^{r-1}}]^T$. Let $\gamma = e^{2 \pi i / r}$, $F$ the $r \times r$ Fourier matrix with entries $F_{ij} = \gamma^{(i-1)(j-1)}$, and $\w = [w_1, ..., w_r]^T = F \cc$. 

From an inverse Fourier transform, we have $c_{x^{i-1}} = \frac{1}{r} \sum_{j = 1}^{r} \gamma^{-(i-1)(j-1)} w_j$, and in light of Lemmas \ref{lem:wvec} and \ref{lem:cstarwstar}, we can write this as 
\begin{small}
\begin{align} 
\nonumber c_{x^{i-1}} & = \frac{1}{r} \bigg[ w_1 + \gamma^{-(i-1)\frac{r}{2}} w_{\frac{r}{2} + 1}  \\ 
 & \hspace{5pt} + \sum_{j = 2}^{\frac{r}{2}} \left( \gamma^{-(i-1)(j-1)} w_j + \gamma^{-(i-1)((r - j + 2) -1)} w_{r - j + 2} \right) \bigg] \label{eqn:cwsummation0} \\ 
\nonumber & = \frac{1}{r} \bigg[ - \frac{1}{m} + (-1)^{i-1} w_{\frac{r}{2} + 1}  \\ 
& \hspace{5pt} + \sum_{j = 2}^{\frac{r}{2}} \left( \gamma^{-(i-1)(j-1)} w_j + (-1)^{j-1} \left(\gamma^{-(i-1)(j-1)} w_j \right)^* \right) \bigg]. \label{eqn:cwsummation1}
\end{align} 
\end{small}

From Lemma \ref{lem:wvec}, we may write $\gamma^{-(i-1)(j-1)} w_j = \beta e^{i \theta_j}$ for each $j = 2, ..., \frac{r}{2}$, where $\beta = \sqrt{\frac{1}{m} \left( r + \frac{1}{m} \right) }$. Then (\ref{eqn:cwsummation1}) becomes 
\begin{small} 
\begin{align} 
\frac{1}{r} \bigg[ - \frac{1}{m} + (-1)^{i-1} w_{\frac{r}{2} + 1}  + \sum_{j~ \text{even}} 2 i \beta \sin(\theta_j)+ \sum_{j~ \text{odd}} 2 \beta \cos(\theta_j) \bigg]. \label{eqn:cwsummation2}
\end{align} 
\end{small} 

Let $n_e$ be the number of even integers $j$ in the set $\{2, ..., \frac{r}{2}\}$, and $n_o$ the number of odd such integers. Lemma \ref{lem:wvec} tells us that $|w_{\frac{r}{2} + 1}| = \beta$. From Lemma \ref{lem:cstarwstar} we know that when $\frac{r}{2}$ is even $w_{\frac{r}{2} + 1}$ is purely real, and when $\frac{r}{2}$ is odd it is imaginary. In the former case, we may upper-bound $|c_{x^{i-1}}|$ by 
\begin{small} 
\begin{align} 
\frac{1}{r} \left| \frac{1}{m} + \beta + n_e 2i \beta + n_o 2 \beta \right| & = \frac{1}{r} \sqrt{ \left(\frac{1}{m} + \beta (1 +  2 n_o) \right)^2  + (2 n_e \beta)^2 }, \label{eqn:cohboundrdiv2even_2} 
\end{align} 
\end{small} 
and in the latter case we obtain the upper bound 
\begin{small} 
\begin{align} 
\frac{1}{r} \left| \frac{1}{m} + i \beta +  n_e 2i \beta + n_o 2 \beta \right| = \frac{1}{r} \sqrt{ \left(\frac{1}{m} + 2 n_o \beta \right)^2  + \beta^2 (1 + 2 n_e)^2 }. \label{eqn:cohboundrdiv2odd_2}
\end{align} 
\end{small} 

It is not too hard to see that when $\frac{r}{2}$ is even, we have $n_e = \frac{r}{4}$ and $n_o = \frac{r}{4} - 1$. When $\frac{r}{2}$ is odd we obtain $n_e = n_o = \frac{r}{4} - \frac{1}{2}$. We can now substitute these sets of values into (\ref{eqn:cohboundrdiv2even_2}) and (\ref{eqn:cohboundrdiv2odd_2}) respectively, and in either case we obtain the upper bound 
\begin{align} 
|c_{x^{i-1}}| & \le \frac{1}{r} \sqrt{\left(\frac{1}{m} + \left(\frac{r}{2} - 1 \right) \beta \right)^2 + \left(\frac{r}{2}\right)^2 \beta^2}.
\end{align} 
Since our coherence is the maximum of the $|c_{x^{i-1}}|$, we are done. 
\end{proof}

\end{appendices}

\begin{IEEEbiography}[{\includegraphics[width=1in,height=1.25in,clip,keepaspectratio]{Mattface}}]{Matthew Thill}
was born in Arlington Heights, IL. He received the B.S. degree in mathematics and the M.S. degree in electrical engineering in 2009 and 2012 respectively, both from the California Institute of Technology. He is currently completing his graduate studies in electrical engineering, also at Caltech, with research interests in communications, signal processing, coding theory, and network information theory. He was an NDSEG fellow from 2011 to 2014.
\end{IEEEbiography}

\begin{IEEEbiography}[{\includegraphics[width=1in,height=1.25in,clip,keepaspectratio]{hassibi-b120}}]{Babak Hassibi}
was born in Tehran, Iran, in 1967. He received the B.S. degree from the University of Tehran in 1989, and the M.S. and Ph.D. degrees from Stanford University in 1993 and 1996, respectively, all in electrical engineering. 

He has been with the California Institute of Technology since January 2001, where he is currently the Gordon M Binder/Amgen Professor Of Electrical Engineering. From 2008-2015 he was Executive Officer of Electrical Engineering, as well as Associate Director of Information Science and Technology. From October 1996 to October 1998 he was a research associate at the Information Systems Laboratory, Stanford University, and from November 1998 to December 2000 he was a Member of the Technical Staff in the Mathematical Sciences Research Center at Bell Laboratories, Murray Hill, NJ. He has also held short-term appointments at Ricoh California Research Center, the Indian Institute of Science, and Linkoping University, Sweden. His research interests include wireless communications and networks, robust estimation and control, adaptive signal processing and linear algebra. He is the coauthor of the books (both with A.H.~Sayed and T.~Kailath) {\em Indefinite Quadratic Estimation and Control: A Unified Approach to H$^2$ and H$^{\infty}$ Theories} (New York: SIAM, 1999) and {\em Linear Estimation} (Englewood Cliffs, NJ: Prentice Hall, 2000). He is a recipient of an Alborz Foundation Fellowship, the 1999 O. Hugo Schuck best paper award of the American Automatic Control Council (with H.~Hindi and S.P.~Boyd), the 2002 National ScienceFoundation Career Award, the 2002 Okawa Foundation Research Grant for Information and Telecommunications, the 2003 David and Lucille Packard Fellowship for Science and Engineering,  the 2003 Presidential Early Career Award for Scientists and Engineers (PECASE), and the 2009 Al-Marai Award for Innovative Research in Communications, and was a participant in the 2004 National Academy of Engineering ``Frontiers in Engineering'' program. 

He has been a Guest Editor for the IEEE Transactions on Information Theory special issue on ``space-time transmission, reception, coding and signal processing'' was an Associate Editor for Communications of the IEEE Transactions on Information Theory during 2004-2006, and is currently an Editor for the Journal ``Foundations and Trends in Information and Communication'' and for the IEEE Transactions on Network Science and Engineering. He is an IEEE Information Theory Society Distinguished Lecturer for 2016-2017.
\end{IEEEbiography} 


\begin{thebibliography}{1} 
\footnotesize{

\bibitem{BajwaCalderbankJafarpour} W. U. Bajwa, R. Calderbank, S. Jafarpour, ``Why Gabor frames? Two fundamental measures of coherence and their role in model selection,'' \textit{J. Commun. Netw.}, 12: 289-307, 2010. 

\bibitem{Bandyopadhyay} S. Bandyopadhyay, P. O. Boykin, V. Roychowdhury, F. Vatan, ``A new proof of the existence of mutually unbiased bases. \textit{Algorithmica}, 34:512-528, 2002. 

\bibitem{Baumert} L. D. Baumert, in \textit{Cyclic Difference Sets, Lecture Notes in Mathematics}, Berlin, Germany, 1971, vol. 182, Springer. 

\bibitem{BenedettoFickus} J. Benedetto and M. Fickus, ``Finite Normalized Tight Frames,'' \textit{Advances in Computational Mathematics}, 18 (2-4), 357-385, 2003. 

\bibitem{BethJungnickel} T. Beth, D. Jungnickel, and H. Lenz, \textit{Design Theory}, Cambridge, U.K.: Cambridge University Press, 1999. 

\bibitem{BosWaldron} L. Bos and S. Waldron, ``Some remarks on Heisenberg frames and sets of equiangular lines,'' \textit{N. Z. Jour. Math}. 36, 113-137, 2007. 

\bibitem{Tao} E. J. Candes and T. Tao, ``Decoding by linear programming,'' {\it IEEE Trans. Inform. Th.} 51, 4203-4215, 2005. 

\bibitem{Candes} E. J. Candes, ``The restricted isometry property and its implications in compressed sensing,'' \textit{C. R. Acad. Sci. Paris S'er. I Math.} 346, 589-592, 2008. 

\bibitem{Romberg} E. J. Candes, J. Romberg and T. Tao, ``Stable signal recovery from incomplete and inaccurate measurements,'' \textit{Comm. Pure Appl. Math.} 59, 1208-1223, 2006. 

\bibitem{CasazzaArt} P. G. Casazza, ``The Art of Frame Theory,'' \textit{Taiwanese Journal of Mathematics}, Vol. 4, No. 2, pp. 129-201, June 2000. 

\bibitem{CasazzaKutyniokFickusMixon} P. G. Casazza, M. Fickus, D. G. Mixon, Y. Wang, Z. Zhou, ``Constructing tight fusion frames,'' \textit{Applied and Computational Harmonic Analysis} 30 (2), 175-187, 2011. 

\bibitem{Kutyniok} P.G. Casazza, G. Kutyniok, and M.C. Lammers, ``Duality principles in frame theory.'' \textit{J. Fourier Anal. Appl.} 10, 383-408, 2004. 

\bibitem{CasazzaKovacevic} P. G. Casazza and J. Kova\u{c}evi\'{c}, ``Equal-norm tight frames with erasures,'' \textit{Advances in Computational Mathematics}, Springer, 2003. 

\bibitem{CasazzaKutyniok} P.G. Casazza and G. Kutyniok, ``Finite Frames: Theory and Applications,'' Chapter 5 (written by S. Waldron). Springer, 2012. 

\bibitem{CasazzaKutyniok2} P. G. Casazza and G. Kutyniok, ``Frames of Subspaces,'' \textit{Wavelets, Frames and Operator Theory}, Contemp. Math., College Park, MD, 2003, vol. 345, Amer. Math. Soc., Providence, RI pp. 87-113, 2004. 

\bibitem{CasazzaKutyniokLi} P. G. Casazza, G. Kutyniok, and S. Li, ``Fusion frames and distributed processing,'' \textit{Applied and Computational Harmonic Analysis}, Vol. 25, Iss. 1,  pp. 114-132, July 2008. 

\bibitem{ChienWaldron} T. Chien, S. Waldron, ``A classification of the harmonic frames up to unitary equivalence,'' \textit{Appl. Comput. Harmon. Anal}. 30, 307-318, 2011. 


\bibitem{Chu} D. Chu, ``Polyphase codes with good periodic correlation properties (Corresp.),'' \textit{IEEE Trans. Inform. Theory}, vol. 18, no. 4, pp. 531-532, July 1972. 


\bibitem{ConwayHarding} J. H. Conway, R. H. Harding, and N. J. A. Sloane, ``Packing lines, planes, etc.: Packings in Grassmannian spaces,'' \textit{Exp. Math.}, vol. 5, no. 2, pp. 139-159, 1996.

\bibitem{Davenport} M. A. Davenport and M. B. Wakin, ``Analysis of orthogonal matching pursuit Using the restricted isometry property," {\it IEEE Trans. Inform. Theory}, 56(9):4395-4401, 2010.

\bibitem{Delsarte} P. Delsarte, J. M. Goethals and J. J. Seidel, ``Spherical Codes and Designs,'' \textit{Geometriae Dedicata}, Vol. 6, No. 3, 363-388, 1977. 

\bibitem{Ding} C. Ding, ``Complex codebooks from combinatorial designs,'' \textit{IEEE Trans. Inf. Theory}, vol. 52, no. 9, pp. 4229-4235, Sep. 2006. 

\bibitem{DingFeng} C. Ding and T. Feng, ``A Generic Construction of Complex Codebooks Meeting the Welch Bound,'' \textit{IEEE Trans. Inf. Theory}, Vol. 53 ,  Issue 11, 2007. 

\bibitem{DingGolinKlove} C. Ding, M. Golin, and T. Kl\o ve, ``Meeting the Welch and Karystinos-Pados bounds on DS-CDMA binary signature sets,'' \textit{Designs, Codes, Cryptogr.}, vol. 30, pp. 73-84, 2003. 

\bibitem{Donoho} D. L. Donoho and M. Elad, ``Optimally sparse representations in general (non-orthogonal) dictionaries via $\ell_1$ minimization,'' \textit{Proc. Nat. Acad. Sci.} 100, 2197-2202, 2002. 

\bibitem{Huo} D. L. Donoho and X. Huo, ``Uncertainty principles and ideal atomic decompositions,'' {\it IEEE Trans. Inform. Theory} 47, 2845-2862, 2001. 

\bibitem{Eldar} T. C. Eldar and G. D. Forney, Jr., ``Optimal tight frames and quantum measurement,'' \textit{IEEE Trans. Inform. Theory}, 48, 599-610, 2002. 

\bibitem{Fan} J. L. Fan, ``Array codes as Low-Density Parity Check codes,'' \textit{Proc. Int'l. Symp. on Turbo Codes}, Brest, France, 543-546, Sept. 2000. 

\bibitem{FickusMixonTremain} M. Fickus, D. G. Mixon, J. C. Tremain, ``Steiner equiangular tight frames,'' \textit{Linear Algebra and its Applications}, 436 (5), 1014-1027, 2012. 

\bibitem{Gabardo} J. P. Gabardo and D. Han, ``Frame representations for group-like unitary operator systems. \textit{Journal of Operator Theory}, 49, 223-244, 2003. 

\bibitem{Golomb} S. W. Golomb, ``Cyclic Hadamard difference sets---Constructions and applications,'' \textit{Sequences and their Applications}, pp. 39-48, Springer, London, U.K., 1999.  

\bibitem{HayWaldron} N. Hay, S. Waldron, ``On computing all harmonic frames of $n$ vectors in $\C^d$,'' Appl. Comput. Harmon. Anal. 21, 168-181, 2006. 

\bibitem{Pailraj} R. W. Heath, Jr. and A. J.  Paulraj, ``Linear dispersion codes for MIMO systems based on frame theory,'' \textit{IEEE Trans. on Sig. Proc}. Vol. 50, Issue 10, 2429-2441, 2002. 

\bibitem{Bolcskei} R. W. Heath, Jr., H. Bolcskei and A. J. Paulraj, ``Space-time signaling and frame theory,'' \textit{IEEE Proceedings of ICASSP}, 4, 2445-2448, 2001. 

\bibitem{Ipatov} V. P. Ipatov, ``On the Karystinos-Pados bounds and optimal binary DS-CDMA signature ensembles,'' \textit{IEEE Commun. Lett.}, vol. 8, no. 2, pp. 81-83, Feb. 2004. 

\bibitem{Jones} H. F. Jones, ``Groups, Representations, and Physics Second Edition,'' Institute of Physics Publishing, Bristol and Philadelphia, 1998. 

\bibitem{Kahn} F. Kahn, ``LTE for 4G Mobile Broadband,'' Cambridge University Press, New York, 2009. 

\bibitem{Kaira} D. Kalra, ``Complex equiangular cyclic frames and erasures,'' \textit{Linear Algebra Appl.}, 419, 373- 399, 2006. 

\bibitem{KarystinosPados} G. N. Karystinos and D. A. Pados, ``New bounds on the total-squared-correlation and perfect design of DS-CDMA binary signature sets,'' \textit{IEEE Trans. Commun.}, vol. 3, pp. 260-265, 2003.

\bibitem{Khatirinejad} M. Khatirinejad, ``On Weyl-Heisenberg orbits of equiangular lines,'' J. Algebr. Comb. 28, 333- 349, 2008. 

\bibitem{Klappenecher} A. Klappenecher, M. R\"{o}tteler, ``Constructions of Mutually Unbiased Bases,'' \textit{Finite Fields and Applications}, 7th International Conference, Fq7, Toulouse, France, May 5-9, 2003. Revised Papers. 

\bibitem{Kovacevic} J. Kovacevic, A. Chebira, ``Life beyond bases: The advent of frames,'' \textit{IEEE Signal Processing Magazine}, 2007. 

\bibitem{MixonBajwaCalderbank} D. Mixon, W. U. Bajwa, R. Calderbank, ``Frame Coherence and Sparse Signal Processing,'' \textit{IEEE Proc. of ISIT}, 2011. 

\bibitem{RenesBlumeKohoutScottCaves} J. M. Renes, R. Blume-Kohout, A. J. Scott, C. M. Caves, ``Symmetric informationally complete quantum measurements,'' \textit{J. Math. Phys}., 45, 2171-2180, 2004. 

\bibitem{ScottGrassi} A. J. Scott, M. Grassl, ``SIC-POVMs: A new computer study,'' arXiv:0910.5784v2 [quant-ph], 2009. 

\bibitem{Scott} A. J. Scott, ``Tight informationally complete quantum measurements," \textit{J. Phys. A: Math. Gen.} \textbf{39} 13507, 2006. 

\bibitem{Hassibi} A. Shokrollahi, B. Hassibi, B.M. Hochwald and W. Sweldens, Representation theory for high-rate multiple-antenna code design, IEEE Transactions on Information Theory, vol.47, no.6, pages 2335-67, Sept. 2001. 

\bibitem{Slepian} D. Slepian, ``Group Codes for the Gaussian Channel,'' \textit{Bell Sys. Tech. J.}, vol. 47, pp. 575-602, Apr. 1968. 

\bibitem{StrohmerGabor} T. Strohmer. Approximation of dual Gabor frames, window decay, and wireless communications. {\it Appl. Comp. Harm. Anal.}, 11(2):243-262, 2001. 

\bibitem{Heath} T. Strohmer and R. W. Heath Jr., ``Grassmannian frames with appications to coding and communication,'' {\it Appl. Comput. Harmon. Anal.} 14, 257-275, 2003. 

\bibitem{Sustik} M. Sustik, J. A. Tropp, I. S. Dhillon, and R. W. Heath, ``On the existence of equiangular tight frames,'' \textit{Linear Alg. and Applications}, 426 No. 2-3, 619-635, 2007. 

\bibitem{ThillGroupFourierFrames} M. Thill and B. Hassibi, ``Low-Coherence Frames from Group Fourier Matrices,'' Under Preparation. 

\bibitem{ThillAllerton} M. Thill and B. Hassibi, ``Frames, Group Codes, and Subgroups of $(\mathbb{Z}/p\mathbb{Z})^\times$,'' \textit{Proc. of Allerton}, 2012. 

\bibitem{Thill_ICASSP2013} M. Thill and B. Hassibi, ``Frames from Groups: Generalized Bounds and Dihedral Groups,'' \textit{ICASSP}, 2013. 

\bibitem{Thill_ISIT2013} M. Thill and B. Hassibi, ``On frames from abelian group codes,'' \textit{Proceedings of ISIT}, 2013. 

\bibitem{Tropp07} J. Tropp and A. Gilbert, ``Signal recovery from partial information via orthogonal matching pursuit,'' \textit{IEEE Trans. Inform. Theory}, vol. 53, no. 12, pp. 4655-4666, 2007. 

\bibitem{Tropp05} J.A. Tropp, I.S. Dhillon, R. Heath Jr., and T. Strohmer, ``Structured Tight Frames via an Alternating Projection Method,'' \textit{IEEE Trans. Inform. Theory}, vol.51(1):188-209, 2005. 

\bibitem{ValeWaldron3} R. Vale, S. Waldron, ``The symmetry group of a finite frame,'' \textit{Linear Algebra Appl.}, 433, 248- 262, 2010. 

\bibitem{Vale} R. Vale and S. Waldron, ``Tight frames and their symmetries,'' \textit{Constr. Approx.} 21, 83-112, 2005. 

\bibitem{ValeWaldron2} R. Vale, S. Waldron, ``Tight frames generated by finite nonabelian groups,'' \textit{Numer. Algorithms}, 48, 11-27, 2008. 

\bibitem{VasicMilenkovic} B. V. Vasic and O. Milenkovic, ``Combinatorial Constructions of Low-Density Parity-Check Codes for Iterative Decoding,'' \textit{IEEE Trans. Inform. Theory}, 50(6): 1156-1176, 2004. 

\bibitem{Waldron1} S. Waldron, ``An Introduction to Finite Tight Frames,'' Springer, NewYork, 2011. 

\bibitem{Welch} L. Welch, ``Lower bounds on the maximum cross correlation of signals,'' \textit{IEEE Trans. Inform. Theory}, vol. 20, no. 3, pp. 397-399, May 1974. 

\bibitem{Wootters} W. K. Wootters, B. D. Fields, ``Optimal state-determination by mutually unbiased measurements,'' \textit{Ann. Physics}, 191:363-381, 1989. 

\bibitem{Giannakis} P. Xia, S. Zhou, G. B. Giannakis, ``Achieving the Welch bound with difference sets,'' {\it IEEE Trans. Inform. Theory} 51, 1900-1907, 2005. 

}
\end{thebibliography}
\end{document}